\RequirePackage{amsmath}
\documentclass[onecolumn,journal]{IEEEtran} 

\usepackage[T1]{fontenc}

\usepackage[l2tabu,orthodox]{nag}

\usepackage[british]{babel}

\usepackage{microtype}
\usepackage{graphicx}
\usepackage{color}
\usepackage{enumitem}
\usepackage[normalem]{ulem}

\usepackage{float}
\usepackage{array}
\usepackage{tabularx}
\newcolumntype{L}[1]{>{\raggedright\arraybackslash}p{#1}}
\newcolumntype{C}[1]{>{\centering\arraybackslash}p{#1}}
\newcolumntype{R}[1]{>{\raggedleft\arraybackslash}p{#1}}
\usepackage{booktabs}

\newcommand{\comment}[1]{}

\newcommand{\Rnum}[1]{\uppercase\expandafter{\romannumeral #1\relax}}
\newcommand{\rnum}[1]{\lowercase\expandafter{\romannumeral #1\relax}}

\usepackage[
    advantage,
    adversary,
    asymptotics,
    complexity,
    events,
    ff,
    lambda,
    landau,
    logic,
    mm,
    notions,
    operators,
    oracles,
    primitives,
    probability,
    sets]{cryptocode}

\renewcommand{\pcthen}{\highlightkeyword{\,:\ }}
\renewcommand{\pcelse}{\highlightkeyword{else:\ }}
\newcommand{\pcbreak}{\highlightkeyword{break}}

\newcommand{\pcand}{\highlightkeyword{and}}

\usepackage{amssymb}
\usepackage{stmaryrd}

\usepackage{amsthm}
\usepackage{amsfonts}
\usepackage{xspace}

\theoremstyle{plain}
\newtheorem{theorem}{Theorem}
\newtheorem{corollary}[theorem]{Corollary}

\newtheorem{lemma}[theorem]{Lemma}
\newtheorem{proposition}[theorem]{Proposition}

\theoremstyle{definition}
\newtheorem{definition}[theorem]{Definition}

\theoremstyle{remark}

\newtheorem{remark}[theorem]{Remark}

\usepackage{hyperref}
\hypersetup{colorlinks=true,linkcolor={red!90!black},citecolor=darkgray}
\usepackage[capitalize,noabbrev]{cleveref}

\AddToHook{env/lemma/begin}{\crefalias{theorem}{lemma}}
\AddToHook{env/proposition/begin}{\crefalias{theorem}{proposition}}
\AddToHook{env/corollary/begin}{\crefalias{theorem}{corollary}}
\AddToHook{env/definition/begin}{\crefalias{theorem}{definition}}
\AddToHook{env/example/begin}{\crefalias{theorem}{example}}
\AddToHook{env/claim/begin}{\crefalias{theorem}{claim}}
\AddToHook{env/remark/begin}{\crefalias{theorem}{remark}}

\crefname{theorem}{Theorem}{Theorems}
\crefname{lemma}{Lemma}{Lemmas}
\crefname{proposition}{Proposition}{Propositions}
\crefname{corollary}{Corollary}{Corollaries}
\crefname{definition}{Definition}{Definitions}
\crefname{example}{Example}{Examples}
\crefname{claim}{Claim}{Claims}
\crefname{remark}{Remark}{Remarks}

\usepackage[color={.9 .2 .2},final]{attachfile2}

\usepackage{tikz}
\usetikzlibrary{cd,shapes}
\usepackage{pgfplots}
\pgfplotsset{compat=1.18}
\usetikzlibrary{pgfplots.fillbetween}

\usepackage[noadjust]{cite}

\usepackage{orcidlink}

\newcommand{\GL}{\mathcal{GL}}

\newcommand{\Ort}{\mathcal{O}}

\newcommand{\from}{\leftarrow}

\newcommand{\iso}{\cong}
\newcommand{\binomq}{\genfrac{[}{]}{0pt}{}}

\DeclareMathOperator{\Gal}{Gal}
\DeclareMathOperator{\Span}{Span}
\DeclareMathOperator{\diag}{diag}
\DeclareMathOperator{\rank}{rank}
\DeclareMathOperator{\hull}{hull}

\newcommand{\variety}{\ensuremath{\mathcal{V}}}

\newcommand{\Unif}[1]{\ensuremath{\mathcal{U}\left( #1 \right)}}

\newcommand{\code}{\ensuremath{\mathfrak{C}}}

\DeclareMathOperator{\PCE}{PCE}

\DeclareMathOperator{\OSPCE}{(S)PCE}
\DeclareMathOperator{\LCE}{LCE}

\renewcommand{\vec}[1]{\ensuremath{\mathbf{#1}}\xspace}
\newcommand{\mat}[1]{\ensuremath{\mathbf{#1}}\xspace}
\newcommand{\inner}[2]{\ensuremath{\left\langle #1, #2 \right\rangle}}
\newcommand{\transpose}{\mathtt{T}}

\newcommand{\algo}[1]{\ensuremath{\mathsf{#1}}\xspace}
\newcommand{\var}[1]{\ensuremath{\mathsf{#1}}\xspace}

\newcommand{\condset}[2]{\ensuremath{\left\{ #1 \; \middle\vert \; #2 \right\}}}
\newcommand{\Mid}{\; \Vert \;}

\newcommand{\Dual}{\algo{Dual}}
\newcommand{\Hull}{\algo{Hull}}
\newcommand{\Dehull}{\algo{Dehull}}
\newcommand{\SampleLCD}{\algo{SampleLCD}}
\newcommand{\FindPlane}{\algo{FindPlane}}
\newcommand{\FindPlaneH}{\algo{FindPlaneH}}
\newcommand{\SampleSelfOrthogonal}{\algo{SSO}}
\newcommand{\SampleCode}{\algo{SampleCode}}
\newcommand{\SampleHollow}{\algo{SampleHollow}}
\newcommand{\SampleTotal}{\algo{SampleTotal}}
\newcommand{\DualGalois}{\algo{DualGalois}}
\newcommand{\HullGalois}{\algo{HullGalois}}
\newcommand{\DehullGalois}{\algo{DehullGalois}}

\mathchardef\mhyphen="2D

\ifCLASSOPTIONtwocolumn
\NewDocumentCommand \iftwocolumns {+m +m} {#1}
\else
\NewDocumentCommand \iftwocolumns {+m +m} {#2}
\fi

\ifCLASSOPTIONpeerreview
\NewDocumentCommand \ifsupplementary {+m +m} {#2}
\else
\NewDocumentCommand \ifsupplementary {+m +m} {#1}
\fi

\begin{document}

\title{Algorithms for Sampling Self-Orthogonal and Totally Self-Orthogonal Codes in Odd Characteristic}

\author{%
Martin~R.~Albrecht$^{\orcidlink{0000-0003-0093-4578}}$%
\thanks{Albrecht is at King's College London and SandboxAQ.},
Benjamin~Ben{\v c}ina$^{\orcidlink{0009-0005-8891-8389}}$%
\thanks{Ben{\v c}ina is at Royal Holloway, University of London, and Aalto University. Corresponding author: \texttt{benjamin.bencina@aalto.fi}.},
Russell~W.~F.~Lai$^{\orcidlink{0000-0001-9126-1887}}$%
\thanks{Lai is at Aalto University.}
\thanks{The research of Albrecht, Ben{\v c}ina and Lai was supported by UKRI Grant EP/Y02432X/1, by UKRI Grant EP/S021817/1, and by the Research Council of Finland grant 358951 and the Future Makers Program of the Technology Industries of Finland Centennial Foundation via the project SPICED, respectively.}
}

\maketitle

\begin{abstract}
  We give an algorithm that samples uniformly random linear codes of any hull dimension and type over finite fields of odd characteristic, implying the first algorithm for sampling uniformly random self-orthogonal codes of any rate, including self-dual codes. Our algorithm is a re-visitation of the algorithm given by Albrecht, Ben{\v c}ina and Lai (EC'25), using the mass formulae proven by Li, Shi and Ling (IEEE Trans.~Inf.~Theory~71(1)). This allows us to instantiate code-based cryptographic schemes that rely on the hardness of Permutation Code Equivalence (PCE) on `random' self-dual codes for security and that were previously unable to sample them. Building on the observation by Bardet, Otmani and Saeed-Taha (ISIT'19) that Euclidean orthogonality is insufficient when considering PCE over finite extension fields due to non-trivial Galois automorphisms, we study the behaviour of what we call total orthogonality, that is orthogonality with respect to all induced Galois geometries simultaneously. We characterise total orthogonality of vectors and codes, and give an algorithm that samples linear codes with a total hull of a prescribed dimension; a subcode that acts as the hull in all Galois geometries of the ambient space. The algorithm incurs a rate decrease by a factor equal to the extension degree, however, we argue why this may be necessary in the context of PCE and explore how it limits the practicality of our algorithm. We consider the notion of Galois type of a linear code when Galois hulls are symmetric and show that all linear codes have constant Hermitian type.
\end{abstract}

\ifCLASSOPTIONpeerreview
\else
\begin{IEEEkeywords}
  Self-orthogonal codes, self-dual codes, Galois geometries, permutation equivalence over finite extension fields.
\end{IEEEkeywords}
\fi

\section{Introduction}\label{sec:intro}

\IEEEPARstart{O}{ne} of the long-standing hardness assumptions in code-based cryptography states that self-orthogonal codes, and self-dual codes in particular, yield the hardest Permutation Code Equivalence \((\PCE)\) instances. This was first studied in~\cite{ITIT:PetRot97}, the connection between the hardness of \(\PCE\) and the dimension of the hull of the code was made in~\cite{ITIT:Sendrier00}, and the so-called \emph{hull attacks} for non-binary codes appeared in~\cite{PQCRYPTO:SenSim13,ISIT:BarOtmSae19}, having complexity exponential in the hull dimension and polynomial in every other parameter. There were even cryptographic constructions that based their security on the hardness of \(\PCE\) on random self-orthogonal or self-dual codes, most notably the \(\PCE\) regime of the original LESS signature scheme~\cite{PQCRYPTO:BBPS21}, whose \(\LCE\) version has been submitted to NIST for standardisation~\cite{NISTPQC-ADD-R2:LESS24}.

However, the question of how to sample from the distribution of uniformly random self-orthogonal codes with given parameters remained largely unanswered until recently. 
While the authors of~\cite{EC:AlbBenLai25} gave the first known algorithm for sampling uniformly random linear codes over odd prime fields with a desired hull dimension, the nature of their sampling algorithm limited them to codes with hulls of co-dimension at least \(2\).\footnote{This was not a problem for them, as they also needed a lattice problem to be hard simultaneously, which depended on the co-dimension of the hull.} A second attempt was made in their follow-up work~\cite{upke2}, which indirectly gives a sampling algorithm for codes over odd prime fields of any hull dimension, including self-orthogonal codes, but again of rate below \(\frac{k}{2k + 2}\) (up to duality). They also show in~\cite{upke2} that \(\PCE\) over finite extension fields is significantly easier than \(\PCE\) over prime fields when other parameters remain the same -- for an extension field of degree \(d\), the Frobenius structure of the extension field allows the adversary to construct \(d\) \(\PCE\) samples from a single sample, allowing them to use a linearisation algorithm~\cite{AC:BCDSK24} to solve \(\PCE\).

\subsection{Our Contributions}\label{sec:intro:contrib}

In this work we give algorithms for sampling self-orthogonal codes of any rate over a field of odd characteristic. For finite extension fields, we define and explore a more appropriate notion of orthogonality, give the corresponding sampling algorithms, and discuss hard \(\PCE\) instances.

\paragraph*{Sampling self-orthogonal and self-dual codes}
Our algorithm is, essentially, the slightly modified algorithm \(\SampleCode\) from \cite[Def.~11]{EC:AlbBenLai25} for sampling close to the uniform distribution on the set of \([n,k]\)-linear codes over \(\FF_q\) with a prescribed hull dimension \(h\). We provide a formal analysis of this algorithm using the mass formulae from~\cite{ITIT:LiShiLin25} that allow us to prove it is correct, and that its output distribution conditioned on not aborting is exactly (not just close to as claimed in~\cite{EC:AlbBenLai25}) the uniform distribution on the set of \([n,k]\)-linear codes over \(\FF_q\) with a prescribed hull dimension \(h\) and \emph{square class} \(\chi \in \set{\pm 1}\). The parameters of our algorithm have no restriction (bar the obvious conditions on the dimension), in particular implying the first known algorithm that samples uniformly from the set of self-dual codes of any admissible dimension over \(\FF_q\) for any odd prime power \(q\). Our algorithm can thus be used to formally instantiate multiple code-based primitives that use `random self-dual codes', such as the LESS signature scheme~\cite{PQCRYPTO:BBPS21} in the \(\PCE\) regime\footnote{The version submitted to NIST~\cite{NISTPQC-ADD-R2:LESS24} only includes the \(\LCE\) regime.} and the recently proposed SPECK signature scheme~\cite{EPRINT:BBMSSZ25}.

\paragraph*{Total orthogonality over finite extension fields}
While our algorithms for sampling self-orthogonal codes work over any field of odd characteristic, the simple notion of Euclidean orthogonality is insufficient for many applications, such as generating plausibly hard \(\PCE\) instances. As noted briefly in~\cite[Sec.~IV.A]{ISIT:BarOtmSae19}, it is necessary to consider other `inner products' besides the Euclidean form, namely the sesquilinear Galois inner products given by non-trivial automorphisms of the extension field, introduced in~\cite{DCC:FanZha17,DCC:FuLiu22}. Indeed, for each \(\phi_i \in \Gal(\FF_q/\FF_p) \iso \ZZ_d\) for \(q = p^d\) and \(p\) prime we consider the sesquilinear forms \(\inner{\vec{x}}{\vec{y}}_{\phi_i} = \inner{\vec{x}}{\phi_i(\vec{y})}\). Each of these forms induces its own dual and hence its own hull, which could be trivial even for Euclidean self-orthogonal codes. After giving algorithms to compute Galois duals and Galois hulls (these are duals and hulls with respect to Galois inner products), we set out to explore \emph{total orthogonality}, that is orthogonality with respect to all degree-many Galois inner products simultaneously. We show in \cref{thm:orthogonal} that two vectors \(\vec{x},\vec{y}\) (and hence any subcodes) are totally orthogonal over \(\FF_{q}\) if and only if the codes \(\Pi(\vec{x}), \Pi(\vec{y})\) obtained by a naturally defined coefficient projection are orthogonal over \(\FF_p\), ignoring the multiplication structure of \(\FF_q\) and only considering it as an \(\FF_p\)-vector space.

\paragraph*{Sampling totally hollow codes}
We are particularly interested in codes with a large \emph{total hull}, that is the intersection of all Galois hulls, giving plausibly hard \(\PCE\) instances over extension fields. Using our above characterisation of total orthogonality, we give in \cref{thm:algo:sampletotal} an algorithm sampling \([n,k]\)-linear codes over \(\FF_{p^d}\) with total hull dimension \(h\) and a fixed square class for any odd prime \(p\), provided there exists an algorithm sampling \([n, d k]\)-linear codes over \(\FF_p\) with hull dimension \(d h\). Instantiating this with our algorithm for sampling codes of any type, we obtain an algorithm for sampling codes of any total hull dimension with rate at most \(\frac{1}{2 d}\) for odd prime \(p\). This rate loss is motivated by the Frobenius sample amplification technique from~\cite{upke2} -- if the rate is not divided by \(d\), there exists a linearisation algorithm akin to~\cite{AC:BCDSK24} solving \(\PCE\) over \(\FF_{p^d}\). In other words, even if there exists an algorithm for sampling totally self-orthogonal codes (or any lower total hull dimension) that circumvents the rate loss, such codes may not be cryptographically useful as they yield easy \(\PCE\) instances.
Combining our rate bound of \(\frac{1}{2 d}\) with the recently proposed algorithm for \(\PCE\) with small rates~\cite{ITIT:BatMorSan26} yields a limit on the hardness of \(\PCE\) over extension fields -- unless \(d \lesssim \sqrt{\frac{n}{8}}\), there exists an efficient heuristic distinguisher for \(\PCE\) on totally hollow codes of rate \(\frac{1}{2 d}\), which is an upper bound up to duality for codes produced by our algorithm.

\subsection{Open Problems}\label{sec:intro:open}

The main open problem in this area is extending our results to the one remaining case, that is (totally) hollow codes over fields of characteristic \(2\), which per \cref{thm:algo:sampletotal,thm:algo:samplecode} reduces to giving an algorithm that samples binary codes of any hull dimension and `type' (cf.~\cite{ITIT:LiShiLin25}). Another possible future direction is quantifying totally hollow codes, that is how many there are for particular parameters, describing the exact output distribution of our algorithm in \cref{def:algo:sampletotal}, and how totally hollow codes relate to equivalence classes in different variants of code equivalence. The Galois square classes in \cref{sec:total:arith} (and by extension Galois types) need to be further explored as they provide another easy avenue to deciding \(\PCE\) akin to~\cite{localinv} -- an instance pair needs to have matching Galois hull dimensions and Galois square classes in order to be plausibly hard.

\section{Preliminaries}\label{sec:prelim}

We write \([\mat{A} \Mid \mat{B}]\) for \([\mat{A}^\transpose \mid \mat{B}^\transpose]^\transpose\), that is stacking \(\mat{A}\) on \(\mat{B}\). Denote \(\ZZ_n = \ZZ/ n\ZZ\) for \(n \in \ZZ\) and \(\llbracket n, m \rrbracket = [n,m] \cap \ZZ\) for \(n, m \in \ZZ\) with \(n \leq m\). For a vector \(\vec{v}\) we denote by \(\vec{v}[i]\) its \(i\)-th entry and by \(\vec{v}\llbracket i, j\rrbracket\) the vector with \(j-i+1\) entries that are the entries of \(\vec{v}\) at indices \(\llbracket i, j \rrbracket\). The same notation for a matrix \(\mat{A}\) denotes the \(i\)-th row and a range of consecutive rows, respectively. We call \(\adv\) a \((t, \varepsilon)\)-algorithm if it terminates in time \(t\) and has success probability \(\varepsilon\) (otherwise it aborts with \(\bot\)). An algorithm is Probabilistic Polynomial Time (\(\ppt\)), or simply efficient, if \(t\) is a polynomial in its inputs. Logarithms have base \(2\).

\subsection{Finite Fields}\label{sec:prelim:fields}

Let \(\FF_q\) for \(q = p^d\) be a finite field of prime characteristic \(p\) and degree \(d\), and \(\FF_q^\star = \FF_q\setminus\set{0}\) its group of units. It is constructed as \(\FF_p[X]/f(X)\) for an irreducible polynomial \(f\) over \(\FF_p\) of degree \(d\), which always exists but is not necessarily unique. A root \(z \in \FF_q\) of \(f(X)\) when viewed over its splitting field \(\FF_q\) yields a basis for \(\FF_q\) as a \(d\)-dimensional vector space over \(\FF_p\) of the form \(\mathcal{B}_z = \set{1, z,\dots,z^{d-1}}\). The polynomial \(f\) determines how this basis behaves under field multiplication in \(\FF_q\), namely how the element \(z^d\) wraps around.

The Galois group \(\Gal(\FF_q/\FF_p) \iso \ZZ_d\) of automorphisms of \(\FF_q/\FF_p\) is generated by the \(p\)-Frobenius map \(\phi \colon x \mapsto x^p\), that is each automorphism of \(\FF_q\) is of the form \(\phi_i \colon x \mapsto x^{p^i}\) for \(i \in \ZZ_d\). The isomorphism with \(\ZZ_d\) simply maps each \(\phi_i\) to \(i\), thus we sometimes denote \(\phi_{-i} = \phi_{d-i} = \phi_i^{-1}\). We extend each automorphism \(\phi\) to vectors \(\vec{x} \in \FF_q^n\) entry-wise, that is \(\phi(\vec{x}) = [\phi(x_i)]_{i=1}^n\), and likewise to matrices.

Denote by \(\chi_q \colon \FF_q^\star \to \set{\pm 1}\) the quadratic character, which is the multiplicative homomorphism \(x \mapsto x^\frac{q-1}{2}\) that maps quadratic residues to \(1\) and non-residues to \(-1\).

We call the affine conic \(F(x_1, x_2) = a x_{1}^2 + b x_{1} x_{2} + c x_{2}^2 + d x_{1} + e x_{2} + f\) over \(\FF_q\) for odd \(q\) \emph{smooth}, or non-degenerate, if its discriminant \(D = b^2 - 4 a c\) is non-zero. We have the following statement on the number of points.

\begin{proposition}[{\cite{BDR:conic}}]\label{prop:conic}
  Let \(F(x_1,x_2) = a x_1^2 + b x_1x_2 + c x_2^2 + d x_1 + e x_2 + f\) with \(D = b^2 - 4 a c \neq 0\) be a smooth affine conic over \(\FF_q\) for \(q\) odd prime power. Then
  \[
    \left\lvert\variety_F\right\rvert = \begin{cases}
      1       & \text{if } f' = 0     \text{ and } D \in \chi_q^{-1}(-1) \\
      q - 1   & \text{if } f' \neq 0  \text{ and } D \in \chi_q^{-1}(1)  \\
      q + 1   & \text{if } f' \neq 0  \text{ and } D \in \chi_q^{-1}(-1) \\
      2 q - 1 & \text{if } f' = 0     \text{ and } D \in \chi_q^{-1}(1),
    \end{cases}
  \]
  where \(f' = f + \frac{c d^2 - b d e + a e^2}{D}\).
\end{proposition}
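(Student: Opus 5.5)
The plan is to reduce \(F\) to a normal form by an affine change of variables over \(\FF_q\), which is a bijection on \(\FF_q^2\) and so preserves \(\left\lvert\variety_F\right\rvert\). Then I count points on the two standard models of a smooth conic.

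\emph{Step 1 (translation to the centre).} Since \(D \neq 0\), the linear system \(\partial_{x_1}F = 2ax_1 + bx_2 + d = 0\), \(\partial_{x_2}F = bx_1 + 2cx_2 + e = 0\) has determinant \(4ac - b^2 = -D \neq 0\), hence a unique solution \((u,v) \in \FF_q^2\). Here \(q\) odd is used so that \(2\) is invertible. Substituting \(x_1 = X + u\), \(x_2 = Y + v\) kills the linear terms, giving \(F = Q(X,Y) + F(u,v)\) with \(Q = aX^2 + bXY + cY^2\). By Euler's identity for the quadratic part, \(F(u,v) = f + \tfrac12(du + ev)\). Solving for \((u,v)\) by Cramer's rule should give \(F(u,v) = f + \frac{cd^2 - bde + ae^2}{D} = f'\). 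A sanity check with \(x_1^2 - x_2^2 + 2x_1\) gives \(f' = f - 1\), as expected. So \(\left\lvert\variety_F\right\rvert = \#\{Q(X,Y) = -f'\}\).

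\emph{Step 2 (diagonalisation).} If \(a \neq 0\), complete the square: \(Q = a(X + \tfrac{b}{2a}Y)^2 - \tfrac{D}{4a}Y^2\). If \(a = 0 \neq c\), do the same with the roles swapped. If \(a = c = 0\), then \(b \neq 0\) and \(Q = bXY\) is already hyperbolic. In every case \(Q \cong \alpha X^2 + \beta Y^2\) with \(-4\alpha\beta = D\) up to a nonzero square factor. Hence \(D \in \chi_q^{-1}(1)\) if and only if \(-\alpha\beta\) is a square, i.e.\ if and only if \(Q\) is isotropic.

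\emph{Step 3 (counting).}

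If \(D\) is a square, write \(-\beta/\alpha = t^2\). Then \(Q = \alpha(X - tY)(X + tY)\), and the invertible substitution \(U = X - tY\), \(V = X + tY\) turns the equation into \(\alpha UV = -f'\).
- For \(f' = 0\) this is the union of two lines meeting in a point, giving \(2q - 1\) points.
- For \(f' \neq 0\), each \(U \neq 0\) determines \(V\) uniquely, giving \(q - 1\) points.

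If \(D\) is a non-square, then \(-\beta/\alpha = \delta\) is a non-square and \(Q = \alpha(X^2 - \delta Y^2) = \alpha N(X + Y\sqrt{\delta})\), where \(N\) is the norm \(\FF_{q^2}^\star \to \FF_q^\star\).
- For \(f' = 0\), only \((0,0)\) is a solution, since \(N\) vanishes only at \(0\); this gives \(1\) point.
- For \(f' \neq 0\), \(N\) is a surjective homomorphism with kernel of size \((q^2-1)/(q-1) = q+1\). So each nonzero value has exactly \(q + 1\) preimages, giving \(q+1\) points.

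These four counts are exactly the four cases of \cref{prop:conic}.

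I expect the main obstacle to be the bookkeeping in Step 1: verifying that the constant term at the centre is exactly \(f + \frac{cd^2 - bde + ae^2}{D}\) with the stated sign. Getting this sign wrong would swap the \(f' = 0\) and \(f' \neq 0\) cases. I would first do the general Cramer computation and then confirm it against the small example above.
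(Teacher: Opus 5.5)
Your proof is correct and complete. The paper gives no proof of \cref{prop:conic} of its own. It imports the count from \cite{BDR:conic} and uses it only as a black box: existence of a point on the conic in \(\FindPlaneH\), and the dichotomy \(\lvert\variety_F\rvert \in \set{q-1, 2q-1}\) for square \(D\) in \(\SampleSelfOrthogonal\). So there is no internal argument to compare against. What you supply is the standard self-contained derivation:
\begin{itemize}
\item centre the conic, which is possible because \(D \neq 0\) and \(2\) is invertible;
\item reduce to \(Q(X,Y) = -f'\);
\item count on the split model \(\alpha UV\), or on the anisotropic model \(\alpha N(X + Y\sqrt{\delta})\) via the norm map \(\FF_{q^2}^\star \to \FF_q^\star\).
\end{itemize}

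The step you flagged does check out. Cramer's rule gives \(u = (2cd - be)/D\) and \(v = (2ae - bd)/D\), hence \(\tfrac{1}{2}(du + ev) = (cd^2 - bde + ae^2)/D\). So the constant at the centre is exactly \(f'\) with the stated sign.

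Three small details are worth writing out explicitly:
\begin{itemize}
\item \(t \neq 0\) because \(\beta \neq 0\), so the substitution \((X,Y) \mapsto (X - tY, X + tY)\) has determinant \(2t \neq 0\).
\item \(X^2 - \delta Y^2\) really is the norm, because the \(q\)-power Frobenius sends \(\sqrt{\delta} \mapsto -\sqrt{\delta}\) when \(\delta\) is a non-square.
\item In the case \(a = c = 0\), the substitution \(X = U + V\), \(Y = U - V\) yields \(-4\alpha\beta = 4D\). This is covered by your ``up to a nonzero square factor'' clause.
\end{itemize}
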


\subsection{Linear Codes}\label{sec:prelim:codes}

An \([n,k]\)-linear code over \(\FF_q\) is an \(\FF_q\)-vector subspace \(\code \leq \FF_q^n\) of dimension \(k\) and length \(n\). A code is represented by a \(k \times n\) full-rank row-basis matrix \(\mat{A}\) over \(\FF_q\), denoted here as \(\code(\mat{A})\), which is unique only up to multiplication from the left with a basis change matrix \(\mat{U} \in \GL_k(\FF_q)\). We normally take the canonical representative that is row-reduced. Traditionally, a linear code \(\code\) (especially over a prime field) is equipped with the Euclidean inner product defined as the bilinear \(\inner{\vec{x}}{\vec{y}} = \sum_{i=1}^{n} x_i y_i\). Its dual \(\code^\perp\) is defined as its orthogonal complement, that is \(\code^\perp = \condset{\vec{y} \in \FF_q^n}{\forall \vec{x} \in \code \colon \inner{\vec{x}}{\vec{y}} = 0}\). Calling it a complement is misleading, however, since the intersection \(\code \cap \code^\perp\) may be non-trivial. This intersection is a subspace of \(\code\) and is called the \emph{hull} of \(\code\)~\cite{DM:AssKey90}, and has been subject to extensive investigation~\cite{SIAM:Sendrier97,ITIT:Sendrier00,PQCRYPTO:SenSim13,ISIT:BarOtmSae19} due to its important role in determining whether two codes are permutation equivalent. For completeness, we include the algorithms for computing the Euclidean dual and hull in \cref{fig:algo:dual,fig:algo:hull}. A vector \(\vec{x}\) is called self-orthogonal if \(\inner{\vec{x}}{\vec{x}} = 0\), a code is called self-orthogonal if \(\code \subseteq \code^\perp\), and self-dual if \(\code = \code^\perp\). Recall self-dual codes have \(n = 2 k\) and exist if and only if \(\chi_q(-1)^k = 1\) (that is \(q = 1 \pmod{4}\) or \(k\) even). Codes with a trivial hull for which \(\code \cap \code^\perp = \set{0}\) holds, and so \(\code \oplus \code^\perp = \FF_q^n\), are called \emph{Linear Complementary Dual} (LCD) codes~\cite{DM:Massey92}. More generally, a code with hull dimension \(h\) will be called \(h\)-hollow as in~\cite{EC:AlbBenLai25}. There is an efficient algorithm given in~\cite{localinv} for `dehulling' a linear code, that is finding a matrix \(\mat{T} \in \FF_q^{(k-h) \times n}\) called an LCD component that modulo \(\hull(\code)\) spans \(\code/\hull(\code)\). For completeness we include it in \cref{fig:algo:dehull}. The square class \(\chi_\code\) of \(\code\), introduced in~\cite{localinv}, is \(\chi_q(\det(\mat{T}\mat{T}^\transpose))\) for any LCD component of \(\code\) and is, essentially, the discriminant of any Gram matrix of \(\code\) modulo the hull. The pair \((\dim(\hull(\code)), \chi_\code)\) is called its type (cf.~\cite{ITIT:LiShiLin25}). Let \(\mathcal{X}_{n,k,q}\), \(\mathcal{X}_{n,k,q}^h\), and \(\mathcal{X}_{n,k,q}^{\chi,h}\) denote the set of all \([n,k]\)-linear codes over \(\FF_q\), those with respect to hull dimension \(h\), and additionally with respect to square class \(\chi\), respectively. The uniform distribution on the set of generator matrices of codes in \(\mathcal{X}_{n,k,q}\), \(\mathcal{X}_{n,k,q}^h\) and \(\mathcal{X}_{n,k,q}^{\chi,h}\) is denoted by \(\mathsf{U}_{n,k,q}\), \(\dist_{n,k,q}^h\) and \(\dist_{n,k,q}^{\chi,h}\), respectively. All self-orthogonal codes have square class \(1\), and we have \(\chi_{\code^\perp} = \chi_{\code} \chi_q(-1)^h\)~\cite[Prop.~5]{ITIT:LiShiLin25}. The square class and hull dimension can be seen as arithmetic invariants by assigning to a code generator matrix \(\mat{A} \in \FF_q^{k \times n}\) the quadratic space \(V_\mat{A} = (\FF_q^k, \mat{A}\mat{A}^\transpose)\) and computing its Witt decomposition, as done in~\cite{localinv}. Types of codes then correspond (up to dimension parity) to types of finite geometry~\cite[Sec.~III.6]{Artin1957}.

\begin{figure}[t]
  \centering
  \begin{pcvstack}[boxed]
    \procedure[mode=text]{$\Dual(\mat{A})$}{
      \pcif \(\mat{A} = [\cdot] \in \FF_q^{0 \times n}\)\pcthen \pcreturn \(\mat{I}_n\) \\
      \(\mat{A}^\perp \coloneqq\) basis of right kernel of \(\mat{A}\);\;
      \pcreturn \(\mat{A}^\perp\)
    }
    \pcvspace
    \procedure[mode=text]{$\Hull(\mat{A})$}{
      \(\mat{A}^\perp \coloneqq \Dual(\mat{A})\) \\
      \(\mat{H} \coloneqq\) basis of right kernel of \([\mat{A} \Mid \mat{A}^\perp]\);\;
      \pcreturn \(\mat{H}\)
    }
    \pcvspace
    \procedure[mode=text]{$\Dehull(\mat{A})$}{
      \(\mat{H} \coloneqq \Hull(\mat{A})\);\;
      \(\mat{V} \coloneqq\) basis of \(\code(\mat{A}) / \code(\mat{H})\) \\
      \(\mat{T} \coloneqq\) lift \(\mat{V}\) to \(\code(\mat{A})\);\;
      \pcreturn \((\mat{T}, \mat{H})\)
    }
  \end{pcvstack}
  \caption{The algorithms \(\Dual\), \(\Hull\), and \(\Dehull\). Given a code generator matrix \(\mat{A} \in \FF_q^{k \times n}\), they compute the dual, the hull, and a `dehulling' of \(\code(\mat{A})\), respectively. All algorithms are deterministic if they return row-reduced bases.}
  \label{fig:algo:dual}\label{fig:algo:hull}\label{fig:algo:dehull}
\end{figure}

The first progress since~\cite{SIAM:Sendrier97} in obtaining a usable and precise mass formula for linear codes with a particular hull dimension other than for self-orthogonal codes was made in~\cite{ITIT:CMTQ19} for the case of LCD codes.

\begin{proposition}[{\cite[Thm.~25, Thm.~31 \& Cor.~32]{ITIT:CMTQ19}}]\label{prop:count:lcd}
  Let \(n,k,q \in \ZZ\) with \(0 \leq k \leq n\) and \(q\) odd prime power. Denote by \(\mathcal{X}_{n,k,q}^{+1,0}\) (resp.~\(\mathcal{X}_{n,k,q}^{-1,0}\)) the set of all \([n,k]\)-linear LCD codes over \(\FF_q\) with a basis whose Gram matrix is \(\diag(1,\dots,1)\) (resp.~\(\diag(1,\dots,1,s)\) for some quadratic non-residue \(s \in \FF_q^\star\)), and as before by \(\mathcal{X}_{n,k,q}^0\) the set of all \([n,k]\)-linear LCD codes over \(\FF_q\). Then \(\mathcal{X}_{n,k,q}^0 = \mathcal{X}_{n,k,q}^{+1,0} \sqcup \mathcal{X}_{n,k,q}^{-1,0}\), and we have
  \[
    \left\lvert \mathcal{X}_{n,k,q}^{\pm1,0} \right\rvert =
    \begin{cases}
      \frac{1}{2} q^\frac{k \cdot (n-k) - 1}{2} A \binomq{\frac{n}{2}-1}{\frac{k-1}{2}}_{q^2} & \text{if \(k\) odd and \(n\) even}, \\
      \frac{1}{2} q^\frac{k \cdot (n-k)}{2} B_\pm \binomq{\frac{n-1}{2}}{\frac{k-1}{2}}_{q^2} & \text{if \(k\) odd and \(n\) odd}, \\
      \frac{1}{2} q^\frac{k \cdot (n-k)}{2} C_\pm \binomq{\frac{n-1}{2}}{\frac{k}{2}}_{q^2} & \text{if \(k\) even and \(n\) odd}, \\
      \frac{1}{2} q^\frac{k \cdot (n-k)}{2} D_\pm \binomq{\frac{n}{2}}{\frac{k}{2}}_{q^2} & \text{if \(k\) even and \(n\) even},
    \end{cases}
  \]
  where \(A = q^\frac{n}{2} - \chi_q(-1)^\frac{n}{2}\), \(B_\pm = q^\frac{n-k}{2} \pm \chi_q(-1)^\frac{n-k}{2}\), \(C_\pm = q^\frac{k}{2} \pm \chi_q(-1)^\frac{k}{2}\), \(D_\pm = \frac{\left( q^\frac{k}{2} \pm \chi_q(-1)^\frac{k}{2} \right) \cdot \left( q^\frac{n-k}{2} \pm \chi_q(-1)^\frac{n-k}{2} \right)}{\left( q^\frac{n}{2} + \chi_q(-1)^\frac{n}{2} \right)}\), hence
  \[
    \left\lvert \mathcal{X}_{n,k,q}^0 \right\rvert =
    \begin{cases}
      q^\frac{k \cdot (n-k) - 1}{2} A \binomq{\frac{n}{2}-1}{\frac{k-1}{2}}_{q^2} & \text{if \(k\) odd and \(n\) even}, \\
      q^\frac{(k+1) \cdot (n-k)}{2} \binomq{\frac{n-1}{2}}{\frac{k-1}{2}}_{q^2} & \text{if \(k\) odd and \(n\) odd}, \\
      q^\frac{k \cdot (n-k+1)}{2} \binomq{\frac{n-1}{2}}{\frac{k}{2}}_{q^2} & \text{if \(k\) even and \(n\) odd}, \\
      q^\frac{k \cdot (n-k)}{2} \binomq{\frac{n}{2}}{\frac{k}{2}}_{q^2} & \text{if \(k\) even and \(n\) even}.
    \end{cases}
  \]
\end{proposition}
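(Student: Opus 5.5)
The plan is to count LCD codes through their Gram matrices and the orbit–stabiliser theorem for the orthogonal group, rather than through recursions on generator matrices. An \([n,k]\) code \(\code(\mat{A})\) is LCD if and only if \(\mat{A}\mat{A}^\transpose\) is non-singular. Non-degenerate symmetric forms over \(\FF_q\), \(q\) odd, are classified up to congruence by dimension and the square class of the discriminant. So \(\code\) has a basis with Gram matrix \(\diag(1,\dots,1)\) or \(\diag(1,\dots,1,s)\), and these two options are mutually exclusive, since \(\chi_q(\det(\mat{U}\mat{A}\mat{A}^\transpose\mat{U}^\transpose)) = \chi_q(\det(\mat{A}\mat{A}^\transpose))\). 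This gives the disjoint decomposition \(\mathcal{X}_{n,k,q}^0 = \mathcal{X}_{n,k,q}^{+1,0} \sqcup \mathcal{X}_{n,k,q}^{-1,0}\).

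For the counts, I would note that if \(\code\) is LCD then \(\FF_q^n = \code \oplus \code^\perp\) is an orthogonal decomposition. The restriction of the standard form to \(\code\) is non-degenerate of discriminant class \(\chi\), and its restriction to \(\code^\perp\) is then forced to have class \(\chi\cdot\chi_q(\det \mat{I}_n)=\chi\). Witt's extension theorem implies that \(\Ort_n(\FF_q)\) acts transitively on non-degenerate \(k\)-subspaces of a given isometry type. Hence
\[
\left\lvert \mathcal{X}_{n,k,q}^{\chi,0}\right\rvert = \frac{\lvert \Ort_n(\FF_q)\rvert}{\lvert \Ort(W_\chi)\rvert\,\lvert \Ort(W_\chi^\perp)\rvert},
\]
where the stabiliser of \(W_\chi\) is \(\Ort(W_\chi)\times\Ort(W_\chi^\perp)\). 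The next step is to substitute the classical orders. In odd dimension \(2m+1\) the order is \(2q^{m^2}\prod_{i=1}^m(q^{2i}-1)\), independent of the discriminant. In even dimension \(2m\) it is \(2q^{m(m-1)}(q^m-\varepsilon)\prod_{i=1}^{m-1}(q^{2i}-1)\), where \(\varepsilon=\pm1\) records whether the form is hyperbolic. I would determine that from the discriminant: a \(2m\)-dimensional form of discriminant \(\delta\) is hyperbolic if and only if \(\chi_q((-1)^m\delta)=1\). This is exactly where the factors \(\pm\chi_q(-1)^{k/2}\) and \(\pm\chi_q(-1)^{(n-k)/2}\) appear.

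The remaining work is a case split on the parities of \(n\) and \(k\). In each case I would cancel the powers of \(q\) and take the quotient of the products \(\prod(q^{2i}-1)\), which is recognised as the Gaussian binomial in \(q^2\).
\begin{itemize}
\item For \(n\) odd, the ambient group has no \(\varepsilon\), and exactly one of \(k\) and \(n-k\) is even. Its \((q^{m}\pm\ldots)\) factor survives in the denominator and combines with the Gaussian binomial to give the \(B_\pm\) or \(C_\pm\) form.
\item For both \(n\) and \(k\) even, all three \(\varepsilon\)-factors appear, which gives \(D_\pm\). The ambient \(\varepsilon\) is \(\chi_q((-1)^{n/2})\), which accounts for the denominator \(q^{n/2}+\chi_q(-1)^{n/2}\) after a rewrite of the form \(q^{n}-1=(q^{n/2}-\varepsilon)(q^{n/2}+\varepsilon)\).
\item For \(k\) odd and \(n\) even, the only even-dimensional group is the ambient one, so both classes have the same count, with the \(A\) factor.
\end{itemize}
The factor \(\tfrac12\) in each formula should come from the mismatch between the factors of \(2\) in the group orders. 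The totals then follow by adding the \(+\) and \(-\) cases, using \((q^a+\varepsilon)(q^b+\varepsilon')+(q^a-\varepsilon)(q^b-\varepsilon')=2(q^{a+b}+\varepsilon\varepsilon')\), and similarly for the other cases.

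The main obstacle will be the bookkeeping of the sign \(\varepsilon\) in each even-dimensional piece, that is, correctly relating hyperbolicity of \(W_\chi\) and of \(W_\chi^\perp\) to \(\chi\) and \(\chi_q(-1)\). I would settle this first and check it against small examples, such as \(n=2\), \(k=1\), before running the algebraic simplification in all four parity cases.
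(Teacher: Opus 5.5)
Your orbit--stabiliser argument is correct, and it follows essentially the same route as the source the paper relies on. The paper gives no proof of its own and imports the result from \cite{ITIT:CMTQ19}, where the counts are likewise obtained from orbits of the orthogonal group on LCD codes: Witt transitivity, stabiliser \(\Ort(W)\times\Ort(W^\perp)\), and the classical group orders. Your sign rule \(\varepsilon=\chi\,\chi_q(-1)^{m}\) for a \(2m\)-dimensional piece reproduces all four parity cases, including the factor \(\frac12\) and the denominator of \(D_\pm\).

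Two small additions are needed. For \(0<k<n\), state that both square classes actually occur among non-degenerate \(k\)-subspaces, so that each orbit has a base point. The edge cases \(k\in\{0,n\}\) must be checked directly, because the even-dimensional order formula does not apply in dimension \(0\).
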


This was recently extended in~\cite{ITIT:LiShiLin25} to codes of any hull dimension, a concurrent work to~\cite{EC:AlbBenLai25}. It uses a similar technique to our procedure for sampling uniformly random codes with a prescribed hull dimension which we present in \cref{thm:algo:samplecode} to instead count them. Using this, they recursively extend the result of~\cite{ITIT:CMTQ19} into the following general statement.

\begin{proposition}[{\cite[Thm.~5 \& 6]{ITIT:LiShiLin25}}]\label{prop:count:hollow}
  Let \(n,k,h,q \in \ZZ\) with \(0 \leq h \leq k \leq n-h\) and \(q\) odd prime power. Denote by \(\mathcal{X}_{n,k,q}^{+1,h}\) (resp.~\(\mathcal{X}_{n,k,q}^{-1,h}\)) the set of all \([n,k]\)-linear codes over \(\FF_q\) with hull dimension \(h\) and a basis whose Gram matrix is \(\diag(1,\dots,1,0,\dots,0)\) (resp.~\(\diag(1,\dots,1,s,0,\dots,0)\) for some quadratic non-residue \(s \in \FF_q^\star\)), the number of \(0\)s being \(h\), and as before by \(\mathcal{X}_{n,k,q}^{h}\) the set of all \([n,k]\)-linear codes over \(\FF_q\) with hull dimension \(h\). Denote also \(t = k-h\). Then \(\mathcal{X}_{n,k,q}^{h} = \mathcal{X}_{n,k,q}^{+1,h} \sqcup \mathcal{X}_{n,k,q}^{-1,h}\) (where \(\mathcal{X}_{n,k,q}^{-1,k}\) is empty when \(h=k\)), and we have
  \[
    \left\lvert \mathcal{X}_{n,k,q}^{\pm1,h} \right\rvert =
    \begin{cases}
      C_1 \left\lvert \mathcal{X}_{n,t,q}^{\pm1,0} \right\rvert & \text{if \(n-t\) odd}, \\
      C_2 A_{\pm} \left\lvert \mathcal{X}_{n,t,q}^{\pm1,0} \right\rvert & \text{if \(n-t\) even},
    \end{cases}
  \]
  where \(A_\pm = \frac{q^{\frac{n-t}{2}-h} \pm \chi_q(-1)^\frac{n-t}{2}}{q^\frac{n-t}{2} \pm \chi_q(-1)^\frac{n-t}{2}}\) and \(C_i = \prod_{i=1}^{h} \frac{q^{n-t-2i+\ell} - 1}{q^{t} \cdot (q^i - 1)}\) for \(\ell \in \set{1,2}\), hence using \cref{prop:count:lcd} we have
  \[
    \left\lvert \mathcal{X}_{n,k,q}^{h} \right\rvert =
    \begin{cases}
      C_1 q^\frac{t \cdot (n-t) - 1}{2} B_- \binomq{\frac{n}{2}-1}{\frac{t-1}{2}}_{q^2} & \text{if \(t\) odd and \(n\) even}, \\
      C_2 q^{\frac{(t+1) \cdot (n-t)}{2} - h} \binomq{\frac{n-1}{2}}{\frac{t-1}{2}}_{q^2} & \text{if \(t\) odd and \(n\) odd}, \\
      C_1 q^\frac{t \cdot (n-t+1)}{2} \binomq{\frac{n-1}{2}}{\frac{t}{2}}_{q^2} & \text{if \(t\) even and \(n\) odd}, \\
      C_2 q^\frac{t \cdot (n-t)}{2} B_+ \binomq{\frac{n}{2}}{\frac{t}{2}}_{q^2} & \text{if \(t\) even and \(n\) even},
    \end{cases}
  \]
  where \(B_- = q^\frac{n}{2} - \chi_q(-1)^\frac{n}{2}\) and \(B_+ = \frac{q^{\frac{n}{2} - h} + \chi_q(-1)^\frac{n}{2}}{q^\frac{n}{2} + \chi_q(-1)^\frac{n}{2}}\).
\end{proposition}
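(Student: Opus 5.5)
The plan is to count codes of type \((h,\pm1)\) by decomposing each one into its hull and an LCD component, and then to correct for the overcounting. Write \(t = k-h\). Any code \(\code\) with hull dimension \(h\) can be written as \(\code = \mat{T} \oplus \hull(\code)\). Here \(\code(\mat{T})\) is an \([n,t]\) LCD code, for instance an LCD component returned by \(\Dehull\). Its square class is \(\chi_\code\), since the Gram matrix of \(\code\) is block diagonal with the Gram matrix of \(\mat{T}\) and a zero block. Moreover \(\hull(\code)\) is a totally isotropic \(h\)-dimensional subspace of \(\code(\mat{T})^\perp\).

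Conversely, take any \([n,t]\) LCD code \(\code_T\) of square class \(\pm1\) and any totally isotropic \(h\)-dimensional subspace \(H \le \code_T^\perp\). Then \(\code_T \oplus H\) has a Gram matrix that is block diagonal, with the nondegenerate block of \(\code_T\) and a zero block. Its radical is therefore exactly \(H\), so it has hull dimension \(h\) and square class \(\pm1\). This gives a surjection from pairs \((\code_T, H)\) onto \(\mathcal{X}_{n,k,q}^{\pm1,h}\).

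The fibre over a fixed \(\code\) consists of the pairs with \(H = \hull(\code)\) and \(\code_T\) a complement of \(\hull(\code)\) in \(\code\). Every such complement is automatically LCD of the right class, because \(\hull(\code)\) is the radical of \(\code\). There are exactly \(q^{th}\) complements: they correspond to the graphs of linear maps \(\code/\hull(\code) \to \hull(\code)\). Hence
\[
\left\lvert \mathcal{X}_{n,k,q}^{\pm1,h} \right\rvert = q^{-th} \sum_{\code_T \in \mathcal{X}_{n,t,q}^{\pm1,0}} N\!\left(\code_T^\perp, h\right),
\]
where \(N(V,h)\) denotes the number of totally isotropic \(h\)-subspaces of the quadratic space \(V\).

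The next step is to evaluate \(N\) using the classical counts for finite orthogonal geometries. The space \(\code_T^\perp\) is nondegenerate of dimension \(m = n-t\), and its isometry class is determined by \(\code_T\). Indeed, \(\FF_q^n\) carries the identity form and splits orthogonally as \(\code_T \perp \code_T^\perp\). So the discriminant of \(\code_T^\perp\) is the class of \(\pm1\), and its Witt type is fixed by the sign \(\pm\) alone, independently of the particular \(\code_T\). The sum therefore collapses to \(N(m,h,\pm) \cdot \lvert \mathcal{X}_{n,t,q}^{\pm1,0}\rvert\).

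For \(m\) odd, the standard formula is \(N = \prod_{i=1}^h \frac{q^{m-2i+1}-1}{q^i-1}\). Dividing by \(q^{th}\) gives exactly \(C_1\), taking \(\ell = 1\).

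For \(m = 2r\) even, the space is hyperbolic or elliptic with \(\varepsilon = \pm\chi_q(-1)^{r}\). The count is \(\prod_{i=0}^{h-1} \frac{(q^{r-i}-\varepsilon)(q^{r-i-1}+\varepsilon)}{q^{i+1}-1}\). This telescopes to \(\frac{q^{r-h}-\varepsilon}{q^{r}-\varepsilon}\prod_{i=1}^h \frac{q^{m-2i+2}-1}{q^i-1}\). Dividing by \(q^{th}\) gives \(C_2 A_\pm\), taking \(\ell = 2\).

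The second displayed formula then follows by substituting \cref{prop:count:lcd} and summing the two signs. In the odd cases the two signs combine through \(B_+ + B_-\) or \(C_+ + C_-\). In the even cases they combine by simplifying \(A_\pm D_\pm\).

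The main obstacle is the sign bookkeeping in the even case. One must show that the discriminant of \(\code_T^\perp\) is \(\pm1\) and translate this into the correct Witt index \(\varepsilon = \pm\chi_q(-1)^{m/2}\). This is what makes \(A_\pm\) carry the same sign as the square class. I would verify this first on small examples, such as \(n = 2\), \(t = 0\), \(h = 1\), where one can check directly whether isotropic vectors exist.

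The degenerate edge case \(h = k\) also needs separate care. There \(\code_T = 0\) has square class \(+1\), so \(\mathcal{X}^{-1,k}\) is empty, as the statement requires.
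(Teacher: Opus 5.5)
Your decomposition is correct, and it is a genuinely different route from the one the paper relies on. The paper does not prove this statement itself. It cites Li--Shi--Ling, whose argument is recursive in \(h\); one step of it is reproduced in the proof of \cref{lem:hollow:unif}. A code of type \((h+1,\chi)\) is obtained by adjoining to a code of type \((h,\chi)\) a self-orthogonal vector of \(\code^\perp\) outside \(\hull(\code)\). There are \(q^h M_{\chi\chi_q(-1)^h}(n,n-k-h,q)\) such vectors, each target code arises \((q^{h+1}-1)q^k\) times, and \(C_\ell\), \(A_\pm\) come out by telescoping these ratios over \(h\) steps. You instead split off the whole hull at once. The count then reduces to the classical number of totally isotropic \(h\)-subspaces of the nondegenerate \((n-t)\)-dimensional space \(\code_T^\perp\), whose Witt type is fixed by your discriminant argument. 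Your route explains \(C_\ell\) and \(A_\pm\) conceptually. The recursive route is the one that mirrors \(\SampleCode\), and it is what the paper actually needs for uniformity.

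There is, however, a sign error in your telescoping for \(m=2r\) even. Pairing \(q^{r-i}-\varepsilon\) with \(q^{r-i}+\varepsilon\) to form \(q^{2(r-i)}-1\) gives
\[
\prod_{i=0}^{h-1}\left(q^{r-i}-\varepsilon\right)\left(q^{r-i-1}+\varepsilon\right) = \frac{q^{r-h}+\varepsilon}{q^{r}+\varepsilon}\prod_{i=1}^{h}\left(q^{m-2i+2}-1\right),
\]
not \(\frac{q^{r-h}-\varepsilon}{q^{r}-\varepsilon}\prod_{i=1}^{h}(q^{m-2i+2}-1)\). With your (correct) assignment \(\varepsilon=\pm\chi_q(-1)^r\), your version would produce \(A_\mp\) rather than \(A_\pm\). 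The example you propose already detects this. Take \(n=2\), \(t=0\), \(h=1\) and \(\chi_q(-1)=1\): your formula predicts \(0\) isotropic lines in the hyperbolic plane \((\FF_q^2, x^2+y^2)\) instead of \(2\). With the corrected identity you get exactly \(C_2A_\pm\).

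There is also a smaller bookkeeping slip in the second display. When \(t\) and \(n\) are both odd, \(n-t\) is even, so the signs combine through
\[
A_\pm B_\pm = q^{\frac{n-t}{2}-h}\pm\chi_q(-1)^{\frac{n-t}{2}},
\]
not through \(B_++B_-\); this is where the \(-h\) in the exponent comes from. When \(t\) is odd and \(n\) is even, the two LCD counts in \cref{prop:count:lcd} coincide, and you simply get \(C_1\lvert\mathcal{X}_{n,t,q}^{0}\rvert\).
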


\subsection{Sesquilinear Forms}\label{sec:prelim:sesqui}

A map \(\Phi \colon \FF_q^n \times \FF_q^n \to \FF_q\) additive in both components is called a \emph{sesquilinear form} if there exists an automorphism \(\phi \in \Gal(\FF_q/\FF_p)\) such that \(\Phi(a \vec{x}, b \vec{y}) = a \phi(b) \Phi(\vec{x}, \vec{y})\) and further Hermitian if also \(\Phi(\vec{x}, \vec{y}) = \phi(\Phi(\vec{y}, \vec{x}))\) for any \(\vec{x}, \vec{y} \in \FF_q^n\) and \(a, b \in \FF_q\). Orthogonality will be defined analogously to the case of bilinear forms. We thus consider the following.

\begin{definition}[Galois inner products {\cite{DCC:FanZha17}}]\label{def:galois:inner}
  For \(\phi_i \in \Gal(\FF_q/\FF_p)\) define the \emph{\(i\)-th Galois inner product} as
  \[
    \inner{\vec{x}}{\vec{y}}_{\phi_i} = \inner{\vec{x}}{\phi_i(\vec{y})} = \sum_{j=1}^{n} x_j \phi_i(y_j).
  \]
\end{definition}

The basic properties of these forms are immediate.

\begin{proposition}\label{prop:galois:properties}
  Each form \(\inner{\cdot}{\cdot}_{\phi_i}\) for \(i \in \ZZ_d\) is \(\phi_i\)-sesquilinear, \(\FF_q\)-linear in the first argument, \(\FF_p\)-linear in the second argument, and admits \(\inner{\vec{x}}{\vec{y}}_{\phi_i} = \phi_i(\inner{\vec{y}}{\vec{x}}_{\phi_{-i}})\).
  In particular, it is a Hermitian form if and only if either \(i=0\) or \(d\) is even and \(i=\frac{d}{2}\).
\end{proposition}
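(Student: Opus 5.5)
The plan is to verify each property directly from \cref{def:galois:inner}, using only that \(\phi_i\) is a field automorphism fixing \(\FF_p\) pointwise, and that \(\phi_i\phi_j = \phi_{i+j}\) with indices in \(\ZZ_d\).

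\emph{Step 1 (additivity, sesquilinearity and linearity).} For \(a,b\in\FF_q\) we compute
\[
  \inner{a\vec{x}}{b\vec{y}}_{\phi_i} = \sum_j a x_j \phi_i(b y_j) = a\phi_i(b)\sum_j x_j\phi_i(y_j).
\]
Biadditivity follows because \(\phi_i\) is additive. Hence the form is \(\phi_i\)-sesquilinear. Taking \(b=1\) gives \(\FF_q\)-linearity in the first argument. Taking \(b\in\FF_p\), where \(\phi_i(b)=b\), gives \(\FF_p\)-linearity in the second argument.

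\emph{Step 2 (the twisted symmetry).} We have
\[
  \phi_i\bigl(\inner{\vec{y}}{\vec{x}}_{\phi_{-i}}\bigr) = \sum_j \phi_i(y_j)\,\phi_i\phi_{-i}(x_j) = \sum_j x_j\phi_i(y_j) = \inner{\vec{x}}{\vec{y}}_{\phi_i}.
\]

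\emph{Step 3 (the Hermitian characterisation).} By the definition in \cref{sec:prelim:sesqui}, the form is Hermitian if and only if
\[
  \inner{\vec{x}}{\vec{y}}_{\phi_i} = \phi_i\bigl(\inner{\vec{y}}{\vec{x}}_{\phi_i}\bigr) \quad \text{for all } \vec{x},\vec{y}.
\]
Step 2 already gives \(\inner{\vec{x}}{\vec{y}}_{\phi_i} = \phi_i(\inner{\vec{y}}{\vec{x}}_{\phi_{-i}})\). Since \(\phi_i\) is injective, the Hermitian condition is therefore equivalent to
\[
  \inner{\vec{y}}{\vec{x}}_{\phi_{-i}} = \inner{\vec{y}}{\vec{x}}_{\phi_i} \quad \text{for all } \vec{x},\vec{y}.
\]

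For sufficiency: if \(i=-i\) in \(\ZZ_d\), that is \(i=0\) or (\(d\) even and \(i=d/2\)), the two forms coincide and the form is Hermitian.

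For necessity, this is the only step needing care. Take \(n\ge1\), \(\vec{y}=\vec{e}_1\) and \(\vec{x}=x\vec{e}_1\). The condition becomes \(\phi_{-i}(x)=\phi_i(x)\) for all \(x\in\FF_q\), so \(\phi_{-i}=\phi_i\) as automorphisms. Since \(\Gal(\FF_q/\FF_p)\iso\ZZ_d\) acts faithfully, this forces \(2i\equiv 0 \pmod d\). That congruence gives exactly \(i=0\), or \(d\) even and \(i=d/2\).

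I expect no real obstacle. The only points to state explicitly are the faithfulness of the Galois action and the implicit assumption \(n\geq 1\).
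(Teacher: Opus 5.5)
Your proposal is correct and matches the paper, which declares these properties immediate from \cref{def:galois:inner} and gives no further argument. Your direct verification is exactly the intended check, including the necessity step where \(\vec{y} = \vec{e}_1\) and \(\vec{x} = x\vec{e}_1\) force \(\phi_i = \phi_{-i}\) and hence \(2i \equiv 0 \pmod{d}\); the one implicit point, that the automorphism in the Hermitian condition must be \(\phi_i\) itself, is harmless because a nonzero sesquilinear form (here \(n \geq 1\)) determines its twisting automorphism uniquely.
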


Due to the asymmetric nature of Galois inner products, they are sometimes defined as left and right Galois inner product, depending on which component of the inner product is twisted (ours being right). However, as we will only be interested in the orthogonality relation induced by the inner products and we have that \(\inner{\phi_i(\vec{x})}{\vec{y}} = \phi_i(\inner{\vec{x}}{\phi_{-i}(\vec{y})})\) for all \(\vec{x},\vec{y} \in \FF_q^n\) and \(i \in \ZZ_d\), we without loss of generality only focus on right Galois inner products.

\subsection{The Moore Matrix}\label{sec:prelim:matrices}

We require the following special family of matrices.

\begin{definition}[Moore matrix and determinant {\cite{BAMS:Moore96}}]\label{def:moore}
  Let \(q = p^d\) and \(n,m \in \NN\). For \(a_1,\dots,a_n \in \FF_q\), the \emph{Moore matrix} is defined as
  \begin{align*}
    \mat{M}_{m}(a_1,\dots,a_n)
    &=
    \begin{bmatrix}
      a_1 & a_1^{p} & \dots & a_1^{p^{m-1}} \\
      a_2 & a_2^{p} & \dots & a_2^{p^{m-1}} \\
      \vdots & \vdots & \ddots & \vdots \\
      a_n & a_n^{p} & \dots & a_n^{p^{m-1}}
    \end{bmatrix} \iftwocolumns{\\}{}
    \iftwocolumns{&}{}=
    \begin{bmatrix}
      a_1 & \phi_1(a_1) & \dots & \phi_{m-1}(a_1) \\
      a_2 & \phi_1(a_2) & \dots & \phi_{m-1}(a_2) \\
      \vdots & \vdots & \ddots & \vdots \\
      a_n & \phi_1(a_n) & \dots & \phi_{m-1}(a_n) \\
    \end{bmatrix}.
  \end{align*}
  When \(n=m \geq 2\) and the Moore matrix is square, the \emph{Moore determinant} is equal to
  \begin{align*}
    &\det(\mat{M}_n(a_1,\dots,a_n)) \iftwocolumns{\\}{}
    \iftwocolumns{&}{}= \prod_{i=1}^{n} \prod_{c_1,\dots,c_{i-1} \in \FF_p} (c_1 a_1 + \cdots + c_{i-1} a_{i-1} + a_i).
  \end{align*}
  The Moore determinant is non-zero if and only if \(a_1,\dots,a_n \in \FF_q\) are \(\FF_p\)-linearly independent.
\end{definition}

\section{Sampling Codes from a Type, Uniformly}\label{sec:hollow}

\subsection{Additional Geometry over Finite Fields}

We consider two geometric objects over finite fields that we will use when constructing our sampling algorithm for uniformly random linear codes with a prescribed type. We first derive a simple lower bound on the probability that a linear code over \(\FF_q\) is LCD that depends only on the field \(\FF_q\), then interpret the set of self-orthogonal vectors of a linear code as a variety over \(\FF_q\) and consider when a linear code contains non-degenerate planes.

\subsubsection{The LCD Factor}

\begin{lemma}\label{lem:prob-hull}
  Let \(n,k,q \in \ZZ\) with \(0 \leq k \leq n\) and \(q\) odd prime power. The probability that a uniformly random \([n,k]\)-linear code over \(\FF_q\) has a trivial hull is bounded from below by \(1 - \frac{1}{q-1}\). In particular, this is always at least \(\frac{1}{2}\).
\end{lemma}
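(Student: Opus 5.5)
We bound the probability of a non-trivial hull directly with a first-moment (union bound) argument over isotropic lines, rather than by comparing the mass formula of \cref{prop:count:lcd} with the total number \(\binomq{n}{k}_q\) of codes. Let \(\code\) be uniform in \(\mathcal{X}_{n,k,q}\). The cases \(k=0\) and \(k=n\) are trivial, since the hull is \(\set{0}\) there; so assume \(1 \leq k \leq n-1\). The hull is non-trivial if and only if there is a nonzero \(\vec{x}\) with \(\vec{x} \in \code\) and \(\code \subseteq \vec{x}^\perp\). Such an \(\vec{x}\) is necessarily self-orthogonal, i.e.\ \(\inner{\vec{x}}{\vec{x}} = 0\). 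Both conditions depend only on the line \(\Span(\vec{x})\). Hence
\[
  \Pr[\hull(\code) \neq \set{0}] \leq \sum_{L} \Pr[L \subseteq \code \subseteq L^\perp],
\]
where \(L\) ranges over the isotropic lines of \(\FF_q^n\).

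For a fixed isotropic line \(L\), we have \(L \subseteq L^\perp\), and \(L^\perp\) has dimension \(n-1\). The codes with \(L \subseteq \code \subseteq L^\perp\) are in bijection with the \((k-1)\)-dimensional subspaces of \(L^\perp/L \iso \FF_q^{n-2}\). So the probability equals
\[
  \binomq{n-2}{k-1}_q \Big/ \binomq{n}{k}_q = \frac{(q^k-1)(q^{n-k}-1)}{(q^n-1)(q^{n-1}-1)}.
\]
Next we count isotropic vectors, using the classical count of zeros of the non-degenerate quadratic form \(\sum x_i^2\). The number of nonzero isotropic vectors is exactly \(q^{n-1}-1\) for \(n\) odd. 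For \(n\) even it is \(q^{n-1}-1+\eta(q-1)q^{\frac{n}{2}-1}\) with \(\eta = \chi_q((-1)^{n/2}) \in \set{\pm1}\). Dividing by \(q-1\) gives the number of isotropic lines, which is at most \(\frac{q^{n-1}-1}{q-1} + q^{\frac n2 -1}\).

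Multiplying the line count by the per-line probability, the main term becomes
\[
  \frac{(q^k-1)(q^{n-k}-1)}{(q-1)(q^n-1)}.
\]
This is at most \(\frac{1}{q-1}\) because \((q^k-1)(q^{n-k}-1) = q^n - q^k - q^{n-k} + 1 \leq q^n - 1\).

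The main obstacle is the error term \(q^{\frac n2-1}\) for even \(n\) when \(\eta = 1\). The slack left in the main inequality is
\[
  \frac{q^k+q^{n-k}-2}{(q-1)(q^n-1)}.
\]
We need to show that this absorbs
\[
  q^{\frac n2-1}\,\frac{(q^k-1)(q^{n-k}-1)}{(q^n-1)(q^{n-1}-1)}.
\]
Clearing the common factor \(q^n-1\), this reduces to
\[
  (q-1)\,q^{\frac n2-1}(q^k-1)(q^{n-k}-1) \leq (q^k+q^{n-k}-2)(q^{n-1}-1).
\]
We expect this to follow by comparing leading powers: the left side is about \(q^{\frac{3n}{2}-1}\) and the right side is at least about \(q^{n-1+\max(k,n-k)} \geq q^{\frac{3n}{2}-1}\). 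The small cases (notably \(k = n/2\)) must then be checked by hand, possibly after sharpening the count of isotropic lines.

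If this elementary route proves too tight, the fallback is to use \cref{prop:count:lcd}: divide \(\lvert\mathcal{X}_{n,k,q}^0\rvert\) by \(\binomq{n}{k}_q\) and bound the resulting product of Gaussian-binomial ratios term by term. Finally, since \(q \geq 3\), we have \(1 - \frac{1}{q-1} \geq \frac12\), which gives the last claim.
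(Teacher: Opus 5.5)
\paragraph*{Where the union bound fails}
Your reduction is correct as far as it goes. The hull is non-trivial if and only if some isotropic line \(L\) satisfies \(L \subseteq \code \subseteq L^\perp\). The per-line probability and the line count are right. Together they already prove the bound for \(n\) odd, and for \(n\) even with \(\eta=-1\).

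The gap is the case \(\eta=1\), and it is not a matter of small cases. Your leading-order comparison drops a factor of \(q-1\): the left side of your final inequality is about \((q-1)q^{3n/2-1}\), not \(q^{3n/2-1}\). At \(k=n/2\), after dividing by \(q^{n/2}-1\), the inequality becomes \(q^{n-1}(q-3) \leq q^{n/2-1}(q-1)-2\). This fails for every \(q \geq 5\) and every even \(n\).

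Concretely, take \(n=4\), \(k=2\), \(q=5\). Your union bound equals \(36 \cdot \frac{24^2}{624 \cdot 124} = \frac{216}{806} \approx 0.268 > \frac14\). By \cref{prop:count:lcd}, the true probability of a non-trivial hull is only \(\frac{806-650}{806} \approx 0.194\). The excess comes from codes with hull dimension at least \(2\): here each of the \(12\) self-dual codes is counted \(q+1=6\) times. Since your line count is exact when \(\eta=1\), no sharpening of it can help.

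What is missing is an exact count of codes rather than of isotropic lines, and that is the paper's route. One divides \(\lvert\mathcal{X}_{n,k,q}^0\rvert\) from \cref{prop:count:lcd} by \(\binomq{n}{k}_q\) and cancels the powers of \(q\) and the factors \(1-q^{-a}\) with \(a\) even. The remaining product of factors \(1-q^{-a}\) with \(a\) odd is then bounded by the Weierstrass product inequality, with \(k\) odd, \(n\) even treated separately. You only name this fallback, and all the work is in it.

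\paragraph*{A counterexample to the statement}
Your computation also exposes a problem with the statement itself. Take \(n=2\), \(k=1\), \(q \equiv 1 \pmod 4\). Hulls then have dimension at most \(1\), so your union bound is an equality. The probability of a non-trivial hull is exactly \(\frac{2}{q+1}\), which exceeds \(\frac{1}{q-1}\) for \(q \geq 5\). For instance, over \(\FF_5\) the lines spanned by \((1,2)\) and \((1,3)\) are isotropic. So only \(4\) of the \(6\) codes are LCD, giving \(\frac23 < \frac34\).

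The paper's proof slips exactly here. In the case \(k\) odd, \(n\) even, the odd-exponent product in the numerator should run over exponents \(1,3,\dots,k\). The displayed formula therefore omits the factor \(1-q^{-k}\), and for \(N=1\) the correct value is \(\frac{q-1}{q+1}\), not \(\frac{q}{q+1}\). With the corrected formula, the paper's estimate still goes through for \(N \geq 2\).

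So the bound \(1-\frac{1}{q-1}\) holds for all parameters except \((n,k)=(2,1)\) with \(q \equiv 1 \pmod 4\). The weaker bound \(1-\frac{2}{q+1}\), and hence \(\frac12\), holds in all cases. No completion of your argument, or of the fallback, can prove the first claim as stated; only the ``at least \(\frac12\)'' conclusion survives.
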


\begin{proof}
  The codes \(\set{\vec{0}^n}\) and \(\FF_q^n\) are always LCD, hence we assume \(0 < k < n\). Since \(\code\) is LCD if and only if \(\code^\perp\) is LCD, we may further assume \(0 < k \leq \frac{n}{2}\). We use the mass formula from~\cite[Thm.~31 \& Cor.~32]{ITIT:CMTQ19}. Denote the probability in question by \(p_{n,k,q}^0 = \frac{\mathcal{X}_{n,k,q}^0}{\mathcal{X}_{n,k,q}}\). We use the standard fact that \(\binomq{n}{k}_{q} = q^{k \cdot (n-k)} \prod_{i=1}^{k} \frac{1 - q^{-(n-k+i)}}{1 - q^{-i}}\) and the analogous formula for \(\binomq{n}{k}_{q^2}\). We will bound the denominators simply using \(1 - q^{-m} \leq 1\) for \(m \in \NN\) and the numerators by observing that for any \(s \geq 1\) we have
  \begin{align*}
    \prod_{i=1}^{s} (1 - q^{-i})
    &\geq \prod_{i=1}^{\infty} (1 - q^{-i})
    \overset{\bullet}{\geq} 1 - \sum_{i=1}^{\infty} q^{-i} \iftwocolumns{\\}{}
    \iftwocolumns{&}{}= 1 - \frac{q^{-1}}{1 - q^{-1}}
    = 1 - \frac{1}{q-1}
  \end{align*}
  by the Weierstrass product inequality. We need to consider the four cases of \cref{prop:count:lcd}.

  \paragraph*{Case \(k\) odd and \(n\) odd}
  We write \(N = \frac{n - 1}{2}\) and \(K = \frac{k - 1}{2}\). \cref{prop:count:lcd} gives \(\lvert \mathcal{X}_{n,k,q}^0 \rvert = q^\frac{(k+1) \cdot (n-k)}{2} \binomq{N}{K}_{q^2}\) and we have \(\binomq{N}{K}_{q^2} = q^{2 K \cdot (N - K)} \prod_{i=1}^{K} \frac{1 - q^{-2 \cdot (N - K + i)}}{1 - q^{-2 i}}\). Since \(\frac{(k+1) \cdot (n-k)}{2} + 2 K \cdot (N-K) = k \cdot (n-k)\), the positive power of \(q\) cancels in \(p_{n,k,q}^0\), and so do the factors \(1 - q^{-a}\) for even \(a\), so we get
  \begin{align*}
    p_{n,k,q}^0
    &= \prod_{i=1}^{K} \frac{1 - q^{-(2 i - 1)}}{1 - q^{-(n - k + 2 i - 1)}} \cdot \frac{1 - q^{-(2 K + 1)}}{1 - q^{-(n - k + 2 K + 1)}} \iftwocolumns{\\}{}
    \iftwocolumns{&}{}\overset{\bullet}{\geq} 1 - \frac{1}{q-1},
  \end{align*}
  where for \(\bullet\) we bound the numerators and denominators as discussed above.

  \paragraph*{Case \(k\) even and \(n\) odd}
  We write \(N = \frac{n-1}{2}\) and \(K = \frac{k}{2}\). \cref{prop:count:lcd} gives \(\lvert \mathcal{X}_{n,k,q}^0 \rvert = q^\frac{k \cdot (n-k+1)}{2} \binomq{N}{K}_{q^2}\) and we have \(\binomq{N}{K}_{q^2} = q^{2 K \cdot (N - K)} \prod_{i=1}^{K} \frac{1 - q^{-2 \cdot (N - K + i)}}{1 - q^{-2 i}}\). Since \(\frac{k \cdot (n-k+1)}{2} + 2 K \cdot (N-K) = k \cdot (n-k)\), the positive power of \(q\) cancels in \(p_{n,k,q}^0\), and so do the factors \(1 - q^{-a}\) for even \(a\), so we get
  \[
    p_{n,k,q}^0
    = \prod_{i=1}^{K} \frac{1 - q^{-(2 i - 1)}}{1 - q^{-(n - k + 2 i)}}
    \overset{\bullet}{\geq} 1 - \frac{1}{q-1},
  \]
  where for \(\bullet\) we bound the numerators and denominators as discussed above.

  \paragraph*{Case \(k\) even and \(n\) even}
  We write \(N = \frac{n}{2}\) and \(K = \frac{k}{2}\). \cref{prop:count:lcd} gives \(\lvert \mathcal{X}_{n,k,q}^0 \rvert = q^\frac{k \cdot (n-k)}{2} \binomq{N}{K}_{q^2}\) and we have \(\binomq{N}{K}_{q^2} = q^{2 K \cdot (N - K)} \prod_{i=1}^{K} \frac{1 - q^{-2 \cdot (N - K + i)}}{1 - q^{-2 i}}\). The positive power of \(q\) cancels in \(p_{n,k,q}^0\), and so do the factors \(1 - q^{-a}\) for even \(a\), so we get
  \[
    p_{n,k,q}^0
    = \prod_{i=1}^{K} \frac{1 - q^{-(2 i - 1)}}{1 - q^{-(n - k + 2 i - 1)}}
    \overset{\bullet}{\geq} 1 - \frac{1}{q-1},
  \]
  where for \(\bullet\) we bound the numerators and denominators as discussed above.

  \paragraph*{Case \(k\) odd and \(n\) even}
  This is the only anomalous case, as we have the additional factor \(A = q^\frac{n}{2} - \chi_q(-1)^\frac{n}{2}\). We write \(N = \frac{n}{2}\) and \(K = \frac{k-1}{2}\). \cref{prop:count:lcd} gives \(\lvert \mathcal{X}_{n,k,q}^0 \rvert = q^\frac{k \cdot (n-k) - 1}{2} \cdot \left(q^N - \chi_q(-1)^N\right) \cdot \binomq{N-1}{K}_{q^2}\) and we have \(\binomq{N-1}{K}_{q^2} = q^{2 K \cdot (N - 1 - K)} \prod_{i=1}^{K} \frac{1 - q^{-2 \cdot (N - 1 - K + i)}}{1 - q^{-2 i}}\). We use \(q^N - \chi_q(-1)^N = q^N \cdot (1 - \chi_q(-1)^N \cdot q^{-N})\) to get
  \[
    p_{n,k,q}^0
    = \frac{1 - \chi_q(-1)^N q^{-N}}{1 - q^{-n}} \prod_{i=1}^{K} \frac{1 - q^{-(2 i - 1)}}{1 - q^{-(n - k + 2 i)}}.
  \]
  When \(\chi_q(-1)^N = -1\), we have \(\frac{1 + q^{-N}}{1 - q^{-n}} = \frac{1}{1 - q^{-N}} \geq 1\) and bound as in the other cases. Otherwise \(\chi_q(-1)^N = 1\) and we have for \(N = 1\) that \(p_{2,1,q}^0 = \frac{1 - q^{-1}}{1 - q^{-2}} = \frac{q}{q+1} \geq 1 - \frac{1}{q-1}\). When \(N \geq 2\), we have \(\frac{1 - q^{-N}}{1 - q^{-n}} = \frac{1}{1 + q^{-N}} \geq \frac{1}{1 + q^{-2}}\). The product of odd negative powers in the numerator can be better bounded by \(1 - \frac{q}{q^2 - 1}\), and we finally get
  \begin{align*}
    \frac{1}{1 + q^{-2}} \cdot \left( 1 - \frac{q}{q^2 - 1} \right)
    &= \frac{q^2}{q^2 + 1} \cdot \left( 1 - \frac{q}{q^2 - 1} \right) \iftwocolumns{\\}{}
    \iftwocolumns{&}{}\geq 1 - \frac{1}{q-1}. \qedhere
  \end{align*}
\end{proof}

Since the above probability bound will arise many times during our algorithmic treatment and depends only on the underlying field, we introduce notation for it.

\begin{definition}[LCD factor]\label{def:lcd-factor}
  For a finite field \(\FF_q\) denote its \emph{LCD factor} by \(\gamma_q = 1 - \frac{1}{q-1}\).
\end{definition}

By \cref{lem:prob-hull}, the LCD factor of a field \(\FF_q\) is thus a rough but useful lower bound on the probability that a uniformly random full-rank matrix over \(\FF_q\) of appropriate dimensions is \(0\)-hollow. In other words, we have the following.

\begin{corollary}\label{cor:prob-hull}
  Let \(n,k,q \in \ZZ\) with \(0 \leq k \leq n\) and \(q\) prime power. Then \(\probsub{\mat{A} \sample \mathsf{U}_{n,k,q}}{\det(\mat{A}\mat{A}^\transpose) \neq 0} \geq \gamma_q\).
\end{corollary}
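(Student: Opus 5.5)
The corollary should follow from \cref{lem:prob-hull} by identifying the event \(\det(\mat{A}\mat{A}^\transpose) \neq 0\) with the event that \(\code(\mat{A})\) is LCD. The first step is this identification. For a full-rank \(\mat{A} \in \FF_q^{k \times n}\), a vector \(\vec{x} = \vec{u}\mat{A}\) of \(\code(\mat{A})\) lies in \(\code(\mat{A})^\perp\) if and only if \(\mat{A}\mat{A}^\transpose \vec{u}^\transpose = \vec{0}\), using \(\mat{A}\vec{x}^\transpose = \mat{A}\mat{A}^\transpose\vec{u}^\transpose\). Since \(\mat{A}\) has full row rank, the map \(\vec{u} \mapsto \vec{u}\mat{A}\) is injective. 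Hence \(\dim \hull(\code(\mat{A})) = k - \rank(\mat{A}\mat{A}^\transpose)\). In particular, \(\code(\mat{A})\) is \(0\)-hollow if and only if the Gram matrix \(\mat{A}\mat{A}^\transpose\) is invertible. For \(k = 0\) the determinant of the empty matrix is \(1\), consistent with \(\set{\vec{0}^n}\) being LCD.

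The second step transfers the probability. By definition, \(\mathsf{U}_{n,k,q}\) is uniform on full-rank generator matrices of codes in \(\mathcal{X}_{n,k,q}\). Every \([n,k]\)-code has exactly \(\lvert \GL_k(\FF_q) \rvert\) generator matrices, namely the orbit of one basis under left multiplication. Therefore the pushforward of \(\mathsf{U}_{n,k,q}\) under \(\mat{A} \mapsto \code(\mat{A})\) is the uniform distribution on \(\mathcal{X}_{n,k,q}\). The same holds if one takes \(\mathsf{U}_{n,k,q}\) to be uniform over canonical row-reduced representatives, since each code has exactly one such representative. The event in question depends only on the code, because \(\det(\mat{U}\mat{A}\mat{A}^\transpose\mat{U}^\transpose) = \det(\mat{U})^2\det(\mat{A}\mat{A}^\transpose)\). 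So the probability equals \(\lvert\mathcal{X}_{n,k,q}^0\rvert / \lvert\mathcal{X}_{n,k,q}\rvert\), which \cref{lem:prob-hull} bounds below by \(1 - \frac{1}{q-1} = \gamma_q\), as in \cref{def:lcd-factor}.

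The only delicate point is that the corollary is stated for an arbitrary prime power \(q\), whereas \cref{lem:prob-hull} assumes \(q\) odd. For even \(q \geq 4\), I would either restrict the statement to odd \(q\), which is the setting used throughout this section, or invoke the analogous characteristic-\(2\) LCD mass formula from~\cite{ITIT:CMTQ19} with the same Weierstrass-product bounding. For \(q = 2\) we get \(\gamma_2 = 0\), so the bound is trivial. I expect settling this case distinction to be the main (if minor) obstacle; everything else is the two routine identifications above.
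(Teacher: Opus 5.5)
Your proposal is correct and is essentially the paper's own argument: the paper gives no separate proof and treats the corollary as \cref{lem:prob-hull} restated. You spell out the two routine identifications it leaves implicit: $\code(\mat{A})$ is LCD iff $\mat{A}\mat{A}^\transpose$ is invertible, and the uniform distribution on generator matrices pushes forward to the uniform distribution on codes. Your caveat about even $q$ is also right: \cref{lem:prob-hull} and the mass formula behind it cover only odd $q$, so the paper's argument likewise justifies the statement only for odd $q$, the only case used (e.g.\ in \cref{lem:algo:samplelcd}), and restricting the hypothesis to odd $q$ is the appropriate fix.
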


\subsubsection{The Variety of Self-Orthogonal Vectors}

While the hull is the largest self-orthogonal subcode of a linear code, the set of self-orthogonal vectors without any linear structure is much greater in size and has the structure of a variety in the coefficient vector space \(\FF_q^k\).

\begin{proposition}\label{prop:so-variety}
  Let \(n,k,q \in \ZZ\) with \(0 < k \leq n\) and \(q\) prime power. Given a generator matrix \(\mat{A}\) for an \([n,k]\)-linear code \(\code\) over \(\FF_q\), a vector \(\vec{v} = \mat{A}^\transpose \vec{y}\) is self-orthogonal in \(\code\) if and only if \(\vec{y}\) is a root of the polynomial \(f_\mat{A}(\vec{x}) = \vec{x}^\transpose \mat{A} \mat{A}^\transpose \vec{x} \in \FF_q[X_1,\dots,X_k]\) where \(\vec{x} = [X_1,\dots,X_k]^\transpose\) is a vector of \(k\) variables over \(\FF_q\).
\end{proposition}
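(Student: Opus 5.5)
This is a direct unfolding of the definitions, and I would carry it out in three short steps.

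First, I fix the parametrisation of the code. Since \(\mat{A}\) is a \(k \times n\) full-rank generator matrix, every codeword of \(\code = \code(\mat{A})\) is a linear combination of the rows of \(\mat{A}\). So every \(\vec{v} \in \code\) can be written as \(\vec{v} = \mat{A}^\transpose \vec{y}\) for a coefficient vector \(\vec{y} \in \FF_q^k\). Because \(\mat{A}\) has full row rank, \(\mat{A}^\transpose\) is injective, so this \(\vec{y}\) is unique. This makes the map \(\vec{y} \mapsto \mat{A}^\transpose \vec{y}\) a bijection \(\FF_q^k \to \code\). That is what lets us speak of ``the'' root \(\vec{y}\) corresponding to \(\vec{v}\).

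Second, I compute the self inner product in these coordinates:
\[
  \inner{\vec{v}}{\vec{v}} = \vec{v}^\transpose \vec{v} = (\mat{A}^\transpose \vec{y})^\transpose (\mat{A}^\transpose \vec{y}) = \vec{y}^\transpose \mat{A} \mat{A}^\transpose \vec{y}.
\]
The right-hand side is exactly the evaluation \(f_\mat{A}(\vec{y})\) of the quadratic form \(f_\mat{A}(\vec{x}) = \vec{x}^\transpose \mat{A}\mat{A}^\transpose \vec{x} \in \FF_q[X_1,\dots,X_k]\). Here I use that substituting field elements into a polynomial commutes with the matrix operations, since \(f_\mat{A}\) expands to \(\sum_{i,j} (\mat{A}\mat{A}^\transpose)_{ij} X_i X_j\).

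Third, I conclude. By definition, \(\vec{v}\) is self-orthogonal if and only if \(\inner{\vec{v}}{\vec{v}} = 0\). By the second step this holds if and only if \(f_\mat{A}(\vec{y}) = 0\), that is, if and only if \(\vec{y}\) is a root of \(f_\mat{A}\).

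I expect no real obstacle. The only point needing care is well-definedness: the statement should be read with \(\vec{y}\) as the unique coefficient vector of \(\vec{v}\), which the full-rank assumption on \(\mat{A}\) guarantees. I would also remark that the hypothesis \(k > 0\) only ensures the polynomial ring has at least one variable, and that no restriction on the characteristic is needed.
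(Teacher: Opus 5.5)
Your proposal is correct and matches the intended argument. The paper states this proposition without proof, treating it as immediate from the definitions, and the identity \(\inner{\vec{v}}{\vec{v}} = \vec{y}^\transpose \mat{A}\mat{A}^\transpose \vec{y} = f_\mat{A}(\vec{y})\) is all that is needed. Your full-rank and uniqueness remark is harmless but not required, since the equivalence holds pointwise for every \(\vec{y} \in \FF_q^k\).
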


The variety \(\variety_{f_\mat{A}}\) of course depends on the generator matrix \(\mat{A}\) for \(\code\), however, we care less about the coefficient vectors \(\vec{x} \in \variety_{f_\mat{A}}\) but rather the self-orthogonal vectors \(\mat{A}^\transpose \vec{x} \in \code\), the set which does not depend on \(\mat{A}\).

\begin{definition}\label{def:code:variety}
  Denote by \(\variety_\code\) the set of self-orthogonal vectors in a linear code \(\code\). For any generator matrix \(\mat{A}\) for \(\code\) we have \(\variety_\code = \condset{\mat{A}^\transpose \vec{x}}{\vec{x} \in \variety_{f_\mat{A}}}\).
\end{definition}

Regardless of the hull dimension of \(\code\), the set \(\variety_\code\) contains exponentially many vectors in the dimension of \(\code\).

\begin{lemma}\label{lem:code:variety:nsols}
  For every \([n,k]\)-linear code \(\code\) over \(\FF_q\) we have \(\left\lvert \variety_\code \right\rvert \geq q^{k-2}\).
\end{lemma}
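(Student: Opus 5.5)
We need to show that the number of self-orthogonal vectors in an \([n,k]\)-linear code is at least \(q^{k-2}\). Since \(\vec{x} \mapsto \mat{A}^\transpose \vec{x}\) is injective, \cref{prop:so-variety} and \cref{def:code:variety} give \(\lvert \variety_\code \rvert = \lvert \variety_{f_\mat{A}} \rvert\). It therefore suffices to count the zeros in \(\FF_q^k\) of the quadratic form \(Q(\vec{x}) = \vec{x}^\transpose \mat{G} \vec{x}\), where \(\mat{G} = \mat{A}\mat{A}^\transpose\) is symmetric. For \(k \leq 2\) the claim is trivial, since \(\vec{0}\) is always a zero and \(q^{k-2} \leq 1\). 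So assume \(k \geq 3\). The case of even \(q\) (the statement allows any prime power) would need a separate remark; I will focus on odd \(q\), which is the setting of the section.

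The plan is to diagonalise and then fibre over all but two coordinates. For odd \(q\), choose \(\mat{U} \in \GL_k(\FF_q)\) with \(\mat{U}\mat{G}\mat{U}^\transpose = \diag(g_1,\dots,g_r,0,\dots,0)\), where \(r = \rank(\mat{G}) = k - h\). Replacing \(\mat{A}\) by \(\mat{U}\mat{A}\) does not change \(\variety_\code\), so we may assume \(Q = \sum_{i=1}^{r} g_i x_i^2\).

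\begin{itemize}
\item If \(r \leq k-2\), then \(x_1 = \dots = x_r = 0\) with the remaining coordinates free already gives \(q^{k-r} \geq q^2 \geq q^{k-2}\) zeros, hmm, this only settles the case \(r \leq 2\). The general case needs the fibering argument below, which I will use uniformly for every \(r\).
\item For the fibering, fix any values of \(x_3,\dots,x_k\). The equation then becomes a conic in \((x_1,x_2)\): \(g_1 x_1^2 + g_2 x_2^2 + c = 0\), where \(c = \sum_{i \geq 3} g_i x_i^2\), with the convention \(g_i = 0\) for \(i > r\). I want every such fibre to contain at least one point; summing over the \(q^{k-2}\) choices of \((x_3,\dots,x_k)\) then yields at least \(q^{k-2}\) zeros.
\item When \(r \geq 2\), so that \(g_1, g_2 \neq 0\), the conic is smooth with discriminant \(D = -4 g_1 g_2 \neq 0\), and \cref{prop:conic} says it has at least \(1\) point in every case (the counts are \(1\), \(q-1\), \(q+1\), \(2q-1\), all positive). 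Alternatively, the classical fact that \(a x^2 + b y^2 = e\) is solvable for \(a,b \neq 0\) gives the same conclusion by pigeonhole on the sets \(\{a x^2\}\) and \(\{e - b y^2\}\), each of size \((q+1)/2\).
\item When \(r \leq 1\), the argument degenerates and I handle it directly. If \(r = 0\), every vector is a zero. If \(r = 1\), setting \(x_1 = 0\) and leaving the rest free gives \(q^{k-1}\) zeros.
\end{itemize}

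The step I expect to need the most care is making sure no fibre is empty in the smooth case. This is exactly what \cref{prop:conic} (or the pigeonhole argument) guarantees, and it requires ordering the coordinates so that the two nonzero diagonal entries \(g_1, g_2\) sit in the fibre variables. Everything else is bookkeeping.
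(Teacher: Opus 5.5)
Your argument is correct for odd \(q\), but it takes a different route from the paper. The paper never diagonalises. It notes that \(f_{\mat{A}}(\vec{x}) = \vec{x}^\transpose \mat{A}\mat{A}^\transpose \vec{x}\) is a homogeneous quadratic in \(k\) variables, so it has the zero \(\vec{0}\). For \(k > 2\) it then invokes Warning's Second Theorem: a polynomial of degree \(2\) in \(k > 2\) variables over \(\FF_q\) with at least one zero has at least \(q^{k-2}\) zeros. Finally it transfers the bound to \(\variety_\code\) through the injective map \(\vec{x} \mapsto \mat{A}^\transpose \vec{x}\), as you also do. That argument is a one-line, characteristic-free black box, with no case analysis on the rank of the Gram matrix.

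Your route is to diagonalise by congruence, place two non-zero diagonal entries in \(x_1, x_2\), and fibre over the remaining \(k-2\) coordinates. Each fibre is a smooth conic, non-empty by \cref{prop:conic} or by pigeonhole on \(\{g_1 x^2\}\) and \(\{-c - g_2 y^2\}\). This is elementary and self-contained. It also exhibits essentially the same fibration into non-empty smooth conics that \(\SampleSelfOrthogonal\) later samples from. The price is the split into \(r \le 1\) and \(r \ge 2\), and the reliance on diagonalisability, which needs odd characteristic.

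That last point is the one real omission. The statement, like \cref{prop:so-variety}, is phrased for an arbitrary prime power \(q\). In characteristic \(2\) your first step is not available in general: a Gram matrix \(\mat{A}\mat{A}^\transpose\) can be alternating there, hence not congruent to a diagonal matrix. You should close this case rather than defer it, and it is immediate. For even \(q\) we have \(\inner{\vec{v}}{\vec{v}} = \sum_i v_i^2 = \bigl(\sum_i v_i\bigr)^2\). So \(\variety_\code\) is the intersection of \(\code\) with the hyperplane \(\sum_i v_i = 0\), and it has at least \(q^{k-1}\) elements.

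Also delete your first bullet. The chain \(q^{k-r} \ge q^2 \ge q^{k-2}\) is false for \(k > 4\). The correct observation hidden in it is that setting \(x_1 = \dots = x_r = 0\) gives \(q^{k-r} \ge q^{k-2}\) zeros exactly when \(r \le 2\). You do not need it, since your fibering covers \(r \ge 2\) and your last bullet covers \(r \le 1\).
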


\begin{proof}
  Let \(\mat{A} \in \FF_q^{k \times n}\) generate \(\code\) and assume notation of \cref{prop:so-variety}. The \(k\)-variate polynomial \(f_\mat{A}\) of degree \(2\) over \(\FF_q\) is homogeneous, hence it has a root \(\vec{0}^k \in \variety_{f_\mat{A}}\). If \(k > 2\), by Warning's Second Theorem~\cite[Satz~3]{HAM:Warning35} then \(\left\lvert \variety_{f_\mat{A}} \right\rvert \geq q^{k-2}\) and hence by \cref{def:code:variety} also \(\left\lvert \variety_\code \right\rvert \geq q^{k-2}\). For \(k \leq 2\) it follows from \(\vec{0}^k \in \variety_\code\) that \(\left\lvert \variety_\code \right\rvert \geq 1 \geq q^{k-2}\).
\end{proof}

\subsection{Regular Planes}

While a linear code spanned by \(\mat{A}\) possesses many self-orthogonal vectors, which can be seen as self-orthogonal lines, there are also non-degenerate subspaces of higher dimensions if the hull is not too big, such as non-degenerate planes.

\begin{definition}[Regular plane]\label{def:hollow:plane}
  Let \(n, q \in \ZZ\) with \(n \geq 2\) and \(q\) prime power. A \emph{regular plane} is an LCD code over \(\FF_q\) of dimension \(2\) and length \(n\).
  A regular plane generated by \(\mat{T}\) is \emph{hyperbolic} if \(V_\mat{T}\) is isometric to a hyperbolic plane, equivalently \(\chi_\mat{T} = \chi_q(-1)\).
\end{definition}

\begin{lemma}\label{lem:hollow:plane}
  A linear code over \(\FF_q\) for odd prime power \(q\) contains a regular plane if and only if it has hull co-dimension at least \(2\), and further contains a hyperbolic plane if and only if it has hull co-dimension at least \(3\) or hull co-dimension \(2\) and square class \(\chi_q(-1)\).
\end{lemma}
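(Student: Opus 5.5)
I will work throughout in the quadratic space \(V_\mat{A} = (\FF_q^k, \mat{A}\mat{A}^\transpose)\) attached to a generator matrix \(\mat{A}\) of \(\code\). Its radical is the coefficient space of \(\hull(\code)\), of dimension \(h\). Taking an LCD component \(\mat{T}\) from \(\Dehull\), we get the orthogonal decomposition \(V_\mat{A} \iso \mathrm{rad} \perp W\), where \(W = V_\mat{T}\) is non-degenerate of dimension \(t = k - h\) and has discriminant class \(\chi_\code\). A two-dimensional subcode \(\code(\mat{P}) \leq \code\) is a regular plane exactly when the restriction of the form to its coefficient plane \(P \leq V_\mat{A}\) is non-degenerate, since the Gram matrix of \(\mat{P}\) is that restriction.

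\emph{First claim.} For necessity, suppose \(P\) is a non-degenerate plane in \(V_\mat{A}\). Then \(P \cap \mathrm{rad} = 0\), because any radical vector in \(P\) would be orthogonal to all of \(P\). Hence \(P\) maps injectively and isometrically into \(V_\mat{A}/\mathrm{rad} \iso W\), which forces \(t \geq 2\). For sufficiency, assume \(t \geq 2\). Over odd \(q\) we can diagonalise \(W\) as \(\diag(a_1,\dots,a_t)\) with all \(a_i \neq 0\). The span of the first two basis vectors is then a non-degenerate plane, and its preimage in \(\code\) is a regular plane.

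\emph{Second claim.} The same isometric embedding shows that a hyperbolic plane in \(\code\) gives a hyperbolic plane inside \(W\). Conversely, a hyperbolic plane in \(W\) pulls back via \(\mat{T}\) to one in \(\code\). So it suffices to decide when the non-degenerate space \(W\) of dimension \(t\) contains a hyperbolic plane, which is equivalent to \(W\) being isotropic.
\begin{itemize}
\item If \(t \geq 3\), \(W\) is isotropic because every non-degenerate quadratic form in at least \(3\) variables over a finite field of odd characteristic has a non-trivial zero. This follows from Chevalley--Warning, or directly from \cref{prop:conic} by counting points of \(a_1x_1^2 + a_2x_2^2 + a_3 = 0\), which has \(q\pm1 > 0\) solutions. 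An isotropic vector \(v\) in a non-degenerate space lies in a hyperbolic plane: choose \(w\) with \(\inner{v}{w} \neq 0\) and note that \(\Span(v,w)\) has discriminant class \(-1\).
\item If \(t = 2\), \(W\) is itself the only candidate plane. It is hyperbolic iff \(-\det\) of its Gram matrix is a square, that is, iff \(\chi_\code = \chi_q(-1)\). This matches \cref{def:hollow:plane}.
\end{itemize}

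The only step needing care is the well-definedness of the reduction to \(W\). The isometry class of \(V_\mat{A}/\mathrm{rad}\), and in particular its discriminant class \(\chi_\code\), must not depend on the choice of \(\mat{T}\). The injectivity of \(P \to W\) must also hold, and this is where non-degeneracy of \(P\) is used. Both follow from the standard fact that the form descends to a non-degenerate form on the quotient by the radical, as in the Witt decomposition viewpoint cited from~\cite{localinv}. Beyond that, the main ingredient is isotropy in dimension \(3\), which is classical.
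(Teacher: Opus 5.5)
Your proof is correct and follows essentially the same route as the paper: reduce to the non-degenerate space \(V_\mat{T}\) of an LCD component, then use the classification of quadratic spaces over finite fields of odd characteristic (isotropy in dimension at least \(3\), and the discriminant test for planes). You are more explicit than the paper in two places. First, you embed any regular plane isometrically into \(V_\mat{A}\) modulo its radical, which settles the `only if' directions that the paper leaves implicit. Second, you give a self-contained isotropy argument via Chevalley--Warning or \cref{prop:conic}, where the paper simply cites Artin.
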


\begin{proof}
  Direct corollary of~\cite[Sec.~III.6]{Artin1957}. Any linear code with hull co-dimension at least \(3\) has an LCD component \(\mat{T}\) such that \(V_\mat{T}\) is regular. Then \(V_\mat{T}\) is isometric to an orthogonal sum of subspaces with at least one hyperbolic plane, which corresponds to a basis change of \(\mat{T}\) that has the first two rows spanning a hyperbolic plane. When hull co-dimension is \(2\), we have that \(V_\mat{T}\) is either isometric to a hyperbolic plane, when \(\chi_\mat{T} = \chi_q(-1)\), or to an anisotropic space, which corresponds to a non-hyperbolic regular plane. When hull co-dimension is \(1\) there are no regular planes, only one regular line for each hull vector.
\end{proof}

We consider hyperbolic planes in particular because they always contain a non-zero self-orthogonal vector, which will be useful when trying to sample such vectors.

\begin{proposition}\label{prop:plane:hyper}
  A regular plane over \(\FF_q\) for odd \(q\) contains a non-zero self-orthogonal vector if and only if it is hyperbolic.
\end{proposition}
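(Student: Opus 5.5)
We reduce the statement to a computation with a \(2 \times 2\) Gram matrix. Let the regular plane be \(\code(\mat{T})\) with \(\mat{T} \in \FF_q^{2 \times n}\), and put \(\mat{G} = \mat{T}\mat{T}^\transpose\), a symmetric \(2\times 2\) matrix that is invertible because the plane is LCD. By \cref{prop:so-variety}, a vector \(\mat{T}^\transpose \vec{y}\) is self-orthogonal exactly when \(\vec{y}\) is a root of the binary quadratic form \(f_\mat{T}(\vec{x}) = \vec{x}^\transpose \mat{G} \vec{x}\). Since \(\mat{T}\) has full rank, the plane contains a non-zero self-orthogonal vector if and only if \(f_\mat{T}\) has a non-trivial zero in \(\FF_q^2\). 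Both this condition and \(\chi_\mat{T} = \chi_q(\det \mat{G})\) are invariant under basis change \(\mat{G} \mapsto \mat{U}\mat{G}\mat{U}^\transpose\), since \(\det\) changes by the square \(\det(\mat{U})^2\). So we may pick a convenient basis.

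Because \(q\) is odd, we diagonalise: \(\mat{G} \sim \diag(a,b)\) with \(a,b \in \FF_q^\star\), so \(f = a x_1^2 + b x_2^2\). A non-trivial zero must have both coordinates non-zero, since \(a, b \neq 0\). Dividing by \(x_2^2\) shows that \(f\) is isotropic if and only if \(-b/a\), or equivalently \(-ab\), is a square. This is the condition \(\chi_q(-\det\mat{G}) = 1\), i.e. \(\chi_q(\det \mat{G}) = \chi_q(-1)\), which is precisely \(\chi_\mat{T} = \chi_q(-1)\), the hyperbolicity criterion of \cref{def:hollow:plane}.

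As a cross-check, and an alternative route, apply \cref{prop:conic} to \(F = a x_1^2 + b x_2^2\). Here \(f' = 0\) and the discriminant is \(D = -4ab\), with \(\chi_q(D) = \chi_q(-ab)\). The variety has \(2q-1 > 1\) points when \(D\) is a square and only the origin when \(D\) is a non-square. This gives the same dichotomy.

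The only point requiring care is consistency with the definition of hyperbolic. \Cref{def:hollow:plane} offers both "isometric to a hyperbolic plane" and "\(\chi_\mat{T} = \chi_q(-1)\)". The argument above uses the square-class formulation. If the isometry formulation is preferred, one direction is immediate, because a hyperbolic plane has Gram matrix \(\left[\begin{smallmatrix}0&1\\1&0\end{smallmatrix}\right]\) and so contains the isotropic vector \(\vec{e}_1\). For the converse, a non-zero isotropic vector \(\vec{u}\) in a regular plane extends to a basis \(\vec{u},\vec{w}\) with \(\inner{\vec{u}}{\vec{w}} \neq 0\), by regularity. Replacing \(\vec{w}\) with \(\vec{w} - c\vec{u}\) for suitable \(c\) makes \(\vec{w}\) isotropic, using that \(2\) is invertible. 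Rescaling then gives the hyperbolic Gram matrix, as in \cite[Sec.~III.6]{Artin1957}.
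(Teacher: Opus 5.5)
Your proof is correct, but it takes a more elementary route than the paper. The paper simply cites \cite[Sec.~III.6]{Artin1957}: a regular plane over \(\FF_q\) with \(q\) odd is, up to isometry, either the hyperbolic plane (exactly when \(\chi_\mat{T} = \chi_q(-1)\)) or the anisotropic plane. The bijection \(\vec{x} \mapsto \mat{T}^\transpose \vec{x}\) between isotropic vectors of \(V_\mat{T}\) and self-orthogonal vectors of \(\code(\mat{T})\) then finishes the argument.

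You use the same bijection, but instead of invoking the classification you prove the one fact that is actually needed. After diagonalising \(\mat{T}\mat{T}^\transpose \sim \diag(a,b)\), the form \(a x_1^2 + b x_2^2\) has a non-trivial zero if and only if \(-ab\) is a square, i.e.\ \(\chi_\mat{T} = \chi_q(-1)\). This makes the argument self-contained: it uses only diagonalisability in odd characteristic and elementary reasoning about squares. Your last paragraph also checks that the two formulations of `hyperbolic' in \cref{def:hollow:plane} agree, which the paper asserts with `equivalently' but does not prove. The check via \cref{prop:conic} (with \(f' = 0\) and \(D = -4ab\)) is consistent but redundant.

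The paper's version is shorter and matches its reliance on Witt/Artin theory elsewhere (e.g.\ \cref{lem:hollow:plane}). The cost is that the substance of the argument is delegated to the reference.
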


\begin{proof}
  Let \(\mat{T}\) span the plane. By~\cite[Sec.~III.6]{Artin1957}, there are exactly two possibilities for \(V_\mat{T}\) up to isometry, these are either the hyperbolic plane when \(\chi_\mat{T} = \chi_q(-1)\), or the anisotropic representative. Since anisotropic spaces contain no non-zero isotropic vectors and isotropic vectors in \(V_\mat{T}\) are in bijective correspondence with self-orthogonal vectors in \(\code(\mat{T})\) through \(\vec{x} \mapsto \mat{T}^\transpose \vec{x}\), the conclusion follows.
\end{proof}

\subsection{How to Sample a Hull}\label{sec:hollow:sample}

We provide an explicit algorithm for finding regular planes in linear codes, use it to construct an algebraic sampling algorithm that samples uniformly random self-orthogonal vectors of a code, and then iteratively build a sampling algorithm for uniformly random linear codes of a given hull dimension and square class.

\subsubsection{Sampling LCD Codes}

We start with LCD codes, which are easy to sample using rejection sampling, since most linear codes are LCD by \cref{cor:prob-hull}. Without loss of generality, we only deal with codes of rate at most \(\frac{1}{2}\), since a code is LCD if and only if its dual is LCD. This makes the analysis significantly easier.

\begin{definition}[Algorithm \(\SampleLCD\)]\label{def:algo:samplelcd}
  Let \(n,k,q,\chi \in \ZZ\) with \(0 \leq k \leq \frac{n}{2}\), \(\chi \in \set{\pm 1}\), and \(q\) odd prime power. Consider the algorithm in \cref{fig:algo:samplelcd} that on input \(n, k, q, \chi\) outputs a code generator matrix \(\mat{A}\) with \(\code(\mat{A}) \in \mathcal{X}_{n,k,q}^{\chi,0}\) or aborts with \(\bot\). If \(k = 0\), the algorithm always outputs the zero code \(\set{\vec{0}^n}\).
\end{definition}

\begin{figure}[t]
  \centering
  \begin{pchstack}[boxed]
    \procedure[mode=text]{$\SampleLCD(n,k,q,\chi)$}{
      \pcif \(k = 0\)\pcthen \pcreturn \([\cdot] \in \FF_q^{0 \times n}\) \\
      \pcfor \(\_ = 1,\dots,k\) \pcdo \\
      \t \(\mat{A} \sample \FF_q^{k \times n}\) \\
      \t \pcif \(\rank(\mat{A}) \neq k\)\pcthen \pccontinue \\
      \t \pcif \(\rank(\mat{A}\mat{A}^\transpose) \neq k\)\pcthen \pccontinue \\
      \t \pcif \(\chi_\mat{A} \neq \chi\)\pcthen \pccontinue \\
      \t \pcreturn \(\mat{A}\) \\
      \pcreturn \(\bot\)
    }
  \end{pchstack}
  \caption{The algorithm \(\SampleLCD\) that, given \(n,k,q,\chi \in \ZZ\) outputs a matrix \(\mat{A} \in \FF_q^{k \times n}\) with \(\code(\mat{A}) \in \mathcal{X}_{n,k,q}^{\chi,h}\).}
  \label{fig:algo:samplelcd}
\end{figure}

\begin{lemma}\label{lem:algo:samplelcd}
  Assume the notation of \cref{def:algo:samplelcd}. The algorithm \(\SampleLCD\) runs in time \(\bigO{k \cdot (n^\omega + \log^2(q))}\), succeeds with probability overwhelming in \(k\), and its output distribution conditioned on not aborting is \(\dist_{n,k,q}^{\chi,0}\).
\end{lemma}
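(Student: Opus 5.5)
We split the claim into three parts: running time, output distribution conditioned on not aborting, and success probability. The case \(k=0\) is immediate, so assume \(1 \leq k \leq \frac{n}{2}\).

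For the running time, each of the at most \(k\) iterations samples \(kn\) field elements. It then computes \(\rank(\mat{A})\) and the product \(\mat{A}\mat{A}^\transpose\), each in \(\bigO{n^\omega}\) field operations. Finally, since \(\mat{A}\mat{A}^\transpose\) is invertible at that point, it obtains \(\det(\mat{A}\mat{A}^\transpose)\) via Gaussian elimination and evaluates the quadratic character \(\chi_q\), by exponentiation or a Jacobi-symbol-type computation, in \(\bigO{\log^2(q)}\). This gives \(\bigO{k \cdot (n^\omega + \log^2(q))}\) in total.

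For the distribution, a single iteration samples \(\mat{A}\) uniformly from \(\FF_q^{k\times n}\). Conditioned on \(\rank(\mat{A}) = k\), \(\mat{A}\) is uniform over full-rank matrices. Every code in \(\mathcal{X}_{n,k,q}\) has exactly \(\lvert \GL_k(\FF_q) \rvert\) generator matrices, so \(\code(\mat{A})\) is uniform over \(\mathcal{X}_{n,k,q}\) and \(\mat{A}\) is distributed as \(\mathsf{U}_{n,k,q}\).

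The two later tests depend only on the code. The condition \(\det(\mat{A}\mat{A}^\transpose) \neq 0\) says exactly that the hull is trivial. For an LCD code the whole code is its own LCD component, so \(\chi_\mat{A} = \chi_q(\det(\mat{A}\mat{A}^\transpose))\) is invariant under \(\mat{A} \mapsto \mat{U}\mat{A}\), since the determinant is multiplied by \(\det(\mat{U})^2\). Hence the accepting event is a union of full fibres \(\set{\mat{U}\mat{A}}\) over codes in \(\mathcal{X}_{n,k,q}^{\chi,0}\). Conditioning a uniform distribution on such an event gives the uniform distribution on generator matrices of \(\mathcal{X}_{n,k,q}^{\chi,0}\), i.e.\ \(\dist_{n,k,q}^{\chi,0}\). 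Iterations are independent and we return the first success, so the output conditioned on not aborting has the same distribution.

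For success probability, the main step is a constant lower bound \(c\) on the per-iteration acceptance probability, after which the abort probability is at most \((1-c)^k\). We would factor the acceptance probability as
\[\Pr[\rank = k] \cdot \Pr[\text{LCD}] \cdot \Pr[\chi_\code = \chi \mid \text{LCD}].\]
The first factor is \(\prod_{i=n-k+1}^{n}(1-q^{-i}) \geq 1 - \frac{1}{q-1} \geq \frac12\). The second is at least \(\gamma_q \geq \frac12\) by \cref{lem:prob-hull}.

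The third factor is the obstacle. We would bound it via \cref{prop:count:lcd} by the ratio \(\lvert\mathcal{X}^{\pm1,0}_{n,k,q}\rvert / \lvert\mathcal{X}^{0}_{n,k,q}\rvert\).
\begin{itemize}
\item For \(k\) odd and \(n\) even the ratio is exactly \(\frac12\).
\item In the other cases it equals \(\frac{B_\pm}{2q^{(n-k)/2}}\), \(\frac{C_\pm}{2q^{k/2}}\), or \(\frac{D_\pm}{2}\) up to the normalisation that makes the two classes sum to the total.
\end{itemize}
In each case we would show the minimum is at least \(\frac12 \cdot \frac{q-1}{q+1} \geq \frac14\), using \(q^{m} \pm 1 \geq q^m(1 - q^{-1})\) for \(m \geq 1\) and checking the \(m=0\) and \(k \leq \frac n2\) edge cases directly. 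The subtle case is \(D_-\) with small \(k\). For example, \(k = 2\) gives factor \(\frac{q - \chi_q(-1)}{q+\chi_q(-1)}\)-type terms, which are still bounded below by \(\frac{q-1}{q+1} \geq \frac12\) for \(q \geq 3\).

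Combining the three factors gives \(c \geq \frac{1}{16}\). The failure probability is then at most \((15/16)^k\), which is negligible, i.e.\ the algorithm succeeds with probability overwhelming in \(k\).
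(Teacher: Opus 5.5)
\textbf{Comparison with the paper's proof.} Your proof is correct in structure. On running time and on the output distribution it follows the paper. On the distribution you are more explicit: the paper only says the algorithm rejects matrices whose code lies outside \(\mathcal{X}_{n,k,q}^{\chi,0}\), and your fibre argument with the factor \(\det(\mat{U})^2\) is what makes that ``clearly'' rigorous.

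The real difference is the square-class factor. The paper cites \cite[Thm.~1]{localinv} for \(\Pr[\chi_\code=\chi \mid \text{LCD}] \geq \frac12 - q^{-k/2}\), which gives abort probability \((\frac34+o(1))^k\). You instead read the ratio \(\lvert\mathcal{X}^{\pm1,0}_{n,k,q}\rvert/\lvert\mathcal{X}^{0}_{n,k,q}\rvert\) off \cref{prop:count:lcd}. Your route is self-contained and gives a constant valid for every \(k\geq 1\); the cited bound is vacuous for, e.g., \(k=1\), \(q=3\). The paper's bound is asymptotically sharper, since it tends to \(\frac12\).

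\textbf{A slip in your case analysis.} The claimed lower bound \(\frac12\cdot\frac{q-1}{q+1}\) is false at \(n=4\), \(k=2\). There both numerator factors of \(D_\pm\) are small at once, and one of them equals \(\frac{(q-1)^2}{q^2+1}\): this is \(D_-\) if \(q\equiv 1 \pmod 4\) and \(D_+\) if \(q \equiv 3 \pmod 4\). The conditional probability is then \(\frac{(q-1)^2}{2(q^2+1)}\). This is strictly below \(\frac12\cdot\frac{q-1}{q+1}\) for every \(q\), and equals \(\frac15<\frac14\) at \(q=3\). Concretely, only \(18\) of the \(90\) LCD \([4,2]\)-codes over \(\FF_3\) have square class \(+1\).

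\textbf{The fix.} Set \(a=q^{k/2}\) and \(b=q^{(n-k)/2}\); in the \(D\)-case both satisfy \(a,b\geq q\).
\begin{itemize}
\item Then \(D_\pm \geq \frac{(a-1)(b-1)}{ab+1}\), and this expression is increasing in \(a\) and in \(b\).
\item So the \(D\)-case ratio is at least \(\frac{(q-1)^2}{2(q^2+1)}\geq\frac15\).
\item The \(B\) and \(C\) cases give at least \(\frac{q-1}{2q}\geq\frac13\).
\end{itemize}
Hence \(c\geq\frac{1}{20}\), and the abort probability is at most \((\frac{19}{20})^k\). This still gives success with probability overwhelming in \(k\), so your conclusion stands.
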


\begin{proof}
  The algorithm is in \cref{fig:algo:samplelcd} and is clearly correct if it succeeds with a matrix \(\mat{A}\), as we are simply rejecting generator matrices with \(\code(\mat{A}) \notin \mathcal{X}_{n,k,q}^{\chi,h}\). The running time is immediate, since residue testing takes \(\log^2(q)\) by~\cite{Cohen1993}, so we focus on the success probability. The probability that a uniformly random matrix \(\mat{A} \sample \FF_q^{k \times n}\) is full-rank is \(p_\mathsf{FR} = \prod_{i=0}^{k-1} (1 - q^{i-n}) \geq 1 - \frac{k}{q^k}\), bound by the Weierstrass product inequality. Next, the probability that a uniformly random full-rank matrix generates an LCD code is at least \(\gamma_q\) by \cref{cor:prob-hull}. The probability that a uniformly random \(0\)-hollow matrix generates a code with square class \(\chi\) is by~\cite[Thm.~1]{localinv} at least \(\frac{1}{2} - q^{-\frac{k}{2}}\). The algorithm fails if in every one of the \(k\) repetitions at least one of these events does not happen. Using that \(\gamma_q \geq \frac{1}{2}\) by \cref{lem:prob-hull} for \(\bullet\), this probability is at most
  \begin{align*}
    \iftwocolumns{&}{}\left(1 - \left(1 - k q^{-k}\right) \cdot \gamma_q \cdot \left(\frac{1}{2} - q^{-\frac{k}{2}}\right)\right)^k \iftwocolumns{\\}{}
    &\overset{\bullet}{\leq} \left( 1 - \frac{1}{4} + \frac{k}{4} q^{-k} + \frac{1}{2} q^{-\frac{k}{2}} - \frac{k}{2} q^{-\frac{3k}{2}} \right)^k \\
    &\leq \left( \frac{3}{4} + \negl[k] \right)^k
    \leq \negl[k]. \qedhere
  \end{align*}
\end{proof}

\begin{remark}\label{rem:algo:samplelcd}
  An algorithm that samples uniformly random LCD codes without considering the square class can be obtained by removing the line that checks the square class. Alternatively we can instead sample \(\chi \sample \set{\pm 1}\) instead of taking it as input and obtain an algorithm whose output distribution is negligibly close in \(k\) to the uniform distribution on LCD codes.
\end{remark}

\subsubsection{Finding LCD Planes}

Our sampler will start with a uniformly random LCD code of a desired square class and endow it with a hull of the specified dimension. We sample the hull by iteratively sampling the next hull vector as a uniformly random self-orthogonal vector from the dual of the current linear code. Adding hull vectors of course preserves the square class, so the resulting code will indeed be of the correct type.
Sampling self-orthogonal vectors will prove to be easy and efficient, all we need to do is find an appropriate parametrisation of the variety \(\variety_\code\) of a linear code \(\code\). The parametrisation will come in the form of a family of smooth conics, where the non-degeneracy condition will be fulfilled by first finding a regular plane on which to base our sampler.

\begin{definition}[Algorithm \(\FindPlane\)]\label{def:algo:findplane}
  Let \(n,k,h,q \in \ZZ\) with \(0 \leq h \leq k \leq n - h\), \(q\) odd prime power, and additionally \(h \leq k - 2\). Consider the algorithm \(\FindPlane\) in \cref{fig:algo:findplane} that on input an \(h\)-hollow matrix \(\mat{A} \in \FF_q^{k \times n}\) and indices \(i^* \neq j^*\) in \(\llbracket 1, k \rrbracket\), outputs a matrix \(\mat{B} \in \FF_q^{k \times n}\) such that \(\code(\mat{A}) = \code(\mat{B})\), and \(\mat{B}[i^*]\) and \(\mat{B}[j^*]\) span a regular plane.
\end{definition}

\begin{figure}[t]
  \centering
  \begin{pchstack}[boxed]
    \procedure[mode=text]{$\FindPlane(\mat{A}, i^*, j^*)$}{
      \(\mat{G} \coloneqq \mat{A} \mat{A}^\transpose\);\;
      \pcassert \(\rank(\mat{G}) \geq 2\);\;
      \pcassert \(i^* \neq j^*\) \\
      \pcif \(\mat{G}_{i,i} = 0\) for \(i=1,\dots,k\)\pcthen \\
      \t pick any \((i, j)\) such that \(\mat{G}_{i,j} \neq 0\) \\
      \t \pcreturn \(\mat{A}\) permuted s.t. \(i \mapsto i^*\) and \(j \mapsto j^*\) \\
      pick any \(i\) such that \(\mat{G}_{i,i} \neq 0\) \\
      \(\mat{A} \coloneqq \mat{A}\) with transposed rows \(1\) and \(i\) \\
      \pcfor \(i=2,\dots,k\)\pcdo \\
      \t \(\mat{A}[i] \coloneqq \mat{A}[i] - \frac{\inner{\mat{A}[1]}{\mat{A}[i]}}{\inner{\mat{A}[1]}{\mat{A}[1]}} \mat{A}[1]\) \\
      \(\mat{A}' \coloneqq \mat{A}\llbracket 2, k \rrbracket\);\;
      \(\mat{G}' \coloneqq \mat{A}' \mat{A'}^\transpose\) \\
      \pcif \(\mat{G}_{i,i}' = 0\) for \(i=1,\dots,k-1\)\pcthen \\
      \t pick any \((i,j)\) such that \(\mat{G}_{i,j}' \neq 0\) \\
      \t \pcreturn \(\mat{A}\) permuted s.t. \(i+1 \mapsto i^*\) and \(j+1 \mapsto j^*\) \\
      pick any \(i\) such that \(\mat{G}_{i,i}' \neq 0\) \\
      \pcreturn \(\mat{A}\) permuted s.t. \(1 \mapsto i^*\) and \(i+1 \mapsto j^*\)
    }
  \end{pchstack}
  \caption{The algorithm \(\FindPlane\) that, given any \(h\)-hollow \(\mat{A} \in \FF_q^{k \times n}\) with \(k-h \geq 2\) finds \(\mat{B}\) generating the same code such that \(\mat{B}[i^*]\) and \(\mat{B}[j^*]\) span a regular plane.}
  \label{fig:algo:findplane}
\end{figure}

\begin{lemma}\label{lem:algo:findplane}
  The algorithm \(\FindPlane\) runs in time \(\bigO{n^\omega}\) and always succeeds.
\end{lemma}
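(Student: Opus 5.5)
The plan is to treat the running time and the correctness (``always succeeds'') separately. Correctness rests on two facts: the Gram matrix $\mat{G}=\mat{A}\mat{A}^\transpose$ has rank $k-h\geq 2$, and all row operations in \(\FindPlane\) are invertible.

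\emph{Running time.} Computing $\mat{G}$ costs $\bigO{k^2 n}\subseteq\bigO{n^\omega}$. Scanning diagonals or entries of $\mat{G}$ and $\mat{G}'$ costs $\bigO{k^2}$. The elimination loop performs $k-1$ row updates, each needing one inner product and one axpy, for a total of $\bigO{kn}$. Recomputing $\mat{G}'$ costs again $\bigO{k^2 n}$, and row permutations are linear. Since $k\leq n$, everything is dominated by $\bigO{n^\omega}$.

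\emph{Preliminaries for correctness.} The code is preserved because the algorithm only applies row permutations and the transvections $\mat{A}[i]\mapsto\mat{A}[i]-c\,\mat{A}[1]$ with $i\neq 1$. These are invertible basis changes, so the resulting $\mat{B}$ has $\code(\mat{B})=\code(\mat{A})$. By invariance under basis change, $\rank(\mat{A}\mat{A}^\transpose)=k-h$. Indeed, the radical of $V_\mat{A}$ corresponds exactly to the hull via $\vec{x}\mapsto\mat{A}^\transpose\vec{x}$, and the hull has dimension $h$. Hence the assertion $\rank(\mat{G})\geq2$ holds by the hypothesis $h\leq k-2$. Two rows $\vec{a},\vec{b}$ span a regular plane iff the $2\times 2$ Gram determinant $\inner{\vec{a}}{\vec{a}}\inner{\vec{b}}{\vec{b}}-\inner{\vec{a}}{\vec{b}}^2$ is non-zero; this condition also forces linear independence.

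\emph{Correctness, case by case.} There are three exits.
\begin{enumerate}[label=(\roman*)]
\item All $\mat{G}_{i,i}=0$. Since $\mat{G}\neq 0$, some off-diagonal $\mat{G}_{i,j}\neq 0$ exists. The corresponding $2\times2$ Gram matrix is $\begin{bmatrix}0&g\\g&0\end{bmatrix}$ with determinant $-g^2\neq0$, so those two rows span a regular plane.
\item Otherwise, move a row with $\mat{G}_{1,1}\neq0$ to the top, which is possible since $q$ is odd and division is allowed, and orthogonalise all other rows against it. The Gram matrix then becomes $\diag(\mat{G}_{1,1})\oplus\mat{G}'$. Its rank is unchanged, so $\rank(\mat{G}')=k-h-1\geq 1$ and $\mat{G}'\neq 0$. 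If $\mat{G}'$ has zero diagonal, a non-zero off-diagonal entry exists, and as in (i) the shifted rows $i+1,j+1$ span a regular plane.
\item Otherwise pick $\mat{G}'_{i,i}\neq0$. Rows $1$ and $i+1$ have Gram matrix $\diag(\mat{G}_{1,1},\mat{G}'_{i,i})$, which is non-singular.
\end{enumerate}
In each case the final row permutation places the pair at $(i^*,j^*)$; this is possible since $i^*\neq j^*$.

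The main subtlety is step (ii): it requires justifying that orthogonalisation against an anisotropic vector yields the block-diagonal Gram matrix with preserved rank, so that $\mat{G}'\neq0$. This is a standard congruence computation, $\mat{E}\mat{G}\mat{E}^\transpose$ for an invertible elimination matrix $\mat{E}$.
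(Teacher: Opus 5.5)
Your proposal is correct and follows the paper's proof: it uses the same three exits of \(\FindPlane\), and your congruence argument \(\rank(\mat{G}') = \rank(\mat{G}) - 1 = k-h-1 \geq 1\) just makes explicit the paper's remark that \(\mat{A}'\) still has hull co-dimension at least \(1\). One small correction to the timing: \(\bigO{k^2 n} \subseteq \bigO{n^\omega}\) fails for \(\omega < 3\) when \(k = \Theta(n)\), so compute \(\mat{A}\mat{A}^\transpose\) by fast (block) matrix multiplication in \(\bigO{n k^{\omega-1}} \subseteq \bigO{n^\omega}\), and obtain \(\mat{G}'\) from \(\mat{G}\) as a Schur complement in \(\bigO{k^2}\) instead of recomputing it.
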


\begin{proof}
  Without loss of generality, suppose in this proof \(i^* = 1\) and \(j^* = 2\). Suppose for all rows \(\vec{v}^\transpose\) of \(\mat{A}\) we have \(\inner{\vec{v}}{\vec{v}} = 0\). Then there must exist two rows \(\vec{v}^\transpose, \vec{w}^\transpose\) of \(\mat{A}\) such that \(\inner{\vec{v}}{\vec{w}} \neq 0\), otherwise \(\mat{A}\) spans a self-orthogonal code. Thus \(\vec{v}\) and \(\vec{w}\) span a regular plane, so our algorithm simply permutes the rows of \(\mat{A}\) into \(\mat{B}\) such that \(\mat{B}\) starts with \(\vec{v}\) and \(\vec{w}\), and clearly \(\code(\mat{A}) = \code(\mat{B})\).

  Otherwise let \(\vec{v}^\transpose\) be the first row of \(\mat{A}\) with \(\inner{\vec{v}}{\vec{v}} \neq 0\). Compute a basis for the orthogonal complement of \(\vec{v}\) in \(\code(\mat{A})\), that is, a generator matrix \(\mat{A}' \in \FF_q^{(k-1) \times n}\) for \(\code \cap \code(\vec{v}^\transpose)^\transpose\) using the usual projections. Again, if for all rows \(\vec{w}^\transpose\) of \(\mat{A}'\) we have \(\inner{\vec{w}}{\vec{w}} = 0\) then find any two rows of \(\mat{A}'\) that are not orthogonal, since it spans a code of hull co-dimension at least \(1\), and return \([\mat{A}' \Mid \vec{v}^\transpose]\) permuted such that the two non-orthogonal rows of \(\mat{A}'\) are first. Otherwise let \(\vec{w}^\transpose\) be the first row that is not self-orthogonal in \(\mat{A}'\). Then \(\vec{v}\) and \(\vec{w}\) span a regular plane, hence return \([\vec{v}^\transpose \Mid \mat{A}']\) permuted such that \(\mat{A}'\) starts with \(\vec{w}^\transpose\).

  The first condition can be checked with linear algebra by computing the Gram matrix \(\mat{G} = \mat{A}\mat{A}^\transpose\) and checking whether it has all \(0\)s on the diagonal, and if so, finding \((i,j)\) such that \(\mat{G}_{i,j} \neq 0\) and permuting accordingly. Since we already have all the inner products from computing the Gram matrix, orthogonalisation costs \(k-1\) vector additions in \(\FF_q^n\).
\end{proof}

While any regular plane would allow us to sample uniformly random self-orthogonal vectors from a fixed linear code, our plan is to iteratively increase the dimension of our code. Thus our sampling algorithm needs to be independent of the linear code. We ensure this by always picking a hyperbolic plane. A hyperbolic plane always exists by \cref{lem:hollow:plane} as long as there are self-orthogonal vectors outside of the hull, otherwise we simply sample from the hull directly.

\begin{definition}[Algorithm \(\FindPlaneH\)]\label{def:algo:findplane:hyper}
  Let \(n,k,h,q \in \ZZ\) with \(0 \leq h \leq k \leq n - h\), \(q\) odd prime power, and additionally \(h \leq k - 2\). Consider the algorithm \(\FindPlaneH\) in \cref{fig:algo:findplane:hyper} that on input an \(h\)-hollow matrix \(\mat{A} \in \FF_q^{k \times n}\) and indices \(i^* \neq j^*\) in \(\llbracket 1, k \rrbracket\), outputs a matrix \(\mat{B} \in \FF_q^{k \times n}\) such that \(\code(\mat{A}) = \code(\mat{B})\), and \(\mat{B}[i^*]\) and \(\mat{B}[j^*]\) span a hyperbolic plane, or aborts with \(\bot\).
\end{definition}

\begin{figure}[t]
  \centering
  \begin{pchstack}[boxed]
    \procedure[mode=text]{$\FindPlaneH(\mat{A}, i^*, j^*)$}{
      \pcassert \(i^* \neq j^*\);\;
      \(\mat{T}, \mat{H} \coloneqq \Dehull(\mat{A})\);\;
      \(t \coloneqq \algo{rank}(\mat{T})\) \\
      \pcassert \(t \geq 2\);\;
      \pcif \(t = 2\) \pcand \(\chi_\mat{T} \neq \chi_q(-1)\)\pcthen \pcreturn \(\bot\) \\
      \(\mat{T}' \coloneqq \FindPlane(\mat{T}, t-1, t)\) \\
      \pcif \(\chi_{\mat{T}'\llbracket t-1,t\rrbracket} = \chi_q(-1)\)\pcthen \\
      \t \pcreturn \([\mat{T}' \Mid \mat{H}]\) permuted s.t. \(t-1 \mapsto i^*\) and \(t \mapsto j^*\) \\
      \(f \coloneqq \vec{x}^\transpose \mat{T}' {\mat{T}'}^\transpose \vec{x}\);\;
      \pcfor \(i \coloneqq 1,\dots,k-2\) \pcdo\ \(c_i \sample \FF_q^\star\) \\
      \(F(x, y) \coloneqq f(c_1,\dots, c_{k-2}, x, y)\) \\
      \((c_{k-1}, c_k) \sample \variety_F\);\;
      \(\vec{c} \coloneqq [c_i]_{i=1}^k\);\;
      \(\vec{v} \coloneqq {\mat{T}'}^\transpose \vec{c}\) \\
      pick \(\vec{y}\) s.t. \(\vec{v}^\transpose {\mat{T}'}^\transpose \vec{y} = 1\);\;
      \(\vec{z} \coloneqq \vec{y} - \frac{\vec{y}^\transpose \mat{T}' {\mat{T}'}^\transpose \vec{y}}{2} \vec{c}\) \\
      \(\vec{w} \coloneqq {\mat{T}'}^\transpose \vec{z}\);\;
      basis change \(\mat{T}'\) s.t. it starts with \([\vec{v} \mid \vec{w}]^\transpose\) \\
      \pcreturn \([\mat{T}' \Mid \mat{H}]\) permuted s.t. \(1 \mapsto i^*\) and \(2 \mapsto j^*\)
    }
  \end{pchstack}
  \caption{The algorithm \(\FindPlaneH\) attempts to change the output of \(\FindPlane\) in \protect\cref{fig:algo:findplane} to a hyperbolic plane.}
  \label{fig:algo:findplane:hyper}
\end{figure}

\begin{lemma}\label{lem:algo:findplane:hyper}
  The algorithm \(\FindPlaneH\) runs in time \(\bigO{n^\omega + \log^2(q)}\) and always succeeds, except when the input matrix \(\mat{A}\) has hull co-dimension \(2\) and \(\chi_\mat{A} \neq \chi_q(-1)\), in which case it always aborts.
\end{lemma}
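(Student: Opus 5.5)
We verify the algorithm in \cref{fig:algo:findplane:hyper} line by line, splitting by the hull co-dimension \(t = \rank(\mat{T})\) of the LCD component returned by \(\Dehull\). Since \(\mat{T}\) spans \(\code(\mat{A})/\hull(\code(\mat{A}))\), \(V_\mat{T}\) is regular, and the hypothesis \(h \leq k-2\) gives \(t \geq 2\), so the assertion passes.

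\emph{Case \(t=2\).} Here \(\chi_\mat{T} = \chi_\mat{A}\) by definition of the square class. By \cref{lem:hollow:plane} and \cref{prop:plane:hyper}, \(\code(\mat{T})\) is hyperbolic exactly when \(\chi_\mat{T} = \chi_q(-1)\). If \(\chi_\mat{A} \neq \chi_q(-1)\), the algorithm aborts, as claimed. Otherwise \(\FindPlane\) (\cref{lem:algo:findplane}) returns \(\mat{T}'\) whose last two rows span all of \(\code(\mat{T})\), hence a hyperbolic plane. The first return branch fires, and \([\mat{T}' \Mid \mat{H}]\) spans \(\code(\mat{A})\).

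\emph{Case \(t \geq 3\).} If the plane found by \(\FindPlane\) is already hyperbolic, we return as before. Otherwise we must show that the conic step produces a hyperbolic plane. The key claim is that \(\vec{v} = {\mat{T}'}^\transpose\vec{c}\) is a nonzero self-orthogonal vector of \(\code(\mat{T}')\). This will be the main obstacle: we must show \(\variety_F\) is nonempty and that \(\vec{c}\neq 0\).

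For \(\vec{c} \neq 0\), note that \(c_1 \neq 0\) because \(c_i \in \FF_q^\star\) (when \(t \geq 3\) there is at least one such coordinate). Here I index \(f\) over the \(t\) rows of \(\mat{T}'\), reading \(k\) in the figure as \(t\).

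For nonemptiness, \(F(x,y)\) is a quadratic in \(x,y\) whose quadratic part is the Gram form of the last two rows, a regular plane. Its discriminant is therefore \(D = -4\det(\mat{G}_{\text{plane}}) \neq 0\), so \(F\) is a smooth conic. Because the plane is not hyperbolic, \(\chi_q(-\det)=-1\), hence \(\chi_q(D) = -1\). \cref{prop:conic} then gives \(\lvert\variety_F\rvert \in \{1, q+1\}\), which is nonempty in both cases. Points can be sampled efficiently by completing the square and enumerating the parametrisation of the conic, including a square root, which accounts for the \(\log^2(q)\) term.

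Next, since \(V_{\mat{T}'}\) is regular and \(\vec{c}\neq 0\), \(\vec{v}\) is nonzero and pairs nontrivially with \(\code(\mat{T}')\). So \(\vec{y}\) with \(\vec{c}^\transpose\mat{T}'{\mat{T}'}^\transpose\vec{y}=1\) exists and is found by solving a linear system. A direct computation using \(\vec{c}^\transpose \mat{G}\vec{c}=0\) gives
\[
\vec{z}^\transpose\mat{G}\vec{z} = \vec{y}^\transpose\mat{G}\vec{y} - \vec{y}^\transpose\mat{G}\vec{y}\cdot 1 = 0
\quad\text{and}\quad
\vec{c}^\transpose\mat{G}\vec{z}=1 .
\]
Thus \(\vec{v},\vec{w}\) have Gram matrix \(\left[\begin{smallmatrix}0&1\\1&0\end{smallmatrix}\right]\), the hyperbolic plane, with determinant \(-1\) and square class \(\chi_q(-1)\). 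These two vectors are linearly independent because the Gram matrix is nonsingular, so we can extend them to a basis of \(\code(\mat{T}')\) by standard row reduction. Appending \(\mat{H}\) and permuting preserves \(\code(\mat{A})\), so the algorithm succeeds.

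\emph{Running time.} \(\Dehull\), \(\FindPlane\), computing the Gram matrix, the linear solve and the basis change each cost \(\bigO{n^\omega}\). Evaluating the square class and sampling the conic point cost \(\bigO{\log^2(q)}\) using the residue test and square roots from \cite{Cohen1993}. Together these give the stated bound.
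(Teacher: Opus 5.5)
Your proof is correct and takes the same route as the paper. You take an LCD component and call \(\FindPlane\), and you abort exactly when \(t=2\) and \(\chi_\mat{T} \neq \chi_q(-1)\). Otherwise you take a point on the smooth conic, with \(c_1 \neq 0\) forcing \(\vec{v} \neq \vec{0}\), and apply the projection \(\vec{z}\) to get a pair with Gram matrix \(\left[\begin{smallmatrix}0&1\\1&0\end{smallmatrix}\right]\). Your case split and the explicit checks simply spell out what the paper leaves implicit. The non-residue computation for \(D\) is valid but not needed, since every smooth affine conic is non-empty by \cref{prop:conic}.

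The one soft spot is your justification of the \(\log^2(q)\) term. A square root in \(\FF_q\) does not cost \(\bigO{\log^2(q)}\), let alone enumerating the conic, and the paper's own analysis of \(\SampleSelfOrthogonal\) charges \(\bigO{\log^4(q)}\) per conic solution. The paper's proof of this lemma does not address the running time at all, so this is a looseness shared with the statement rather than a flaw in your correctness argument.
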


\begin{proof}
  The algorithm first finds any LCD component \(\mat{T}\) of the input \(\mat{A}\), runs \(\FindPlane\) and checks whether the obtained regular plane is already hyperbolic. Otherwise, we use the obtained regular plane to find any self-orthogonal vector in the span of \(\mat{T}\) that is not already in the hull, for which it suffices to sample non-zero coefficients \(\vec{c}[1],\dots,\vec{c}[t-2]\) and pick any solution of the obtained affine conic, which exists by \cref{prop:conic}. Once we have a self-orthogonal vector \(\vec{v} = \mat{T}^\transpose \vec{c}\) outside of the hull, constructing a hyperbolic plane is simple: pick any coefficient vector \(\vec{y}\) such that \(\vec{v}^\transpose \mat{T}^\transpose \vec{y} = 1\) using linear algebra, orthogonally project it using \(\vec{z} = \vec{y} - \frac{\vec{y}^\transpose \mat{T}\mat{T}^\transpose \vec{y}}{2} \vec{c}\) and compute \(\vec{w} = \mat{T}^\transpose \vec{z}\). It is immediate that both \(\vec{v}\) and \(\vec{w}\) are self-orthogonal, we have \(\inner{\vec{v}}{\vec{w}} = 1\), and so \([\vec{v} \mid \vec{w}]^\transpose\) generates a hyperbolic plane. Then simply compute a basis change of \(\mat{T}\) such that it starts with \(\vec{v}\) and \(\vec{w}\), append the missing hull rows and permute the basis appropriately.
\end{proof}

\subsubsection{Sampling Self-Orthogonal Vectors}

Now that we know how to find hyperbolic planes, we will use them to construct a family of smooth conics whose points correspond bijectively to elements of \(\variety_\code\) for a given linear code \(\code\). This will directly mirror how we construct a self-orthogonal vector in \cref{def:algo:findplane:hyper}, except that we use \cref{prop:conic} to weight the output of our algorithm appropriately to sample the uniform distribution on \(\variety_\code\) with rejection.

\begin{definition}[Algorithm \(\SampleSelfOrthogonal\)]\label{def:algo:sso}
  Let \(n, k, h, q \in \ZZ\) with \(0 \leq h < k \leq n - h\), \(h \leq k - 2\), and \(q\) odd prime power. Consider the algorithm in \cref{fig:algo:sso} that on input an \(h\)-hollow \(\mat{A} \in \FF_q^{k \times n}\) outputs a self-orthogonal vector from \(\variety_{\code(\mat{A})}\) or aborts with \(\bot\).
\end{definition}

\begin{figure}[t]
  \centering
  \begin{pchstack}[boxed]
    \procedure[mode=text]{$\SampleSelfOrthogonal(\mat{A})$}{
      \(\mat{A}' \coloneqq \FindPlaneH(\mat{A},k-1,k)\) \\
      \pcif \(\mat{A}' = \bot\)\pcthen \pcreturn \(\vec{v} \sample \hull(\mat{A})\) \pcelse \(\mat{A} \coloneqq \mat{A}'\) \\
      \(f \coloneqq \vec{x}^\transpose \mat{A} \mat{A}^\transpose \vec{x}\);\;
      \(M \coloneqq 2 q - 1\) \\
      \pcfor \(\_ \coloneqq 1,\dots,n\) \pcdo \\
      \t \([c_1,\dots,c_{k-2}]^\transpose \sample \FF_q^{k-2}\);\;
      \(F(x, y) \coloneqq f(c_1,\dots, c_{k-2}, x, y)\) \\
      \t \(\var{rs} \coloneqq \variety_F\);\;
      \(u \sample \Unif{[0,1]}\);\;
      \pcif \(u > \frac{\lvert \var{rs} \rvert}{M}\)\pcthen \pccontinue \\
      \t \((c_{k-1}, c_k) \sample \Unif{\var{rs}}\);\;
      \(\vec{x} \coloneqq [c_i]_{i=1}^k\);\;
      \pcreturn \(\vec{y} \coloneqq \mat{A}^\transpose \vec{x}\) \\
      \pcreturn \(\bot\)
    }
  \end{pchstack}
  \caption{The algorithm \(\SampleSelfOrthogonal\) (\(\mathsf{S}\)ample\(\mathsf{S}\)elf\(\mathsf{O}\)rthogonal) that, given a code generator matrix \(\mat{A} \in \FF_q^{k \times n}\), outputs a self-orthogonal vector in \(\code(\mat{A})\). The algorithm \(\FindPlaneH\) is in \protect\cref{fig:algo:findplane:hyper}.}
  \label{fig:algo:sso}
\end{figure}

\begin{lemma}\label{lem:algo:sso}
  Assume the notation of \cref{def:algo:sso} with both \(n,q = \poly[k]\). The algorithm \(\SampleSelfOrthogonal\) runs in time \(\poly[n,k,q]\), aborts with probability negligible in \(n\), and for any full-rank input \(\mat{A} \in \FF_q^{k \times n}\) its output distribution conditioned on not aborting is \(\Unif{\variety_{\code(\mat{A})}}\).
\end{lemma}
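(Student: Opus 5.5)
The plan is to show three things in turn: every conic $F$ that the loop of $\SampleSelfOrthogonal$ builds is smooth with square discriminant; rejection with bound $M = 2q-1$ then gives exactly the uniform distribution on $\variety_{f_\mat{A}}$, which transfers to $\variety_{\code(\mat{A})}$; and each iteration accepts with constant probability. Before that I treat the branch where $\FindPlaneH$ aborts. By \cref{lem:algo:findplane:hyper} this happens exactly when $\code(\mat{A})$ has hull co-dimension $2$ and $\chi_\mat{A} \neq \chi_q(-1)$. In that case an LCD component $\mat{T}$ spans a non-hyperbolic regular plane, which by \cref{prop:plane:hyper} is anisotropic. Write a vector of $\code(\mat{A})$ as $\vec{h} + \mat{T}^\transpose\vec{y}$ with $\vec{h}$ in the hull. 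Its self-inner product is $\vec{y}^\transpose \mat{T}\mat{T}^\transpose \vec{y}$, because $\vec{h}$ is orthogonal to the whole code. So it is self-orthogonal iff $\vec{y}=\vec{0}$, hence $\variety_{\code(\mat{A})} = \hull(\code(\mat{A}))$, and sampling uniformly from the hull is exactly the target distribution.

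In the main branch, $\FindPlaneH$ returns $\mat{A}$ with the same code and rows $k-1,k$ spanning a hyperbolic plane. Write $\mat{G} = \mat{A}\mat{A}^\transpose$. Fix the prefix $(c_1,\dots,c_{k-2})$. Then $F(x,y)$ is a conic whose quadratic part is given by the $2\times 2$ Gram block $\mat{G}_2$ of the plane, so its discriminant is $D = -4\det(\mat{G}_2)$. Hyperbolicity means $\chi_q(\det \mat{G}_2) = \chi_q(-1)$, so $\chi_q(D) = \chi_q(-1)^2 = 1$. Thus every such conic is smooth with square discriminant, independently of the prefix. By \cref{prop:conic}, $\lvert\variety_F\rvert \in \{q-1, 2q-1\}$, so $\lvert\variety_F\rvert/M \le 1$ and the rejection step is well defined.

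Uniformity then follows from a direct computation. In one iteration, a fixed root $\vec{x}\in\variety_{f_\mat{A}}$ is output with probability $q^{-(k-2)}\cdot \frac{\lvert\variety_F\rvert}{M}\cdot\frac{1}{\lvert\variety_F\rvert} = \frac{1}{q^{k-2}M}$, where $F$ is the conic determined by the prefix of $\vec{x}$. This does not depend on $\vec{x}$, so conditioned on acceptance the output is uniform on $\variety_{f_\mat{A}}$. Since $\mat{A}$ has full rank, $\vec{x}\mapsto\mat{A}^\transpose\vec{x}$ is a bijection $\variety_{f_\mat{A}}\to\variety_{\code(\mat{A})}$ by \cref{prop:so-variety} and \cref{def:code:variety}. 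Iterations are independent, so conditioned on not aborting the output is $\Unif{\variety_{\code(\mat{A})}}$.

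For the abort bound, each iteration accepts with probability at least $\frac{q-1}{2q-1}\ge \frac13$, using $q\ge 3$. The failure probability is therefore at most $(2/3)^n$, which is negligible in $n$.

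For running time, $\FindPlaneH$ costs $\bigO{n^\omega+\log^2 q}$ by \cref{lem:algo:findplane:hyper}. The step I expect to need most care is computing $\variety_F$ and sampling uniformly from it. I would do this by enumerating $x\in\FF_q$ and, for each $x$, solving the resulting quadratic (or, in the degenerate case, linear) equation in $y$ by a square-root computation in $\FF_q$. This lists all at most $2q-1$ points in $\poly[q]$ time, after which uniform selection is trivial. With $n,q=\poly[k]$ the whole algorithm runs in $\poly[n,k,q]$.
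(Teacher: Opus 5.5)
Your proposal is correct and follows essentially the same route as the paper's proof. Both arguments split on whether \(\FindPlaneH\) aborts, observe that all conics in the family share a square discriminant so that \(\lvert\variety_F\rvert\in\{q-1,2q-1\}\), compute the same per-vector output probability \(\frac{1}{q^{k-2}M}\), bound the abort probability geometrically over the \(n\) loop iterations, and get the running time by na\"ively enumerating conic points. You also make explicit two steps the paper only asserts: deriving \(\chi_q(D)=\chi_q(-1)^2=1\) from \(D=-4\det(\mat{G}_2)\), and checking that \(\variety_{\code(\mat{A})}=\hull(\code(\mat{A}))\) when \(\FindPlaneH\) aborts.
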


\begin{proof}
  The high-level proof idea is the following. The algorithm will first transform its input matrix such that its last two rows span a hyperbolic plane using \cref{lem:algo:findplane:hyper}. Let \(x, y\) be variables over \(\FF_q\). Every vector \(\vec{v} \in \FF_q^{k-2}\) defines a smooth conic \(f_\vec{v}(x, y) = \vec{x}^\transpose \mat{A} \mat{A}^\transpose \vec{x}\), where \(\vec{x} = [\vec{v} \Mid x \Mid y]\), whose points are in bijection with elements of \(\variety_{\code(\mat{A})}\) by \cref{prop:so-variety}. All conics in the family share the same discriminant independent of \(\mat{A}\), since we always pick a hyperbolic plane. Then by \cref{prop:conic} each of these conics has a non-empty zero set, however, they may be of two different possible sizes (for a fixed discriminant), thus we weight these points differently depending on which conic they lie on, allowing us to sample uniformly from \(\variety_{\code(\mat{A})}\) by simply sampling \(\vec{v} \sample \FF_q^{k-2}\) and sampling from the set of points on \(f_\vec{v}\) with potential rejection to correct the distribution. We discuss efficiency and correctness, failure probability, and uniformity of the output.

  \paragraph*{Efficiency and correctness}
  Assume the algorithm \(\SampleSelfOrthogonal\) returns a vector \(\vec{y}\), then \(\vec{y} \in \variety_\code\) by \cref{prop:so-variety}. Denote by \(\var{halt} = n\) the maximal number of loop repetitions. The algorithm first uses linear algebra (\cref{lem:algo:findplane}) to find a hyperbolic plane in the code generated by \(\mat{A}\) and compute the discriminant \(D\), then samples at most \(\var{halt} \cdot (k-2)\) elements from \(\FF_q\), solves \(\var{halt}\) conic equations over \(\FF_q\), samples once from a set with at most \(2 q - 1\) elements, and computes one multiplication of \(\mat{A}\) with the vector \([c_i]_{i=1}^k\). Computing the solutions to the conic equation na{\"\i}vely requires the computation of the Jacobi symbol, a square root~\cite[Sec.~1.5.1]{Cohen1993} and an inverse~\cite[Sec.~1.6]{Cohen1993}, each \(q\) times, together costing \(\bigO{q \log^4(q)}\). The worst-case cost of running the algorithm is therefore \(\bigO{n^\omega + \var{halt} k q \log^4(q) + k^2}\), and we may fold the term \(k^2\) into \(n^\omega\).

  \paragraph*{Failure probability}
  Since \(k-h \geq 2\), there either exists a hyperbolic plane in \(\code(\mat{A})\) by \cref{lem:hollow:plane} and we may find it efficiently using \cref{lem:algo:findplane:hyper}, or all self-orthogonal vectors lie in \(\hull(\code(\mat{A}))\). In the latter case, we return a uniformly random element of \(\hull(\code(\mat{A}))\). In the former case, assume without loss of generality that \(\inner{\mat{A}[k-1]}{\mat{A}[k-1]} \inner{\mat{A}[k]}{\mat{A}[k]} \neq \inner{\mat{A}[k-1]}{\mat{A}[k]}^2\). For every choice of \(c_1,\dots,c_{k-2}\), the function \(F(x,y)\) defines a smooth affine conic. Indeed, the coefficient \(a\) of \(x^2\) is \(\inner{\mat{A}[k-1]}{\mat{A}[k-1]}\), the coefficient \(b\) of \(x y\) is \(2 \inner{\mat{A}[k-1]}{\mat{A}[k]}\), the coefficient \(c\) of \(y^2\) is \(\inner{\mat{A}[k]}{\mat{A}[k]}\), hence the discriminant of \(F(x,y)\) is
  \begin{align*}
    D_\mat{A}
    = b^2 - 4 a c
    &= 4 \inner{\mat{A}[k-1]}{\mat{A}[k]}^2 \iftwocolumns{\\}{}
    \iftwocolumns{&}{}- 4 \inner{\mat{A}[k-1]}{\mat{A}[k-1]} \inner{\mat{A}[k]}{\mat{A}[k]},
  \end{align*}
  which is non-zero in \(\FF_q\) since \(q\) is odd. Recall that since we chose a hyperbolic plane, we have from \cref{prop:conic} that \(D_\mat{A} \in \chi_q^{-1}(1)\) and so the number of conic solutions is \(S \in \set{q-1,2 q - 1}\). The algorithm can only return \(\bot\) if for each of the \(\var{halt}\) repetitions the conic \(F(x,y)\) has \(q-1\) solutions and the algorithm sampled \(u \sample \Unif{[0,1]}\) such that \(u > \frac{q - 1}{2 q - 1}\) each time. Hence the failure probability is at most
  \[
    \left( \frac{q}{2 q - 1} \right)^\var{halt} = 2^{\log\left( \frac{q}{2 q - 1} \right) \cdot n} < 2^{-\frac{n}{2}}.
  \]

  \paragraph*{Uniform output distribution}
  To show this algorithm samples from \(\Unif{\variety_\code}\), assume again that it did not halt with \(\bot\). We must show that every self-orthogonal vector \(\vec{y} \in \variety_{\code}\) is sampled with the same probability. Since a smooth affine conic always has at least \(1\) solution, we have by \cref{def:code:variety} that \(\SampleSelfOrthogonal\) samples each vector in \(\variety_\code\) with non-zero probability. Let \(M = 2 q - 1\) be the maximum number of solutions to the obtained conic, which is \(2 q - 1\) since \(D_\mat{A} \in \chi_q^{-1}(1)\). Regardless of the number of solutions \(S\) to the obtained conic, the probability of each vector being sampled is
  \begin{align*}
    \condprob{\vec{y} \from \SampleSelfOrthogonal(\mat{A})}{\vec{y} \neq \bot}
    &= \frac{1}{q^{k-2}} \frac{1}{S} \probsub{u \sample [0,1]}{u \leq \frac{S}{M}} \iftwocolumns{\\}{}
    \iftwocolumns{&}{}= \frac{1}{q^{k-2}} \frac{1}{M}.
  \end{align*}
  Since \(M\) is fixed at the beginning of execution by the parametrisation of \(\variety_\code\) as a family of conics, the probability of each self-orthogonal vector being sampled must be the same, that is, \(\SampleSelfOrthogonal\) conditioned on not aborting samples from \(\Unif{\variety_\code}\).
\end{proof}

\begin{remark}
  For each of the cases in \cref{prop:conic} it is possible to parametrise the solution set of the conic instead of na{\"\i}vely iterating through it. The complexity of \(\SampleSelfOrthogonal\) is then polynomial in \(\log(q)\) instead of in \(q\) and we can relax our previous requirement that \(q = \poly[k]\). The same holds for \(\SampleCode\) in \cref{def:algo:samplecode}. We do not address this further in this thesis.
\end{remark}

\subsubsection{Sampling Hollow Codes}

We finally assemble the above algorithms to sample linear code generator matrices from \(\mathcal{X}_{n,k,q}^{\chi,h}\). We first sample an LCD code of the appropriate dimension and square class, then use the above algorithm iteratively on the dual code to construct a hull of the desired dimension. Uniformity is due to the following statement.

\begin{lemma}\label{lem:hollow:unif}
  Let \(n,k,h,q,\chi \in \ZZ\) with \(0 \leq h \leq k < n - h\), \(\chi \in \set{\pm 1}\) and \(q\) odd prime power. Suppose \(\mat{A} \sample \dist_{n,k,q}^{\chi,h}\), \(\vec{y} \sample \variety_{\code(\mat{A})^\perp} \setminus \hull(\code(\mat{A}))\) and \(\mat{U} \sample \GL_{k+1}(\FF_q)\), then \(\mat{A}' = \mat{U} \cdot [\mat{A} \Mid \vec{y}^\transpose]\) is distributed according to \(\dist_{n,k+1,q}^{\chi,h+1}\).
\end{lemma}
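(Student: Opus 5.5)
The plan is a fibre-counting argument. I will show that the map \((\code, \vec{y}) \mapsto \code' = \code + \Span(\vec{y})\) lands in \(\mathcal{X}_{n,k+1,q}^{\chi,h+1}\), and that the pair \((\code, \vec{y})\) is uniform on its domain. I will then show that every \(\code'\) in the target has the same number of preimages. The random \(\mat{U} \sample \GL_{k+1}(\FF_q)\) turns a uniform code into a uniform generator matrix, so it suffices to prove that \(\code(\mat{A}')\) is uniform on \(\mathcal{X}_{n,k+1,q}^{\chi,h+1}\).

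\textbf{Step 1 (correct type).} Write \(\code = \code(\mat{A})\) and \(H = \hull(\code)\). Since \(\vec{y} \in \code^\perp \setminus H\) and \(\code \cap \code^\perp = H\), we get \(\vec{y} \notin \code\), so \(\dim \code' = k+1\). The vector \(\vec{y}\) is orthogonal to \(\code\) and to itself, so \(H + \Span(\vec{y}) \subseteq \hull(\code')\). Conversely, take \(\vec{c} + a\vec{y} \in \code'^\perp\) with \(\vec{c} \in \code\). Pairing it with \(\code\) gives \(\vec{c} \in \code^\perp\), so \(\vec{c} \in H\). Hence \(\hull(\code') = H \oplus \Span(\vec{y})\), which has dimension \(h+1\). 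Any LCD component \(\mat{T}\) of \(\code\) still spans \(\code'\) modulo \(\hull(\code')\), so it is an LCD component of \(\code'\) and \(\chi_{\code'} = \chi_{\code} = \chi\).

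\textbf{Step 2 (uniform pair).} The pair \((\code, \vec{y})\) has probability \(\lvert \mathcal{X}_{n,k,q}^{\chi,h} \rvert^{-1} N(\code)^{-1}\), where \(N(\code) = \lvert \variety_{\code^\perp} \setminus \hull(\code) \rvert\). I need \(N\) to be constant on \(\mathcal{X}_{n,k,q}^{\chi,h}\). Two codes of the same type have isometric quadratic spaces \(V_{\mat{A}}\): each is a radical of dimension \(h\) plus a regular part whose dimension and discriminant class are fixed, and these determine the isometry class over odd \(\FF_q\) by \cite[Sec.~III.6]{Artin1957}. The ambient space \(\FF_q^n\) with the Euclidean form is non-degenerate, so Witt's extension theorem extends such an isometry to some \(g \in \Ort_n(\FF_q)\) with \(g\code_1 = \code_2\). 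This \(g\) maps \(\code_1^\perp\) to \(\code_2^\perp\) and the hull to the hull, and preserves self-orthogonality. Hence \(N(\code_1) = N(\code_2)\), and the pair is uniform on its (finite) domain.

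\textbf{Step 3 (constant fibres).} This is the main step. Fix \(\code' \in \mathcal{X}_{n,k+1,q}^{\chi,h+1}\) with hull \(H'\). I claim the preimages are exactly the pairs \((\code, \vec{y})\) in which \(\code\) is a hyperplane of \(\code'\) with \(H' \not\subseteq \code\) and \(\vec{y} \in H' \setminus \code\).

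For necessity, Step 1 gives \(\vec{y} \in \hull(\code') = H'\) and \(\vec{y} \notin \code\), so \(H' \not\subseteq \code\).

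For sufficiency, \(\vec{y} \in H'\) is orthogonal to all of \(\code'\) and to itself, so \(\vec{y} \in \variety_{\code^\perp}\), and \(\code + \Span(\vec{y}) = \code'\). It remains to check that \(\code\) has type \((h, \chi)\) and that \(\vec{y} \notin \hull(\code)\); the latter is immediate from \(\vec{y} \notin \code\). For the hull dimension, \(\code \cap H'\) is an \(h\)-dimensional subspace of \(\hull(\code)\). On the other hand, Step 1 applied to the pair shows \(\hull(\code) \oplus \Span(\vec{y}) = H'\), so \(\dim \hull(\code) = h\). That computation only used \(\vec{y} \in \code^\perp \setminus \code\), so it applies here. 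The square class of \(\code\) agrees with that of \(\code'\) by the LCD-component argument of Step 1.

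The fibre size is then \(\bigl(\tfrac{q^{k+1}-1}{q-1} - \tfrac{q^{k-h}-1}{q-1}\bigr)\) choices of hyperplane, times \(q^{h+1} - q^{h}\) choices of \(\vec{y}\). This depends only on \(k, h, q\).

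Combining the three steps, \(\code(\mat{A}')\) is uniform on \(\mathcal{X}_{n,k+1,q}^{\chi,h+1}\). Multiplying by \(\mat{U}\) then gives \(\dist_{n,k+1,q}^{\chi,h+1}\).

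I expect the main obstacle to be Step 2, namely justifying that \(N(\code)\) is constant via Witt's theorem for degenerate subspaces. If that route is problematic, the fallback is to count \(N(\code)\) directly. Here \(\code^\perp\) has hull \(H\) and a regular complement of fixed type, so the number of its isotropic vectors depends only on that type.
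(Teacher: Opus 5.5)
Your proof is correct, but it obtains the two counting inputs differently from the paper. The paper takes both from \cite{ITIT:LiShiLin25}:
\begin{itemize}
\item the size \(q^h M_{\chi\chi_q(-1)^h}(n,n-k-h,q)\) of \(\variety_{\code^\perp}\setminus\hull(\code)\) comes from explicit counts of isotropic vectors, which is essentially your fallback;
\item the fibre size \((q^{h+1}-1)q^k\) is quoted from \cite[Prop.~2]{ITIT:LiShiLin25};
\item the resulting probability is then matched against the mass formula \cite[Thm.~5]{ITIT:LiShiLin25}.
\end{itemize}
You instead prove that \(N(\code)\) is constant structurally. You use the classification of regular forms over \(\FF_q\) by dimension and discriminant, together with Witt's extension theorem for arbitrary subspaces of the non-degenerate space \(\FF_q^n\). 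You also derive the fibre yourself. Your count \(q^{k-h}\frac{q^{h+1}-1}{q-1}\cdot q^h(q-1) = q^k(q^{h+1}-1)\) reproduces the paper's factor, and makes clear that it counts pairs \((\code,\vec{y})\), not codes.

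Your observation removes the need for the mass formula: a probability that is constant on a support equal to all of \(\mathcal{X}_{n,k+1,q}^{\chi,h+1}\) is automatically uniform. Read backwards, your argument proves the recursion \(\lvert\mathcal{X}_{n,k+1,q}^{\chi,h+1}\rvert \cdot q^k(q^{h+1}-1) = \lvert\mathcal{X}_{n,k,q}^{\chi,h}\rvert \cdot N\), which the paper instead invokes at step \(\bullet\). The paper's route buys explicit values, for example the criterion for \(N=0\). In your argument that case is handled implicitly: a non-empty target has a non-empty fibre, so \(N>0\) by constancy.

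Your Step 2 in fact gives more than you use. It shows that \(\Ort_n(\FF_q)\) acts transitively on every type class. The sampling of \((\code,\vec{y})\) is \(\Ort_n(\FF_q)\)-equivariant, so the law of \(\code'\) is \(\Ort_n(\FF_q)\)-invariant. It is therefore uniform on the single orbit \(\mathcal{X}_{n,k+1,q}^{\chi,h+1}\), once \(N>0\). So Steps 1 and 2 alone would suffice.

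One small imprecision: when you re-use Step 1 in the sufficiency direction of Step 3, you also need \(\inner{\vec{y}}{\vec{y}} = 0\), not just \(\vec{y}\in\code^\perp\setminus\code\). This does hold, since \(\vec{y}\in H'\).
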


\begin{proof}
  The proof is essentially one recursive step in the proof of~\cite[Thm.~5]{ITIT:LiShiLin25}. By~\cite[Thm.~6.26]{LidlNiederreiter1997} the number of self-orthogonal vectors in \(\code(\mat{A})^\perp\) does not depend on \(\mat{A}\) but only on the parameters \(n,k,h,q,\chi\). Indeed, as derived in~\cite[Prop.~6~\&~7]{ITIT:LiShiLin25}, an LCD code of square class \(\pm 1\) contains \(M_\pm(n,k,q) = q^{k-1} - 1\) non-zero vectors if \(k\) is odd and \(M_\pm(n,k,q) = \left(q^\frac{k}{2} \mp \chi_q(-1)^\frac{k}{2}\right) \cdot \left(q^{\frac{k}{2}-1} \pm \chi_q(-1)^\frac{k}{2}\right)\) if \(k\) is even (and \(M_\pm(n,k,q) = 0\) if \(k=0\)). Thus an \([n,k]\)-linear \(h\)-hollow code \(\code\) over \(\FF_q\) of square class \(\chi\) has \(q^h M_\chi(n,k-h,q)\) vectors in \(\variety_\code \setminus \hull(\code)\), indeed \(\lvert \hull(\code) \rvert\) many for each non-zero self-orthogonal vector of its LCD component.

  The square class of \(\code(\mat{A})^\perp\), denoted by \(\chi_{\mat{A}^\perp}\), is \(\chi_{\mat{A}^\perp} = \chi \chi_q(-1)^h\) by~\cite[Prop.~5]{ITIT:LiShiLin25}, thus from \(\hull(\code(\mat{A})) = \hull(\code(\mat{A})^\perp)\) we get \(\lvert \variety_{\code(\mat{A})^\perp} \setminus \hull(\code(\mat{A})) \rvert = q^h M_{\chi \chi_q(-1)^h}(n,n-k-h,q)\) independently of \(\mat{A}\). This number is \(0\) if and only if the set \(\mathcal{X}_{n,k+1,q}^{\chi,h+1}\) is empty (see~\cite[Rem.~3]{localinv}), so our statement is vacuously true -- there is no \(\vec{y}\) to sample from \(\variety_{\code(\mat{A})^\perp} \setminus \hull(\code(\mat{A}))\) and there is no \(\dist_{n,k+1,q}^{\chi,h+1}\). Otherwise, the code \(\code([\mat{A} \Mid \vec{y}^\transpose])\) (as any \(\code\) of type \((h+1, \chi)\)) is by~\cite[Prop.~2]{ITIT:LiShiLin25} obtained from \((q^{h+1} - 1) \cdot q^{k}\) different \([n,k]\)-linear \(h\)-hollow codes over \(\FF_q\) with square class \(\chi\) by appending a self-orthogonal vector from its dual. Thus for any \(h\)-hollow code \(\code\) of square class \(\chi\) we have
  \begin{align*}
    \iftwocolumns{&}{}\probsub{\mat{A} \sample \dist_{n,k,q}^{\chi,h},\vec{y} \sample \variety_{\code(\mat{A})^\perp}\setminus\hull(\code(\mat{A}))}{\code = \code([\mat{A} \Mid \vec{y}^\transpose])} \iftwocolumns{\\}{}
    &= \prob{\mat{A} \sample \dist_{n,k,q}^{\chi,h}} \frac{(q^{h+1} - 1) \cdot q^k}{q^h M_{\chi \chi_q(-1)^h}(n,n-k-h,q)} \\
    &= \frac{1}{\left\lvert \mathcal{X}_{n,k,q}^{\chi,h} \right\rvert} \frac{(q^{h+1}-1) \cdot q^{k-h}}{M_{\chi \chi_q(-1)^h}(n,n-k-h,q)}
    \overset{\bullet}{=} \frac{1}{\left\lvert \mathcal{X}_{n,k+1,q}^{\chi,h+1} \right\rvert},
  \end{align*}
  where for \(\bullet\) we use~\cite[Thm.~5]{ITIT:LiShiLin25}, so the distribution of \(\code([\mat{A} \Mid \vec{y}^\transpose])\) is indeed that of a uniformly random \([n,k+1]\)-linear \((h+1)\)-hollow code over \(\FF_q\) with square class \(\chi\). Multiplying by \(\mat{U} \sample \GL_{k+1}(\FF_q)\) then translates this statement to generator matrices, that is, \(\mat{U} \cdot [\mat{A} \Mid \vec{y}^\transpose]\) is distributed according to \(\dist_{n,k+1,q}^{\chi,h+1}\).
\end{proof}

The algorithm we give will (contrary to the claims of~\cite{EC:AlbBenLai25}) sample from \(\dist_{n,k,q}^{\chi,h}\), where \(\chi = 1\) whenever \(h = k\), for any rate and hull dimension. Indeed the only exception is the self-dual parameter regime of \(k = \frac{n}{2}\), \(h = k\) and thus \(\chi = 1\), where according to~\cite{ISIT:Massey98} self-dual codes exist if and only if \(\chi_q(-1)^k = 1\). Our algorithm will naturally arrive at this edge-case and successfully finish sampling with overwhelming probability if and only if this criterion is satisfied. Hence our algorithm implies, to the best of our knowledge, the first algorithm that samples uniformly random self-dual codes.

\begin{definition}[Algorithm \(\SampleCode\)]\label{def:algo:samplecode}
  Let \(n, k, h, q, \chi \in \ZZ\) with \(0 < k \leq \frac{n}{2}\), \(0 \leq h \leq k\), \(\chi \in \set{\pm 1}\), where necessarily \(\chi = 1\) when \(h = k\), and \(q\) odd prime power. Consider the algorithm in \cref{fig:algo:samplecode} that on input \(n, k, h, q, \chi\) outputs an \(h\)-hollow matrix \(\mat{A} \in \FF_q^{k \times n}\) with \(\chi_\mat{A} = \chi\) or aborts with \(\bot\).
\end{definition}

\begin{figure}[t]
  \centering
  \begin{pchstack}[boxed]
    \procedure[mode=text]{$\SampleCode(n,k,h,q,\chi)$}{
      \(\mat{A}_{(0)} \coloneqq \bot\);\;
      \pcfor \(i \coloneqq 1,\dots,k\) \pcdo \\
      \t \(\mat{A}_{(0)} \from \SampleLCD(n,k-h,q,\chi)\) \\
      \t \pcif \(\mat{A}_{(0)} \neq \bot\)\pcthen \pcbreak \pcelse \\
      \t\t \pcif \(i = k\)\pcthen \pcreturn \(\bot\) \pcelse \pccontinue \\
      \pcfor \(i \coloneqq 1,\dots,h\) \pcdo \\
      \t \(\mat{G} \coloneqq \Dual(\mat{A}_{(i-1)})\);\;
      \pcfor \(j \coloneqq 1,\dots,k\) \pcdo \\
      \t\t \(\vec{y}_{i} \from \SampleSelfOrthogonal(\mat{G})\);\;
      \pcif \(\vec{y}_i = \bot\) \textbf{or} \(\vec{y}_{i} \in \hull(\code(\mat{A}_{(i-1)}))\)\pcthen \\
      \t\t\t \pcif \(j = k\)\pcthen \pcreturn \(\bot\) \pcelse \pccontinue \\
      \t\t \(\mat{A}_{(i)} \coloneqq [\mat{A}_{(i-1)} \Mid \vec{y}_{i}^\transpose]\);\;
      \pcbreak \\
      \(\mat{U} \sample \GL_k(\FF_q)\);\;
      \(\mat{A} \coloneqq \mat{U} \mat{A}_{(h)}\);\;
      \pcreturn \(\mat{A}\)
    }
  \end{pchstack}
  \caption{The algorithm \(\SampleCode\). The algorithm \(\SampleLCD\) is in \protect\cref{fig:algo:samplelcd}, the algorithm \(\SampleSelfOrthogonal\) is in \protect\cref{fig:algo:sso}, the algorithms \(\Dual\) and \(\Hull\) are in \protect\cref{fig:algo:dual,fig:algo:hull}.}
  \label{fig:algo:samplecode}
\end{figure}

\begin{theorem}\label{thm:algo:samplecode}
  Assume the notation of \cref{def:algo:samplecode} with both \(n,q = \poly[k]\). The algorithm \(\SampleCode\) runs in time \(\poly[n,k,q]\), it succeeds with probability \(1 - \negl[k]\) unless \(k = \frac{n}{2}\), \(h=k\) and \(\chi_q(-1)^k \neq 1\), in which case it always aborts, and its output distribution conditioned on not aborting is \(\dist_{n,k,q}^{\chi,h}\).
\end{theorem}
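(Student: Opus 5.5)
The proof is an induction on the hull-building loop, treating efficiency, failure probability and output distribution in turn.

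\emph{Efficiency.} The first loop makes at most \(k\) calls to \(\SampleLCD\). By \cref{lem:algo:samplelcd} each call costs \(\bigO{k(n^\omega+\log^2 q)}\). The second loop makes at most \(hk\) iterations. Each iteration costs one \(\Dual\), one \(\Hull\), one membership test (all linear algebra) and one call to \(\SampleSelfOrthogonal\), which is \(\poly[n,k,q]\) by \cref{lem:algo:sso}. The final step samples \(\mat{U}\sample\GL_k(\FF_q)\), done by rejection sampling. Altogether the running time is \(\poly[n,k,q]\).

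\emph{Failure probability.} The first loop fails only if all \(k\) independent calls to \(\SampleLCD\) abort. By \cref{lem:algo:samplelcd} each call aborts with negligible probability, so this event is negligible in \(k\).

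For round \(i\) of the second loop, write \(\code_{i-1}=\code(\mat{A}_{(i-1)})\), an \([n,k-h+i-1]\) code of type \((i-1,\chi)\). Its dual \(\mat{G}\) has dimension \(n-k+h-i+1\) and hull co-dimension \(n-k+h-2i+2\). Since \(k\le n/2\) and \(i\le h\le k\), this co-dimension is at least \(2\), so \(\SampleSelfOrthogonal(\mat{G})\) is well defined. I need to lower-bound the probability that one attempt returns a vector outside \(\hull(\code_{i-1})\).

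By \cref{lem:algo:sso}, the output conditioned on not aborting is uniform on \(\variety_{\code_{i-1}^\perp}\). The counting in the proof of \cref{lem:hollow:unif} gives the size of the good set:
\[
\lvert\variety_{\code_{i-1}^\perp}\setminus\hull(\code_{i-1})\rvert=q^{i-1}M_{\chi'}(n,\,n-k+h-2i+2,\,q),\qquad \chi'=\chi\chi_q(-1)^{i-1}.
\]
Compared against \(\lvert\variety\rvert\le q^{n-k+h-i}\), this is a constant fraction (roughly \(1-O(1/q)\)) whenever it is non-zero. Hence \(k\) repetitions fail with probability at most \(c^k\) for some \(c<1\), plus the negligible abort probability of \(\SampleSelfOrthogonal\). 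A union bound over the \(h\le k\) rounds keeps the total failure probability negligible.

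The bound degenerates only when \(M_{\chi'}=0\). By \cite[Rem.~3]{localinv} this happens iff \(\mathcal{X}^{\chi,i}\) is empty. Given \(k\le n/2\), I expect to check explicitly from the formula for \(M_\pm\) that this occurs only at \(i=h=k=n/2\) with \(\chi_q(-1)^k\ne1\). There the LCD component of the dual has dimension \(2\), with square class \(\chi_q(-1)^{k-1}\ne\chi_q(-1)\). By \cref{prop:plane:hyper} such a plane is anisotropic, so every sampled vector lies in the hull and the algorithm always aborts. This matches \cite{ISIT:Massey98}.

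\emph{Distribution.} Conditioned on no abort, \(\mat{A}_{(0)}\sim\dist^{\chi,0}_{n,k-h,q}\) by \cref{lem:algo:samplelcd}. In each round, the accepted \(\vec{y}_i\) is uniform on \(\variety_{\code_{i-1}^\perp}\setminus\hull(\code_{i-1})\), because conditioning a uniform distribution on the good set gives the uniform distribution on that set. Conditioning on eventual success over the retries does not bias anything, since every accepted sample has this same conditional law.

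To apply \cref{lem:hollow:unif} I need \(\GL\)-invariance. Since \(\dist\) is defined on generator matrices and the code is unchanged by a basis change, I can insert a virtual uniform \(\mat{U}_i\) after each round without changing the code. Induction on \(i\) via \cref{lem:hollow:unif} then shows that \(\code(\mat{A}_{(h)})\) is uniform on \(\mathcal{X}^{\chi,h}_{n,k,q}\). The final multiplication by \(\mat{U}\sample\GL_k(\FF_q)\) makes the generator matrix distributed as \(\dist^{\chi,h}_{n,k,q}\).

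\emph{Main obstacle.} The hardest step is the uniform lower bound on the good-set fraction \(q^{i-1}M_{\chi'}/\lvert\variety\rvert\) across all parities, together with the precise characterisation of exactly when \(M_{\chi'}=0\). I would handle this by the same case split on the parity of the LCD-component dimension as in \cref{lem:prob-hull}.
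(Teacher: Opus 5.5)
Your proposal is correct. It matches the paper on efficiency, correctness and the output distribution; your virtual basis changes \(\mat{U}_i\) are just a more careful version of the paper's ``up to re-randomisation by a random basis change, which we apply at the end''.

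\textbf{Where you differ: the failure probability.}

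The paper needs only a crude lower bound on \(\lvert \variety_{\code_{i-1}^\perp} \rvert\), obtained from Warning's theorem (\cref{lem:code:variety:nsols}). This bounds the rejection probability per attempt by \(q^{h-(n-k)} \le q^{-1}\). That estimate collapses to \(1\) exactly at the last round of the self-dual case. So the paper treats that round separately: the dual has hull co-dimension \(2\) and square class \(\chi_q(-1)^{k-1}\), hence is hyperbolic iff \(\chi_q(-1)^k = 1\), and then an isotropic line gives \(\lvert\variety\rvert \ge q \cdot q^{k-1}\).

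You instead use the exact count from the proof of \cref{lem:hollow:unif}.
\begin{itemize}
\item This treats every round, including the self-dual one, uniformly.
\item The abort criterion falls out directly as \(M_{\chi'} = 0\).
\item The cost is leaning on the heavier external counting results, plus a parity check you deferred. That check is routine. The LCD component of the dual has dimension \(t = n-k+h-2i+2 \ge 2\), with equality only at \(i=h=k=\frac{n}{2}\). For \(t \ge 3\) the formula for \(M_\pm\) is visibly positive, since both factors are at least \(q-1\) in the even case.
\end{itemize}

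\textbf{One step to fix.}

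The upper bound \(\lvert \variety_{\code_{i-1}^\perp}\rvert \le q^{n-k+h-i}\) is false when \(t\) is even and the LCD component is of plus type. For example, at the last self-dual round with \(\chi_q(-1)^k = 1\) one has \(\lvert\variety\rvert = q^{k-1}(2q-1) > q^{k}\). Comparing against it there would give a ``fraction'' \(\frac{2q-2}{q} > 1\).

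You do not need this bound. Use the exact identity \(\lvert\variety_{\code_{i-1}^\perp}\rvert = q^{i-1}(1 + M_{\chi'})\). Then each attempt lands in the hull with probability exactly \(\frac{1}{1+M_{\chi'}}\). Since \(t \ge 2\), whenever \(M_{\chi'} \neq 0\) we have \(M_{\chi'} \ge 2q-2\), so this probability is at most \(\frac{1}{2q-1}\). The rest of your argument then goes through unchanged.
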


\begin{proof}
  We discuss efficiency and correctness, failure probability and uniformity of the output distribution.

  \paragraph*{Efficiency and correctness}
  First assume the algorithm \(\SampleCode\) successfully outputs a matrix \(\mat{A}\). If \(h < k\) then the algorithm outputs a full-rank \(k \times n\) matrix over \(\FF_q\), since we start with a full-rank \((k-h) \times n\) matrix and append \(h\) rows that increment the rank, otherwise we return \(\bot\). The starting matrix \(\mat{A}_{(0)} \in \FF_q^{(k-h) \times n}\) is \(0\)-hollow, otherwise we return \(\bot\), and each appended row \(\vec{y}_{i}^\transpose\) increments the hull dimension. Indeed, \(\vec{y}_{i}\) is by \cref{lem:algo:sso} orthogonal to itself and every vector in the code \(\code(\mat{A}_{(i-1)})\), since it belongs to the dual code of \(\code(\mat{A}_{(i-1)})\). Since we exclude vectors that are in \(\hull(\code(\mat{A}_{(i-1)})) \subseteq \variety_{\code(\mat{A}_{(i-1)})^\perp}\), the vector \(\vec{y}_i\) is not already in \(\code(\mat{A}_{(i-1)})\), thus adding it increments the rank. Clearly \(\vec{y}_{i}\) is also in the hull of \(\mat{A}_{(i)}\), hence also the hull dimension increments. Thus \(\mat{A} \in \FF_q^{k \times n}\) is \(h\)-hollow. Since \(\mat{A}_{(0)}\) is an LCD component of \(\code(\mat{A})\), we have that \(\chi_\mat{A} = \chi_{\mat{A}_{(0)}} = \chi\), thus \(\code(\mat{A}) \in \mathcal{X}_{n,k,q}^{\chi,h}\). The argument for \(h=k\) is identical by remarking that the zero code \(\set{\vec{0}^n}\) is LCD as its dual is \(\FF_q^n\) and spanned by \(\mat{I}_n\). Hence in the first iteration we have \(\mat{G} = \mat{I}_n\) and we sample a uniformly random self-orthogonal vector \(\vec{y}_1\) from \(\FF_q^n\). The condition that \(\vec{y}_i\) is not in the hull of \(\mat{A}_{(i-1)}\) is in this case equivalent to linear independence, that is, the rank increases at every iteration and we end up with an \([n,k]\)-linear self-orthogonal code. Every step of the algorithm is efficient by \cref{lem:algo:samplelcd,lem:algo:sso}.

  \paragraph*{Failure probability}
  The algorithm \(\SampleLCD\) on input \(n,k-h,q,\chi\) always succeeds if \(h=k\) and otherwise succeeds with probability \(1 - \negl[k-h]\) and outputs a (uniformly random) \(0\)-hollow matrix \(\mat{A}_{(0)} \in \FF_q^{(k-h) \times n}\) with \(\chi_{\mat{A}_{(0)}} = \chi\) by \cref{lem:algo:samplelcd}. Observe that by repeating (at most) \(k\) times, the success probability rises to \(1 - \negl[(k-h) \cdot k] = 1 - \negl[k]\) by the same argument as in \cref{lem:algo:samplelcd}.

  Next, \(\SampleSelfOrthogonal\) returns a uniformly random self-orthogonal vector \(\vec{y}_{i}\) in \(\code(\mat{A}_{(i-1)})^\perp\) except with negligible probability in \(n\) by \cref{lem:algo:sso}. If \(\vec{y}_{i}\) is already in \(\code(\mat{A}_{(i-1)})\) then it must by construction be in \(\hull(\code(\mat{A}_{(i-1)}))\) and we reject it as it does not produce a new linear code. Using \cref{lem:code:variety:nsols} for \(\bullet\), we have that
  \begin{align*}
    \iftwocolumns{&}{}\probsub{\vec{y}_i \sample \variety_{\code(\mat{A}_{(i-1)})^\perp}}{\vec{y}_i \in \hull(\code(\mat{A}_{(i-1)}))} \iftwocolumns{\\}{}
    &= \frac{\left\lvert \hull(\code(\mat{A}_{(i-1)})) \right\rvert}{\left\lvert \variety_{\code(\mat{A}_{(i-1)})^\perp} \right\rvert}
    \overset{\bullet}{\leq} \frac{q^{i-1}}{q^{n - (k-h+i-1) - 2}} \\
    &= q^{2 i - (n - (k-h))}
    \leq q^{h - (n-k)}.
  \end{align*}
  This leads us to the following two cases.

  If it is not the case that \(k = \frac{n}{2}\) and \(h = k\), then the above is further bounded from above by \(q^{-1}\) and after repeating at most \(k\) times by \(q^{-k}\) using the union bound, which is negligible in \(k\). The probability that we succeed in sampling the hull generators \(\vec{y}_1,\dots,\vec{y}_h\) is then bounded from below by the Weierstrass product inequality as \((1 - q^{-k})^h \geq 1 - \frac{h}{q^{k}} = 1 - \negl[k]\), since we need to be successful all \(h\) times, otherwise we abort with \(\bot\). Thus, the probability that the algorithm successfully outputs a code generator matrix (rather than \(\bot\)) is overwhelming in \(k\).

  The one remaining case is that of \emph{self-dual codes}, that is \(k = \frac{n}{2}\) and \(h = k\), where the above estimate collapses. Until the last iteration, the argument is identical to the former case; we must show that the last vector \(\vec{y}_k\) can be chosen with overwhelming probability. In the last iteration we have a \([2 k, k-1]\)-linear self-orthogonal code spanned by \(\mat{A}_{(k-1)}\) and compute a generator \(\mat{G}\) of its dual \([2 k, k+1]\)-linear code. Since \(\mat{G}\) is \((k-1)\)-hollow, its hull co-dimension is precisely \(2\), thus we may run \(\SampleSelfOrthogonal\) on it. By~\cite[Prop.~5]{ITIT:CMTQ19}. we have that \(\chi_\mat{G} = \chi_q(-1)^{k-1}\), hence by \cref{prop:plane:hyper} the code \(\code(\mat{G})\) contains self-orthogonal vectors not already in \(\hull(\code(\mat{A}_{(k-1)}))\) if and only if it is a hyperbolic plane with \(\chi_\mat{G} = \chi_q(-1)\), that is if and only if \(\chi_q(-1)^k = 1\). Recall that this is precisely when self-dual codes exist, thus we rightly always abort with \(\bot\) when they do not. Every self-orthogonal vector in \(\code(\mat{A}_{(k-1)})^\perp\) is of the form \(\vec{y} = \vec{h} + \vec{t}\) with \(\vec{h} \in \hull(\code(\mat{A}_{(k-1)})^\perp)\) and \(\vec{t}\) a self-orthogonal vector of the chosen hyperbolic plane. By \cref{prop:plane:hyper}, the chosen regular plane contains a non-zero self-orthogonal \(\vec{t}\), hence it contains the line spanned by \(\vec{t}\), which itself contains at least \(q\) self-orthogonal vectors. Thus
  \begin{align*}
    &\probsub{\vec{y}_k \sample \variety_{\code(\mat{A}_{(k-1)})^\perp}}{\vec{y}_k \in \hull(\code(\mat{A}_{(k-1)}))} \iftwocolumns{\\}{}
    \iftwocolumns{&}{}= \frac{\left\lvert \hull(\code(\mat{A}_{(k-1)})) \right\rvert}{\left\lvert \variety_{\code(\mat{A}_{(k-1)})^\perp} \right\rvert}
    \leq \frac{q^{k-1}}{q q^{k-1}}
    \leq q^{-1}.
  \end{align*}
  We can now repeat the argument from above: after at most \(k\) repetitions, the probability of \(\SampleSelfOrthogonal\) failing by sampling a known hull vector is bounded by \(q^{-k}\) from above, which is negligible in \(k\). Hence the probability that we succeed in sampling the self-orthogonal code generators \(\vec{y}_1,\dots,\vec{y}_k\) is bounded from below by the Weierstrass product inequality as \((1 - q^{-k})^k \geq 1 - \frac{k}{q^{k}} = 1 - \negl[k]\), since we need to be successful all \(k\) times, otherwise we abort with \(\bot\).

  \paragraph*{Uniform sampling}
  Conditioned on \(\SampleCode\) succeeding, we simply apply \cref{lem:hollow:unif} iteratively \(h\) times. Indeed, since \(\mat{A}_{(0)}\) is distributed according to \(\dist_{n,k-h,q}^{\chi,0}\) by \cref{lem:algo:samplelcd} (otherwise we abort), we have that \(\mat{A}_{(1)}\) is distributed according to \(\dist_{n,k-h+1,q}^{\chi,1}\) (up to re-randomisation by a random basis change, which we apply at the end). Further, \(\mat{A}_{(i)}\) is distributed according to \(\dist_{n,k-h+i,q}^{\chi,i}\), so the output \(\mat{A} = \mat{U} \mat{A}_{(h)}\) for \(\mat{U} \sample \GL_k(\FF_q)\) is distributed according to \(\dist_{n,k,q}^{\chi,h}\).
\end{proof}

Using the fact that square classes of random hollow codes behave essentially like random bits~\cite[Thm.~1]{localinv}, we directly obtain an algorithm whose output distribution conditioned on succeeding is negligibly close in \((k-h) \cdot \log(q)\) to \(\dist_{n,k,q}^{h}\).

\begin{definition}[Algorithm \(\SampleHollow\)]\label{def:algo:samplehollow}
  Let \(n,k,h,q \in \ZZ\) with \(0 < k \leq \frac{n}{2}\), \(0 \leq h \leq k\) and \(q\) odd prime power. Consider the algorithm in \cref{fig:algo:samplehollow} that on input \(n,k,h,q\) outputs an \(h\)-hollow matrix \(\mat{A} \in \FF_q^{k \times n}\) or aborts with \(\bot\).
\end{definition}

\begin{figure}[t]
  \centering
  \begin{pchstack}[boxed]
    \procedure[mode=text]{$\SampleHollow(n,k,h,q)$}{
      \pcif \(h = k\)\pcthen \pcreturn \(\mat{A} \from \SampleCode(n,k,h,q,1)\) \\
      \pcelse \pcreturn \(\mat{A} \from \SampleCode(n,k,h,q,\chi \sample \set{\pm 1})\)
    }
  \end{pchstack}
  \caption{The algorithm \(\SampleHollow\) that, given \(n,k,h,q \in \ZZ\), outputs an \(h\)-hollow matrix \(\mat{A} \in \FF_q^{k \times n}\). The algorithm \(\SampleCode\) is in \protect\cref{fig:algo:samplecode}.}
  \label{fig:algo:samplehollow}
\end{figure}

\begin{corollary}\label{cor:algo:samplehollow}
  Assume the notation of \cref{def:algo:samplehollow} with both \(n,q = \poly[k]\). The algorithm \(\SampleHollow\) runs in time \(\poly[n,k,q]\), succeeds with probability \(1 - \negl[k]\) unless \(k = \frac{n}{2}\), \(h=k\) and \(\chi_q(-1)^k \neq 1\) in which case it always aborts, and its output distribution conditioned on not aborting \(\dist\) satisfies \(\Delta(\dist, \dist_{n,k,q}^h) \leq q^{-\frac{k-h}{2}}\) unless \(h = k\) in which case \(\dist = \dist_{n,k,q}^k\).
\end{corollary}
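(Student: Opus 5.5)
The plan is to reduce everything to \cref{thm:algo:samplecode}, treating the cases \(h = k\) and \(h < k\) separately. In both cases, running time and the abort behaviour come directly from that theorem. The only new ingredient for the distributional claim is the randomisation of the square class.

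\emph{Running time and failure probability.} \(\SampleHollow\) makes a single call to \(\SampleCode\), plus possibly one uniform bit \(\chi \sample \set{\pm 1}\). Its running time is therefore \(\poly[n,k,q]\) by \cref{thm:algo:samplecode}. For \(h<k\), \cref{thm:algo:samplecode} gives success probability \(1-\negl[k]\) for either value of \(\chi\), since the exceptional case requires \(h=k\). For \(h=k\) we call \(\SampleCode(n,k,k,q,1)\), which is legal because \(\chi=1\) is forced, and the same theorem yields the stated dichotomy: overwhelming success unless \(k=\frac{n}{2}\) and \(\chi_q(-1)^k\neq 1\), in which case it always aborts.

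\emph{The case \(h=k\).} Every self-orthogonal code has square class \(1\), so \(\mathcal{X}_{n,k,q}^{k} = \mathcal{X}_{n,k,q}^{1,k}\). Hence \(\dist_{n,k,q}^{1,k} = \dist_{n,k,q}^{k}\), and the output distribution is exactly \(\dist_{n,k,q}^k\) by \cref{thm:algo:samplecode}.

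\emph{The case \(h<k\).} By \cref{prop:count:hollow}, \(\mathcal{X}_{n,k,q}^h = \mathcal{X}_{n,k,q}^{+1,h}\sqcup\mathcal{X}_{n,k,q}^{-1,h}\). Write \(\alpha_\pm = \lvert\mathcal{X}^{\pm1,h}_{n,k,q}\rvert/\lvert\mathcal{X}^{h}_{n,k,q}\rvert\). The target \(\dist_{n,k,q}^h\) is then the mixture \(\alpha_+\dist^{+1,h}+\alpha_-\dist^{-1,h}\), where each generator matrix of a code in \(\mathcal{X}^{\chi,h}\) receives the same weight because every code has \(\lvert\GL_k(\FF_q)\rvert\) generator matrices.

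We must account for the conditioning on not aborting. The abort probabilities \(\varepsilon_\pm\) for the two values of \(\chi\) are both negligible, so the conditional output is the mixture with weights \(\beta_\pm = \frac{1}{2}(1-\varepsilon_\pm)/(1-\frac{\varepsilon_++\varepsilon_-}{2})\). Since the two components have disjoint supports, the statistical distance is exactly
\[
\Delta(\dist,\dist^h_{n,k,q}) = \lvert \beta_+-\alpha_+\rvert \le \lvert \alpha_+ - \tfrac12\rvert + \lvert\beta_+-\tfrac12\rvert .
\]

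The main step is bounding \(\lvert\alpha_+-\frac12\rvert\). I would invoke~\cite[Thm.~1]{localinv}, as in the proof of \cref{lem:algo:samplelcd}, in the sharper form that square classes of uniformly random \(h\)-hollow codes deviate from a fair bit by at most \(\frac12 q^{-\frac{k-h}{2}}\).

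Failing that, I would compute the ratio \(\alpha_+/\alpha_-\) directly from \cref{prop:count:hollow}. There \(\lvert\mathcal{X}^{\pm1,h}\rvert\) equals a common factor times \(\lvert\mathcal{X}^{\pm1,0}_{n,t,q}\rvert\), possibly times \(A_\pm\). Using \cref{prop:count:lcd}, the \(\pm\)-dependence sits only in factors of the form \(q^{m}\pm\chi_q(-1)^{m}\) with \(m\ge t/2\), where \(t=k-h\), so the imbalance is \(O(q^{-t/2})\).

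The remaining term \(\lvert\beta_+-\frac12\rvert\) is at most \(\max(\varepsilon_+,\varepsilon_-)\), which is negligible. To obtain the clean bound \(q^{-\frac{k-h}{2}}\), I expect the obstacle to be absorbing this term: the imbalance bound must have enough slack (e.g.\ a factor \(\frac12\)), or the negligible abort asymmetry must be shown to be dominated, possibly by checking the small parity cases of \cref{prop:count:lcd} individually.
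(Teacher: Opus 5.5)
Your reduction is the right one and matches the paper. Running time and abort behaviour come from \cref{thm:algo:samplecode}, the case \(h=k\) is immediate since \(\mathcal{X}_{n,k,q}^{k} = \mathcal{X}_{n,k,q}^{1,k}\), and for \(h<k\) everything hinges on the imbalance \(\lvert\alpha_+ - \frac12\rvert\) between the two square classes.

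\textbf{The gap is in your final step.} With the abort-reweighted weights \(\beta_\pm\), the triangle inequality gives only \(\Delta \leq \lvert\alpha_+ - \frac12\rvert + \lvert\beta_+ - \frac12\rvert\). Reaching the exact bound \(q^{-\frac{k-h}{2}}\) therefore needs the imbalance to be at most \(\frac12 q^{-\frac{k-h}{2}}\). That is not what \cite[Thm.~1]{localinv} provides: the paper uses it as \(\lvert\alpha_\pm - \frac12\rvert \leq q^{-\frac{k-h}{2}}\) (cf.\ the proof of \cref{lem:algo:samplelcd}). The sharpening is also false. For \(h=0\), \(n=2k\) and \(k\) even, \cref{prop:count:lcd} gives \(\alpha_+ - \frac12 = \chi_q(-1)^{\frac{k}{2}} q^{\frac{k}{2}}/(q^k+1)\). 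For instance, \(q=5\), \(n=4\), \(k=2\) gives \(\lvert\mathcal{X}^{+1,0}_{4,2,5}\rvert = 450\) and \(\lvert\mathcal{X}^{-1,0}_{4,2,5}\rvert = 200\), an imbalance of \(\frac{5}{26} > \frac{1}{10}\).

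In this family the bound \(q^{-\frac{k}{2}}\) leaves slack of only \(q^{-\frac{k}{2}}/(q^k+1)\). Absorbing the abort term would need \(\lvert\varepsilon_+ - \varepsilon_-\rvert = O(q^{-\frac{3k}{2}})\), which ``negligible in \(k\)'' does not give. Your fallback via \cref{prop:count:hollow} yields only \(O(q^{-t/2})\) with no explicit constant, so it does not help either. What your argument actually supports is \(\Delta \leq q^{-\frac{k-h}{2}} + \negl[k]\), not the statement as given.

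The paper reaches the clean constant by computing \(\Delta\) for the equal-weight mixture. It assigns each code of square class \(\chi\) probability \(1/(2\lvert\mathcal{X}^{\chi,h}_{n,k,q}\rvert)\), so that \(2\Delta = \lvert\lvert\mathcal{X}^{+1,h}_{n,k,q}\rvert - \lvert\mathcal{X}^{-1,h}_{n,k,q}\rvert\rvert / \lvert\mathcal{X}^{h}_{n,k,q}\rvert \leq 2 q^{-\frac{k-h}{2}}\) follows in one step from \cite[Thm.~1]{localinv}. Your point that conditioning on non-abort reweights the classes by \(\beta_\pm \neq \frac12\) is correct, and that computation does not account for it. But you then have to choose. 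Either adopt the paper's idealisation \(\beta_\pm = \frac12\) explicitly, or state the conclusion with an additive negligible term. Keeping both the refinement and the exact constant needs a quantitative comparison of the abort probabilities, which your proposal does not supply.
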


\begin{proof}
  Clearly the algorithm \(\SampleHollow\) inherits correctness, efficiency and success probability from the algorithm \(\SampleCode\) in \cref{def:algo:samplecode}. The distributions \(\dist\) and \(\dist_{n,k,q}^{h}\) clearly have the same support that is the set of all generator matrices of codes in \(\mathcal{X}_{n,k,q}^{h}\). The case \(h=k\) is immediate. Otherwise, for the total variation distance \(\Delta\) we have
  \begin{align*}
    \iftwocolumns{&}{}2 \Delta(\dist,\dist_{n,k,q}^h) \iftwocolumns{\\}{}
    &= \sum_{\condset{\mat{A}}{\code(\mat{A}) \in \mathcal{X}_{n,k,q}^h}} \left\lvert \prob{\mat{A} \sample \dist} - \prob{\mat{A} \sample \dist_{n,k,q}^h} \right\rvert \\
    &= \left\lvert \mathcal{X}_{n,k,q}^{+1,h} \right\rvert \left\lvert \left( \frac{1}{2 \left\lvert \mathcal{X}_{n,k,q}^{+1,h} \right\rvert} - \frac{1}{\left\lvert \mathcal{X}_{n,k,q}^h \right\rvert} \right) \right\rvert \iftwocolumns{\\}{}
    \iftwocolumns{&}{}+ \left\lvert \mathcal{X}_{n,k,q}^{-1,h} \right\rvert \left\lvert \left( \frac{1}{2 \left\lvert \mathcal{X}_{n,k,q}^{-1,h} \right\rvert} - \frac{1}{\left\lvert \mathcal{X}_{n,k,q}^h \right\rvert} \right) \right\rvert \\
    &= \frac{\left\lvert \left\lvert \mathcal{X}_{n,k,q}^{+1,h} \right\rvert - \left\lvert \mathcal{X}_{n,k,q}^{-1,h} \right\rvert \right\rvert}{\left\lvert \mathcal{X}_{n,k,q}^h \right\rvert}
    \overset{\bullet}{\leq} 2 q^{-\frac{k-h}{2}},
  \end{align*}
  where for \(\bullet\) we use~\cite[Thm.~1]{localinv}.
\end{proof}

For \(h\) close to \(k\) this can of course be improved by replacing \(\chi \sample \set{\pm 1}\) by a better estimate, for example by working with the mass formulae in~\cite{ITIT:LiShiLin25} for the desired parameters directly.

\section{On Total Orthogonality over Finite Extension Fields}\label{sec:total}

The algorithm from \cref{def:algo:samplecode} allows us to sample codes of any hull dimension over fields of any odd prime power size. However, the notion of a (Euclidean) hull may be insufficient over non-prime fields in certain contexts, for example when considering permutation equivalence, as noticed already in~\cite[Sec.~IV.A]{ISIT:BarOtmSae19} and later in~\cite[Sec.~6]{upke2}. Indeed, each automorphism of the field induces its own `inner product' and a separate notion of orthogonality and, by extension, hull. In this section and the next, we explore a much stricter variant of orthogonality we call \emph{total orthogonality}, meaning orthogonality with respect to every one of the degree-many sesquilinear forms induced by field automorphisms simultaneously. This section is reserved for characterising total orthogonality of vectors, while the next section shows how to sample `random' codes with a particularly large \emph{total hull}, that is a subcode that acts as the hull in every geometry.

\subsection{Total Orthogonality}\label{sec:total:defs}

It will prove useful to think of \(\FF_q\) for \(q = p^d\) as a \(d\)-dimensional vector space over \(\FF_p\). In particular, for a fixed field generator \(z\) of \(\FF_q/\FF_p\) we will think of a vector \(\vec{x} \in \FF_q^n\) as giving \(d\) vectors by decomposing each entry according to the basis \(\mathcal{B}_z\) and considering vectors over \(\FF_p\) given by coefficients at a single degree.

\begin{definition}[Coefficient projection]\label{def:coefproj}
  Denote by \(\pi_i^{(f,z)} \colon \FF_q \to \FF_p\) for \(i \in \ZZ_d\) the \(i\)-th \emph{coefficient projection} that maps \(\pi_i^{(f, z)} \colon \sum_{j=0}^{d-1} a_j z^j \mapsto a_i\). When the polynomial \(f\) and the root \(z\) are fixed, we will often omit the superscript. We extend these maps to vectors \(\vec{x} \in \FF_q^n\) coordinate-wise, that is \(\pi_i(\vec{x}) = [\pi_i(x_j)]_{j=1}^n\), and likewise to matrices.

  Denote by \(\pi = (\pi_0,\dots,\pi_{d-1}) \colon \FF_q \to \FF_p^d\) the \(\FF_p\)-vector space isomorphism that maps \(\sum_{i=0}^{d-1} a_i z^i \mapsto [a_i]_{i=0}^{d-1}\) (cf.~\emph{coefficient embedding}), and by \(\pi^{-1}\) its inverse we call the \emph{coefficient lift}. We extend these maps to vectors and matrices, that is \(\pi^{-1}\) maps a matrix from \(\FF_p^{d \times n}\) to a vector in \(\FF_q^n\).
\end{definition}

Besides the usual Euclidean inner product, recall from \cref{def:galois:inner} that over an extension field \(\FF_q/\FF_p\) there are degree-many Galois inner products on \(\FF_q^n\), one for each automorphism, where the Euclidean inner product corresponds to the identity map (the \(0\)-th automorphism).

\begin{definition}[Galois and total orthogonality of vectors]\label{def:torthogonality:vectors}
  Vectors \(\vec{x},\vec{y} \in \FF_q^n\) are \emph{\(\phi_i\)-orthogonal} if \(\inner{\vec{x}}{\vec{y}}_{\phi_i} = 0\), sometimes denoted as \(\vec{x} \perp_{\phi_i} \vec{y}\). Two vectors that are \(\phi_i\)-orthogonal for every \(i \in \ZZ_d\) will be called \emph{totally orthogonal}. A vector is \emph{\(\phi_i\)-self-orthogonal} if it is \(\phi_i\)-orthogonal to itself. We call a vector \emph{totally self-orthogonal}, if it is \(\phi_i\)-self-orthogonal for every \(i \in \ZZ_d\).
\end{definition}

Let us check that orthogonality between vectors translates to orthogonality between the spaces they span for each Galois inner product, despite Galois inner products not being \(\FF_q\)-linear in the second component in general (see \cref{prop:galois:properties}).

\begin{proposition}\label{prop:torthogonality:bases}
  Let \(\vec{x}_1,\dots,\vec{x}_t,\vec{y}_1,\dots,\vec{y}_s \in \FF_q^n\), and \(r \in \ZZ_d\) for \(q = p^d\). Then we have that for every \(i = 1,\dots,t\) and \(j = 1,\dots,s\) it holds \(\inner{\vec{x}_i}{\vec{y}_j}_{\phi_r} = 0\) if and only if for every \(\vec{x} \in \code(\set{\vec{x}_i}_{i=1}^t)\) and every \(\vec{y} \in \code(\set{\vec{y}_j}_{j=1}^s)\) it holds \(\inner{\vec{x}}{\vec{y}}_{\phi_r} = 0\).
\end{proposition}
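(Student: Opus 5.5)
The plan is to prove the two directions separately. The reverse implication is immediate, since each \(\vec{x}_i\) lies in \(\code(\set{\vec{x}_i}_{i=1}^t)\) and each \(\vec{y}_j\) lies in \(\code(\set{\vec{y}_j}_{j=1}^s)\). All the work is in the forward implication, and for it I will use only the additivity and sesquilinearity properties recorded in \cref{prop:galois:properties}.

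For the forward direction, take arbitrary \(\vec{x} = \sum_{i=1}^t a_i \vec{x}_i\) and \(\vec{y} = \sum_{j=1}^s b_j \vec{y}_j\) with \(a_i, b_j \in \FF_q\). The first argument of \(\inner{\cdot}{\cdot}_{\phi_r}\) is \(\FF_q\)-linear. The form is also additive in the second argument, because \(\phi_r\) is additive and acts entry-wise. We therefore expand
\[
  \inner{\vec{x}}{\vec{y}}_{\phi_r} = \sum_{i=1}^t \sum_{j=1}^s a_i \inner{\vec{x}_i}{b_j \vec{y}_j}_{\phi_r}.
\]
The key step is to pull the scalar \(b_j\) out of the second argument. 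We use the \(\phi_r\)-sesquilinearity \(\inner{\vec{u}}{b\vec{v}}_{\phi_r} = \phi_r(b)\inner{\vec{u}}{\vec{v}}_{\phi_r}\), which can be checked directly from \cref{def:galois:inner}: \(\phi_r\) is a field automorphism, so \(\phi_r(b v_k) = \phi_r(b)\phi_r(v_k)\) for each entry. This gives
\[
  \inner{\vec{x}}{\vec{y}}_{\phi_r} = \sum_{i=1}^t \sum_{j=1}^s a_i \phi_r(b_j) \inner{\vec{x}_i}{\vec{y}_j}_{\phi_r}.
\]
Each term vanishes by hypothesis, so the whole sum is \(0\).

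The only step that needs care is the second argument. The form is not \(\FF_q\)-linear there, as \cref{prop:galois:properties} points out, and this is precisely why the statement is worth recording. The obstacle disappears once we note that the twist by \(\phi_r\) is multiplicative: it only replaces the coefficient \(b_j\) by \(\phi_r(b_j)\) and never mixes basis vectors. Hence the span of the \(\vec{y}_j\) behaves for \(\phi_r\)-orthogonality exactly as in the bilinear case.
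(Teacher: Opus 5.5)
Your proof is correct and follows essentially the same route as the paper: the converse direction is trivial, and for the forward direction you use $\FF_q$-linearity in the first argument and pull $\phi_r(b_j)$ out of the second via $\phi_r$-semilinearity. The only difference is cosmetic: you expand both arguments in a single double sum, whereas the paper handles the two arguments one after the other.
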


\begin{proof}
  The if direction is clear, let us show the only if direction. It is clear from \(\FF_q\)-linearity in the first argument of \(\inner{\cdot}{\cdot}_{\phi_r}\) that if \(\inner{\vec{x}_i}{\vec{y}}_{\phi_r} = 0\) for \(i=1,\dots,t\), then also \(\code(\set{\vec{x}_i}_{i=1}^t) \perp_{\phi_r} \vec{y}_j\) for every \(j = 1,\dots,s\). For the second argument, where we only have \(\FF_p\)-linearity in general, observe that if \(\inner{\vec{x}}{\vec{y}_i}_{\phi_r} = 0\) for \(i=1,\dots,k\) then
  \[
    \inner{\vec{x}}{\sum_{i=1}^{k} \alpha_i \vec{y}_i}_{\phi_r}
    = \sum_{i=1}^{k} \phi_r(\alpha_i) \inner{\vec{x}}{\vec{y}_i}_{\phi_r}
    = 0.
  \]
  Thus \(\phi_r\)-orthogonality of bases translates to \(\phi_r\)-orthogonality of their spans.
\end{proof}

Each of the Galois inner products endows \(\FF_q^n\) with a different geometry, and in particular changes the notion of orthogonality. By \cref{prop:torthogonality:bases}, it makes sense to speak of Galois duals and hulls in terms of their generator matrices.

\begin{definition}[Galois dual and hull {\cite[IV.A]{ISIT:BarOtmSae19}}]\label{def:galois:hull}
  For a linear code \(\code \leq \FF_q^n\) and \(i \in \ZZ_d\) we define its \(i\)-th Galois dual as
  \[
    \code^{\perp,\phi_i} = \condset{\vec{y} \in \FF_q^n}{\forall \vec{x} \in \code \colon \inner{\vec{x}}{\vec{y}}_{\phi_i} = 0},
  \]
  and its \(i\)-th Galois hull as
  \[
    \hull_{\phi_i}(\code) = \condset{\vec{y} \in \code}{\forall \vec{x} \in \code \colon \inner{\vec{x}}{\vec{y}}_{\phi_i} = 0} = \code \cap \code^{\perp,\phi_i}.
  \]
  We give explicit algorithms to compute them in \cref{fig:galois:hull}.
\end{definition}

\begin{figure}[t]
  \centering
  \begin{pcvstack}[boxed]
    \procedure[mode=text]{$\DualGalois(\mat{A}, i)$}{
      \(\mat{A'} \coloneqq \phi_{-i}(\mat{A})\);\;
      \pcreturn \(\Dual(\mat{A'})\)
    }
    \pcvspace
    \procedure[mode=text]{$\HullGalois(\mat{A}, i)$}{
      \(\mat{A}^\perp \coloneqq \Dual(\mat{A})\);\;
      \(\mat{M} \coloneqq [\mat{A}^\perp \Mid \phi_{-i}(\mat{A})]\) \\
      \(\mat{H} \coloneqq\) basis of right kernel of \(\mat{M}\);\;
      \pcreturn \(\mat{H}\)
    }
  \end{pcvstack}
  \caption{The \(i\)-th Galois dual and hull of a code.}
  \label{fig:galois:hull}
\end{figure}

\begin{definition}[Galois and total orthogonality of codes]\label{def:totalort}
  Two linear codes \(\code_1, \code_2 \leq \FF_q^n\) are \(\phi_i\)-orthogonal if \(\code_1 \subseteq \code_2^{\perp,\phi_i}\), or equivalently if \(\code_2 \subseteq \code_1^{\perp,\phi_{d-i}}\). A linear code \(\code\) is \(\phi_i\)-self-orthogonal if \(\code \subseteq \code^{\perp,\phi_i}\), and further totally self-orthogonal if this holds for every \(i \in \ZZ_d\).
\end{definition}

\begin{remark}\label{rem:totalort}
  Note that \cref{def:totalort} of total (self-)orthogonality can be seen as a slight generalisation of~\cite[Def.~1.2.1]{NebeRainsSloane2006}, if we considered codes more broadly over rings. In~\cite[Def.~1.2.1]{NebeRainsSloane2006} codes over rings are endowed with a set of bilinear forms \(M\), and orthogonality is considered with respect to all forms in \(M\), much like our total orthogonality. The difference is that our forms \(\inner{\cdot}{\cdot}_{\phi_i}\) are generally not bilinear but \(\phi_i\)-sesquilinear.
\end{remark}

\begin{definition}[Total hull]\label{def:totalhull}
  For a linear code \(\code\) over \(\FF_{p^d}\) define its \emph{total hull} as the vector subspace
  \[
    \hull^\star(\code) = \bigcap_{i \in \ZZ_d} \hull_{\phi_i}(\code).
  \]
\end{definition}

\noindent Over prime fields \(\hull^\star\) coincides with the Euclidean hull.

\subsection{The Coefficient Code and the Frobenius Closure}\label{sec:total:codes}

We start by defining two codes that are naturally associated to a vector \(\vec{x} \in \FF_q^n\), and will later assist us in characterising total orthogonality.

\begin{definition}[Coefficient code]\label{def:coefcode}
  Let \(q = p^d\) for \(p\) prime and \(d \in \NN\), and fix \(f \in \FF_p[X]\) irreducible of degree \(d\) and \(z \in \variety_f\). For \(\vec{x} \in \FF_q^n\) define its \emph{coefficient code} as
  \[
    \Pi^{(f,z)}(\vec{x}) = \Span_{\FF_p}\left(\condset{\pi_i^{(f,z)}(\vec{x})}{i \in \ZZ_d}\right).
  \]
  It is defined over \(\FF_p\), has length \(n\) and dimension at most \(d\).
\end{definition}

Since different pairs \((f, z)\) and \((g, w)\) of an irreducible polynomial over \(\FF_p[X]\) of degree \(d\) and its root in \(\FF_q\) correspond to an \(\FF_p\)-basis change, the choice of the field generator does not matter and we omit the superscript.

\begin{definition}[Moore code]\label{def:moorecode}
  Let \(q = p^d\) and \(\vec{x} \in \FF_q^n\). The \emph{Moore code} of \(\vec{x}\) is defined as
  \[
    \Phi(\vec{x}) = \Span_{\FF_q}\left( \condset{\phi_i(\vec{x})}{i \in \ZZ_d} \right).
  \]
  It is defined over \(\FF_q\), has length \(n\) and dimension at most \(d\).
\end{definition}

The Moore code is essentially the Frobenius closure of a single vector, so called because it is spanned by the Moore matrix of the vector's entries.

\begin{proposition}
  The following are equivalent: (i) the code \(\Phi(\vec{x})\) has dimension \(d\) over \(\FF_q\), (ii) the code \(\Pi(\vec{x})\) has dimension \(d\) over \(\FF_p\), and (iii) the entries of \(\vec{x}\) span \(\FF_q\) as a \(\FF_p\)-vector space.
\end{proposition}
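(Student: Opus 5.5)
The plan is to prove the cycle of equivalences by relating all three conditions to the \(\FF_p\)-span of the entries of \(\vec{x}\). Write \(W = \Span_{\FF_p}(x_1,\dots,x_n) \leq \FF_q\). The link between (ii) and (iii) is the coefficient lift \(\pi\). The link between (i) and (iii) is the Moore determinant from \cref{def:moore}.

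\textbf{(ii) \(\Leftrightarrow\) (iii).} Let \(\mat{P} \in \FF_p^{d \times n}\) be the matrix whose \(i\)-th row is \(\pi_i(\vec{x})\). Its \(j\)-th column is \(\pi(x_j)\). Row rank equals column rank, so \(\dim_{\FF_p}\Pi(\vec{x}) = \dim_{\FF_p}\Span_{\FF_p}\{\pi(x_j)\}_{j=1}^n\). Since \(\pi\) is an \(\FF_p\)-isomorphism, this equals \(\dim_{\FF_p} W\). Hence \(\Pi(\vec{x})\) has dimension \(d\) if and only if \(W = \FF_q\).

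\textbf{(i) \(\Leftrightarrow\) (iii).} Consider the \(n \times d\) Moore matrix \(\mat{M}_d(x_1,\dots,x_n)\). Its columns are \(\phi_0(\vec{x}),\dots,\phi_{d-1}(\vec{x})\), so \(\dim_{\FF_q}\Phi(\vec{x})\) equals its rank over \(\FF_q\).

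If (iii) holds, choose indices \(j_1,\dots,j_d\) such that \(x_{j_1},\dots,x_{j_d}\) form an \(\FF_p\)-basis of \(\FF_q\). The corresponding \(d \times d\) submatrix is the square Moore matrix \(\mat{M}_d(x_{j_1},\dots,x_{j_d})\). By \cref{def:moore} its determinant is non-zero, so the rank is \(d\). (When \(d=1\) this is immediate: the entries span \(\FF_p\) exactly when \(\vec{x}\neq \vec{0}\).)

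Conversely, suppose \(\dim_{\FF_p} W = m < d\). Every \(d \times d\) minor of the Moore matrix is a Moore determinant of \(d\) entries drawn from the \(m\)-dimensional space \(W\). Those \(d\) entries are \(\FF_p\)-dependent, so each such minor vanishes. Hence the rank is below \(d\).

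This last step is the only subtle point. Vanishing of all maximal minors gives rank \(<d\) over \(\FF_q\), which is exactly what is needed. We never need the exact rank \(m\), although it is in fact \(m\) by applying the same argument to \(m\times m\) minors.
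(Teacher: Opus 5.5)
Your proof is correct and follows essentially the same route as the paper. Like the paper, you use the rank of the coefficient matrix \(\pi(\vec{x})\) for (ii)\(\Leftrightarrow\)(iii) and the non-vanishing of the square Moore determinant on \(\FF_p\)-independent entries for (i)\(\Leftrightarrow\)(iii). You are in fact more complete than the paper, which only argues (iii)\(\Rightarrow\)(i) explicitly: you also prove (i)\(\Rightarrow\)(iii) by observing that every \(d \times d\) minor of \(\mat{M}_d(x_1,\dots,x_n)\) is a Moore determinant of \(d\) entries from \(W\), and you handle \(d=1\) separately.
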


\begin{proof}
  The code \(\Phi(\vec{x})\) is generated by the matrix \(\mat{M}_d(\vec{x})^\transpose\) where \(\vec{x}\) is viewed as a list of \(n\) elements in \(\FF_q\). By \cref{def:moore}, the determinant of \(\mat{M}_d(a_1,\dots,a_d)\) is non-zero if and only if the elements \(a_1,\dots,a_d \in \FF_q\) are \(\FF_p\)-linearly independent, that is they form a basis for the \(\FF_p\)-vector space \(\FF_q \iso \FF_p^d\). This immediately implies that if \(n \geq d\) and \(\vec{x} \in \FF_q^n\) contains at least \(d\) \(\FF_p\)-linearly independent entries, which necessarily span \(\FF_q\) as an \(\FF_p\)-vector space, then \(\Phi(\vec{x})\) is full-rank and thus an \([n,d]\)-linear code over \(\FF_q\). The entries of \(\vec{x}\) span \(\FF_p^d\) precisely when the coefficient projection matrix \(\pi(\vec{x})\) spanning \(\Pi(\mat{x})\) has full rank \(d\). When \(n < d\) all statements are false, completing the proof.
\end{proof}

\subsection{Characterisations of Total (Self-)Orthogonality}\label{sec:total:char}

Fix an irreducible polynomial \(f\) of degree \(d\) over \(\FF_p\) and its root \(z \in \FF_q\). Let us first compute what it means for two vectors to be \(\phi_r\)-orthogonal for \(r \in \ZZ_d\) in terms of their coefficients with respect to the basis \(\mathcal{B}_z\).

\begin{proposition}\label{prop:orthogonal}
  For every two vectors \(\vec{x},\vec{y} \in \FF_q^n\) with \(q = p^d\) for \(p\) prime and \(d,n \in \NN\), and every \(r \in \ZZ_d\), we have
  \[
    \inner{\vec{x}}{\vec{y}}_{\phi_r}
    = \sum_{i=0}^{d-1} \sum_{j=0}^{d-1} z^{i + j p^r} \inner{\pi_i(\vec{x})}{\pi_j(\vec{y})}.
  \]
\end{proposition}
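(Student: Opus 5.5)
The identity follows by expanding every coordinate in the basis \(\mathcal{B}_z\) and using that \(\phi_r\) is an \(\FF_p\)-linear field automorphism fixing \(\FF_p\). First, for each coordinate \(\ell \in \llbracket 1, n \rrbracket\), write \(x_\ell = \sum_{i=0}^{d-1} \pi_i(x_\ell) z^i\) and \(y_\ell = \sum_{j=0}^{d-1} \pi_j(y_\ell) z^j\), as in \cref{def:coefproj}, with all coefficients in \(\FF_p\).

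Next, apply \(\phi_r \colon a \mapsto a^{p^r}\) to \(y_\ell\). Since \(\phi_r\) is additive and multiplicative and fixes \(\FF_p\) pointwise (Fermat, \(c^p = c\) for \(c \in \FF_p\)), we get
\[
  \phi_r(y_\ell) = \sum_{j=0}^{d-1} \pi_j(y_\ell)\, \phi_r(z)^j = \sum_{j=0}^{d-1} \pi_j(y_\ell)\, z^{j p^r}.
\]

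Then substitute both expansions into \cref{def:galois:inner}:
\[
  \inner{\vec{x}}{\vec{y}}_{\phi_r} = \sum_{\ell=1}^{n} x_\ell \phi_r(y_\ell) = \sum_{\ell=1}^{n} \sum_{i=0}^{d-1}\sum_{j=0}^{d-1} \pi_i(x_\ell)\pi_j(y_\ell)\, z^{i + j p^r}.
\]
The sums are finite, so the order of summation can be exchanged to bring the sum over \(\ell\) innermost. The innermost sum \(\sum_{\ell} \pi_i(x_\ell)\pi_j(y_\ell)\) is exactly the Euclidean inner product \(\inner{\pi_i(\vec{x})}{\pi_j(\vec{y})}\) over \(\FF_p\), viewed in \(\FF_q\) via the inclusion \(\FF_p \subseteq \FF_q\). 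This gives the claimed formula.

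There is no real obstacle. The only point needing care is justifying \(\phi_r(\pi_j(y_\ell)) = \pi_j(y_\ell)\), which holds because \(\phi_r\) fixes the prime field. The powers \(z^{i + j p^r}\) are deliberately left unreduced modulo \(f\): the statement is an identity in \(\FF_q\), not an expansion in \(\mathcal{B}_z\), so no reduction is required.
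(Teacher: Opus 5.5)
Your proof is correct and follows essentially the same route as the paper. Both expand each coordinate in the basis \(\mathcal{B}_z\), use \(\phi_r(z^j) = z^{j p^r}\) with the \(\FF_p\)-coefficients left fixed, and exchange the finite sums to collect \(\inner{\pi_i(\vec{x})}{\pi_j(\vec{y})}\); you also state explicitly that \(\phi_r\) fixes \(\FF_p\) pointwise, which the paper uses only implicitly.
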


\begin{proof}
  Write vectors \(\vec{x}, \vec{y} \in \FF_q^n\) as
  \[
    \vec{x} = \left[\sum_{i=0}^{d-1} a_i^{(t)} z^i \right]_{t=1}^n
    \qquad
    \text{and}
    \qquad
    \vec{y} = \left[\sum_{j=0}^{d-1} b_j^{(t)} z^j \right]_{t=1}^n,
  \]
  and compute
  \begin{align*}
    \inner{\vec{x}}{\vec{y}}_{\phi_r}
    &= \sum_{t=1}^{n} \left( \sum_{i=0}^{d-1} a_i^{(t)} z^i \right) \cdot \left( \sum_{j=0}^{d-1} b_j^{(t)} \phi_r(z^j) \right) \iftwocolumns{\\}{}
    \iftwocolumns{&}{}= \sum_{t=1}^{n} \sum_{i=0}^{d-1} \sum_{j=0}^{d-1} a_i^{(t)} b_j^{(t)} z^{i + j p^r} \\
    &= \sum_{i=0}^{d-1} \sum_{j=0}^{d-1} z^{i + j p^r} \left( \sum_{t=1}^{n} a_i^{(t)} b_j^{(t)} \right) \iftwocolumns{\\}{}
    \iftwocolumns{&}{}= \sum_{i=0}^{d-1} \sum_{j=0}^{d-1} z^{i + j p^r} \inner{\pi_i(\vec{x})}{\pi_j(\vec{y})}. \qedhere
  \end{align*}
\end{proof}

We may ignore how the generator wraps around by simply setting each coefficient in the sum in \cref{prop:orthogonal} to \(0\), hence the following lemma is immediate.

\begin{lemma}\label{lem:orthogonal}
  A sufficient condition that the vectors \(\vec{x}\) and \(\vec{y}\) are totally orthogonal is that \(\Pi(\vec{x})\) and \(\Pi(\vec{y})\) are orthogonal codes over \(\FF_p\) (that is one is in the dual of the other). In particular, if \(\Pi(\vec{x})\) is a self-orthogonal code over \(\FF_p\), then \(\vec{x} \in \FF_q^n\) is totally self-orthogonal.
\end{lemma}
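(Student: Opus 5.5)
The plan is to read the statement off the expansion in \cref{prop:orthogonal}. Fix \(f\) and a root \(z\), and let \(\vec{x},\vec{y} \in \FF_q^n\). Assume \(\Pi(\vec{x})\) and \(\Pi(\vec{y})\) are orthogonal codes over \(\FF_p\), i.e.\ \(\Pi(\vec{x}) \subseteq \Pi(\vec{y})^\perp\).

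First, \(\pi_i(\vec{x}) \in \Pi(\vec{x})\) and \(\pi_j(\vec{y}) \in \Pi(\vec{y})\) for all \(i,j \in \ZZ_d\). Hence \(\inner{\pi_i(\vec{x})}{\pi_j(\vec{y})} = 0\) in \(\FF_p\) for every pair \((i,j)\). Only the generators are needed here, so the spanning sets in \cref{def:coefcode} suffice.

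Next, for each \(r \in \ZZ_d\), substitute these vanishing inner products into
\[
  \inner{\vec{x}}{\vec{y}}_{\phi_r} = \sum_{i=0}^{d-1} \sum_{j=0}^{d-1} z^{i + j p^r} \inner{\pi_i(\vec{x})}{\pi_j(\vec{y})}.
\]
Every summand vanishes regardless of how the powers \(z^{i+jp^r}\) reduce modulo \(f\). So \(\inner{\vec{x}}{\vec{y}}_{\phi_r} = 0\) for all \(r\), which is total orthogonality in the sense of \cref{def:torthogonality:vectors}.

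For the special case, set \(\vec{y} = \vec{x}\). If \(\Pi(\vec{x}) \subseteq \Pi(\vec{x})^\perp\), the argument above gives \(\inner{\vec{x}}{\vec{x}}_{\phi_r} = 0\) for every \(r\), so \(\vec{x}\) is totally self-orthogonal.

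No step is a real obstacle; the only point needing care is that orthogonality of the codes implies orthogonality of every pair of projections, including mixed pairs with \(i \neq j\). That is exactly what makes each coefficient in the expansion vanish.
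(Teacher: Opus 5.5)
Your proof is correct and follows the paper's approach: the paper calls the lemma immediate from \cref{prop:orthogonal}, because when every \(\inner{\pi_i(\vec{x})}{\pi_j(\vec{y})}\) vanishes, each \(\inner{\vec{x}}{\vec{y}}_{\phi_r}\) vanishes no matter how the powers of \(z\) wrap around. You give the same argument with the details written out.
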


We can imagine that for a polynomial \(f\) with many non-zero coefficients, the generator \(z\) wraps around in a way that creates a condition on non-trivial sums of \(\inner{\pi_i(\vec{x})}{\pi_j(\vec{y})}\) being \(0\), even when some \(\inner{\pi_i(\vec{x})}{\pi_j(\vec{y})} \neq 0\), complicating the analysis of total orthogonality. Yet, the above sufficient condition is also necessary.

\begin{theorem}\label{thm:orthogonal}
  Two vectors \(\vec{x}, \vec{y} \in \FF_q^n\) are totally orthogonal if and only if \(\Pi(\vec{x})\) and \(\Pi(\vec{y})\) are orthogonal codes over \(\FF_p\). In particular, \(\vec{x} \in \FF_q\) is totally self-orthogonal if and only if \(\Pi(\vec{x})\) is a self-orthogonal code over \(\FF_p\).
\end{theorem}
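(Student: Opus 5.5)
The sufficiency direction is \cref{lem:orthogonal}, so only necessity needs work. Assume \(\vec{x},\vec{y}\) are totally orthogonal, that is \(\inner{\vec{x}}{\vec{y}}_{\phi_r} = 0\) for every \(r \in \ZZ_d\). I will show that every inner product \(g_{ij} = \inner{\pi_i(\vec{x})}{\pi_j(\vec{y})} \in \FF_p\) vanishes. By \cref{prop:orthogonal}, each condition reads
\[
  0 = \sum_{i=0}^{d-1} \sum_{j=0}^{d-1} z^{i} \phi_r(z^{j})\, g_{ij}
  = \sum_{i=0}^{d-1} z^i \phi_r\Big(\sum_{j=0}^{d-1} g_{ij} z^j\Big),
\]
where I use that \(g_{ij} \in \FF_p\) is fixed by \(\phi_r\). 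Set \(w_i = \sum_{j} g_{ij} z^j \in \FF_q\). Since \(\mathcal{B}_z\) is an \(\FF_p\)-basis, \(w_i = 0\) for all \(i\) if and only if all \(g_{ij} = 0\). So it suffices to show that \(\sum_{i} z^i \phi_r(w_i) = 0\) for all \(r\) forces \(w_0 = \dots = w_{d-1} = 0\).

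To do this, I would rewrite the element \(\sum_i z^i \phi_r(w_i)\) as a linear-algebra statement involving a Moore matrix. Apply \(\phi_{-r}\) to each equation to obtain the equivalent system \(\sum_{i} \phi_{-r}(z^i)\, w_i = 0\) for all \(r \in \ZZ_d\). As \(r\) ranges over \(\ZZ_d\), so does \(-r\). The system is therefore \(\mat{M}_d(1, z, \dots, z^{d-1})^\transpose \vec{w} = \vec{0}\) with \(\vec{w} = [w_i]_{i=0}^{d-1}\), since the \((s,i)\) entry of that transpose is \(\phi_s(z^i)\). This is a linear system over \(\FF_q\) in the unknowns \(w_i\). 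By \cref{def:moore}, its coefficient matrix is invertible because \(1, z, \dots, z^{d-1}\) are \(\FF_p\)-linearly independent. Hence \(\vec{w} = \vec{0}\), all \(g_{ij} = 0\), and every generator \(\pi_i(\vec{x})\) is orthogonal to every generator \(\pi_j(\vec{y})\). Bilinearity over \(\FF_p\) then gives \(\Pi(\vec{x}) \perp \Pi(\vec{y})\).

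The main obstacle I expect is the bookkeeping in the step that makes the equations linear. The forms are only \(\phi_r\)-sesquilinear, so I must apply \(\phi_{-r}\) to clear the twist, and only then does the \(w_i\) appear linearly with Frobenius-conjugate coefficients. I also need to check that the reindexing \(r \mapsto -r\) is a bijection, and that the degenerate case \(d = 1\) (where the Moore determinant formula is stated only for \(n \geq 2\)) is trivial. The self-orthogonal statement is the special case \(\vec{y} = \vec{x}\).
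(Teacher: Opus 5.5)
Your proof is correct and uses the same key fact as the paper: the Moore matrix \(\mat{M}_d(1,z,\dots,z^{d-1})\) is invertible, and you apply it to the \(d\) conditions coming from \cref{prop:orthogonal}. The paper instead works with a \(d \times d^2\) matrix over \(\FF_q\) and argues that its row operations lift to rank-preserving operations on the \(d^2 \times d^2\) system over \(\FF_p\); by packing the \(d^2\) unknowns \(g_{ij} \in \FF_p\) into \(d\) unknowns in \(\FF_q\), you avoid that bookkeeping. If you group by the untwisted index instead, setting \(v_j = \sum_i g_{ij} z^i\), the conditions read \(\sum_j \phi_r(z^j) v_j = 0\) directly, so the \(\phi_{-r}\) step becomes unnecessary; this is exactly what the paper's factorisation through \(\diag(z^0 \mat{I}_d, \dots, z^{d-1} \mat{I}_d)\) encodes.
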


\begin{proof}
  Let us write \(c_{i,j} = \inner{\pi_i(\vec{x})}{\pi_j(\vec{y})}\) for brevity. Observe that setting the \(\phi_r\)-orthogonality sums from \cref{prop:orthogonal} for \(r \in \ZZ_d\) to \(0\) yields a homogeneous system of \(d^2\) linear equations in the \(d^2\) variables \(c_{i,j}\), one equation per automorphism and per degree.

  Concretely, the system is the following. Order the pairs \((i,j) \in \ZZ_d^2\) and the pairs \((r, m) \in \ZZ_d^2\) lexicographically, these will number our columns and rows, respectively. Then the system of equations we are considering is the following:
  \begin{align*}
    \mat{A} \vec{c} =
    \begin{bmatrix}
      \pi_m\left(z^{i + j p^r}\right)
    \end{bmatrix}_{\genfrac{}{}{0pt}{}{(r,m) \in \ZZ_d^2}{(i,j) \in \ZZ_d^2}}
    \begin{bmatrix}
      c_{i,j}
    \end{bmatrix}_{(i,j) \in \ZZ_d^2}^\transpose
    =
    \vec{0}^{d^2}.
  \end{align*}
  In words, we have \(d\) collections of \(d\) successive rows, the \(r\)-th collection corresponding to the automorphism \(\phi_r\) for \(r \in \ZZ_d\). Within each collection, the \(m\)-th row corresponds to the coefficient of degree \(m \in \ZZ_d\), we call these the degree-\(m\) rows. Our goal is to prove that the \(d^2 \times d^2\) matrix \(\mat{A}\) in the above equation is full-rank, meaning the only solution is \(c_{i,j} = 0\) for all \((i,j) \in \ZZ_d^2\). We use the following trick.

  Instead of considering \(\mat{A}\) directly, consider the non-projected \(d \times d^2\) matrix \(\mat{A'} = [z^{i + j p^r}]_{r \in \ZZ_d,(i,j) \in \ZZ_d^2}\). Each entry in \(\mat{A'}\) corresponds to a \(d\)-entry sub-column in \(\mat{A}\). Observe that performing rank-preserving row operations on \(\mat{A'}\) corresponds to rank-preserving row operations \(\mat{A}\), that is they factor through the coefficient projection. Indeed, adding a row to another in \(\mat{A'}\) corresponds to adding \(d\) rows to another \(d\) rows in \(\mat{A}\), and swapping two rows of \(\mat{A'}\) corresponds to swapping two collections of \(d\) rows of \(\mat{A}\). The last ingredient is that multiplying a row of \(\mat{A'}\) with a non-zero field element \(a \in \FF_q\) corresponds to a basis change when viewing \(\FF_q\) as a \(\FF_p\) vector space, thus it corresponds to multiplying a matrix \(\mat{U}_a \in \GL_d(\FF_p)\) to the matching \(d\) rows of \(\mat{A}\).

  Now we compute
  \begin{align*}
    \iftwocolumns{&}{}\begin{bmatrix}
      z^{i + j p^r}
    \end{bmatrix}_{(i,j) \in \ZZ_d^2, r \in \ZZ_d}
    \iftwocolumns{}{&}=
    \begin{bmatrix}
      z^i \phi_r(z^j)
    \end{bmatrix}_{(i,j) \in \ZZ_d^2, r \in \ZZ_d} \\
    &=
    \begin{bmatrix}
      \begin{bmatrix}
        \phi_0(z^j)
      \end{bmatrix}_{j \in \ZZ_d}
      & \dots &
      \begin{bmatrix}
        \phi_0(z^j)
      \end{bmatrix}_{j \in \ZZ_d} \\
      \vdots & \ddots & \vdots \\
      \begin{bmatrix}
        \phi_{d-1}(z^j)
      \end{bmatrix}_{j \in \ZZ_d}
      & \dots &
      \begin{bmatrix}
        \phi_{d-1}(z^j)
      \end{bmatrix}_{j \in \ZZ_d}
    \end{bmatrix} \iftwocolumns{\\}{}
    \iftwocolumns{&\;\;}{}\cdot
    \begin{bmatrix}
      z^0 \mat{I}_d & & \\
      & \ddots & \\
      & & z^{d-1} \mat{I}_d
    \end{bmatrix} \\
    &=
    \begin{bmatrix}
      \mat{M}_d(z^0,\dots,z^{d-1})^\transpose & \dots & \mat{M}_d(z^0,\dots,z^{d-1})^\transpose
    \end{bmatrix} \iftwocolumns{\\}{}
    \iftwocolumns{&\;\;}{}\cdot
    \begin{bmatrix}
      z^0 \mat{I}_d & & \\
      & \ddots & \\
      & & z^{d-1} \mat{I}_d
    \end{bmatrix}
  \end{align*}
  Now recall that the Moore matrix \(\mat{M}_d(z^0,\dots,z^{d-1})\) is full-rank since the elements \(\set{z^0,\dots,z^{d-1}}\) form an \(\FF_p\)-basis for \(\FF_q\) as a vector space. Thus the above matrix can be transformed with row operations (by multiplying from the left with \(\mat{M}_d(z^0,\dots,z^{d-1})^{-\transpose}\)) into
  \begin{align*}
    &\sim
    \begin{bmatrix}
      \mat{I}_d & \dots & \mat{I}_d
    \end{bmatrix}
    \cdot
    \begin{bmatrix}
      z^0 \mat{I}_d & & \\
      & \ddots & \\
      & & z^{d-1} \mat{I}_d
    \end{bmatrix} \iftwocolumns{\\}{}
    \iftwocolumns{&}{}=
    \begin{bmatrix}
      z^0 \mat{I}_d & \dots & z^{d-1} \mat{I}_d
    \end{bmatrix}.
  \end{align*}
  Projecting this matrix back to a \(d^2 \times d^2\) matrix over \(\FF_p\) using coefficient embeddings yields a matrix that is clearly full-rank as it has precisely one \(1\) per row and per column, thus its determinant is \(\pm 1\). Hence \(\mat{A}\) is full-rank.
\end{proof}

\begin{remark}\label{rem:orthogonal}
  What may be surprising about \cref{thm:orthogonal} is that the crude method of setting every coefficient of the sum in \cref{prop:orthogonal} to \(0\) is the best we can do. In particular, this method \emph{ignores the multiplication structure} of \(\FF_q\), described by how the generator wraps around, and only treats \(\FF_q \iso \FF_p^d\) as an \(\FF_p\)-vector space.
\end{remark}

\cref{thm:orthogonal} gives us a precise characterisation of total orthogonality of vectors in \(\FF_q\) based on their coefficient codes over the base field \(\FF_p\). We also characterise total orthogonality of vectors using their Moore codes over the extension field \(\FF_q\), but we did not find the result as useful.

\begin{lemma}
  Let \(q = p^d\) be any prime power. The following are equivalent: (i) the vectors \(\vec{x}, \vec{y} \in \FF_q^n\) are totally orthogonal, (ii) the codes \(\Phi(\vec{x}), \Phi(\vec{y}) \leq \FF_q^n\) are totally orthogonal, and (iii) the codes \(\Phi(\vec{x}), \Phi(\vec{y}) \leq \FF_q^n\) are Euclidean orthogonal.
  In particular, \(\vec{x} \in \FF_q^n\) is totally self-orthogonal if and only if \(\Phi(\vec{x}) \leq \FF_q^n\) is totally self-orthogonal, equivalently Euclidean self-orthogonal.
\end{lemma}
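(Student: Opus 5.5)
We prove the cycle (i) \(\Rightarrow\) (iii) \(\Rightarrow\) (ii) \(\Rightarrow\) (i). The only tools needed are \cref{prop:torthogonality:bases}, which reduces orthogonality of spans to orthogonality of generators, and the identity \(\inner{\phi_a(\vec{x})}{\phi_b(\vec{y})} = \phi_a(\inner{\vec{x}}{\phi_{b-a}(\vec{y})})\). This identity holds because \(\phi_a\) is a field automorphism acting entrywise: \(\sum_t \phi_a(x_t)\phi_b(y_t) = \phi_a\bigl(\sum_t x_t \phi_{b-a}(y_t)\bigr)\), with indices taken in \(\ZZ_d\).

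For (i) \(\Rightarrow\) (iii), suppose \(\inner{\vec{x}}{\vec{y}}_{\phi_r} = 0\) for all \(r \in \ZZ_d\). The generators of \(\Phi(\vec{x})\) and \(\Phi(\vec{y})\) are \(\phi_a(\vec{x})\) and \(\phi_b(\vec{y})\). By the identity, \(\inner{\phi_a(\vec{x})}{\phi_b(\vec{y})} = \phi_a(\inner{\vec{x}}{\vec{y}}_{\phi_{b-a}}) = 0\). The Euclidean form is bilinear, so orthogonality of all pairs of generators gives \(\Phi(\vec{x}) \perp \Phi(\vec{y})\), either by \cref{prop:torthogonality:bases} with \(r=0\) or directly.

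For (iii) \(\Rightarrow\) (ii), fix \(r\). We need \(\inner{\vec{u}}{\vec{w}}_{\phi_r} = 0\) for all \(\vec{u} \in \Phi(\vec{x})\) and \(\vec{w} \in \Phi(\vec{y})\). By \cref{prop:torthogonality:bases} it suffices to check the generators. We have \(\inner{\phi_a(\vec{x})}{\phi_b(\vec{y})}_{\phi_r} = \inner{\phi_a(\vec{x})}{\phi_{b+r}(\vec{y})}\). Since \(\phi_{b+r}(\vec{y})\) is again a generator of \(\Phi(\vec{y})\), this is a Euclidean inner product between elements of the two codes, and it vanishes by (iii). The key observation is that \(\Phi(\vec{y})\) is Frobenius-closed, \(\phi_r(\Phi(\vec{y})) = \Phi(\vec{y})\), so Euclidean orthogonality of Frobenius-closed codes already gives orthogonality in every Galois geometry.

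For (ii) \(\Rightarrow\) (i), note that \(\vec{x} = \phi_0(\vec{x}) \in \Phi(\vec{x})\) and \(\vec{y} \in \Phi(\vec{y})\). Total orthogonality of the codes therefore restricts to the vectors themselves.

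The self-orthogonal case is the specialisation \(\vec{y} = \vec{x}\). Statement (iii) then says \(\Phi(\vec{x}) \subseteq \Phi(\vec{x})^\perp\), and statement (ii) says \(\Phi(\vec{x}) \subseteq \Phi(\vec{x})^{\perp,\phi_i}\) for every \(i\), matching \cref{def:totalort}.

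No step looks hard. The one point needing care is the second-argument \(\FF_p\)-semilinearity, that is, \(\phi_r\) acting on coefficients of combinations of generators of \(\Phi(\vec{y})\). \cref{prop:torthogonality:bases} already handles this, so we only need to verify the generator-level computations and keep the index arithmetic correct modulo \(d\).
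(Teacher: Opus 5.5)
Your proof is correct and is essentially the paper's argument. It uses the same generator-level identity \(\inner{\phi_a(\vec{x})}{\phi_b(\vec{y})} = \phi_a(\inner{\vec{x}}{\vec{y}}_{\phi_{b-a}})\), the same Frobenius-closedness of the Moore codes to pass between Euclidean and total orthogonality, and the same restriction to \(\vec{x}, \vec{y}\) for the trivial converse. The differences are cosmetic: you route (i) \(\Rightarrow\) (ii) through (iii) instead of proving it directly for each \(\phi_r\), and you make explicit the appeal to \cref{prop:torthogonality:bases} for passing from generators to spans, which the paper leaves implicit.
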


\begin{proof}
  Suppose \(\inner{\vec{x}}{\vec{y}}_{\phi_i} = \inner{\vec{x}}{\phi_i(\vec{y})} = 0\) for all \(i \in \ZZ_d\). Then also
  \begin{align*}
    \inner{\phi_i(\vec{x})}{\phi_j(\vec{y})}_{\phi_r}
    &= \inner{\phi_i(\vec{x})}{\phi_{r+j}(\vec{y})}
    = \phi_i(\inner{\vec{x}}{\phi_{r+j-i}(\vec{y})}) \iftwocolumns{\\}{}
    \iftwocolumns{&}{}= \phi_i(0)
    = 0.
  \end{align*}
  The converse is trivial. The equivalence of total orthogonality and Euclidean orthogonality of \(\Phi(\vec{x})\) and \(\Phi(\vec{y})\) follows from conjugate symmetry and the fact that \(\Phi(\vec{x})\) and \(\Phi(\vec{y})\) are closed under action with any automorphism \(\phi_i\).
\end{proof}

\begin{remark}
  \cref{thm:orthogonal} can now be restated as
  \[
    \Pi(\vec{x}) \perp \Pi(\vec{y})
    \iff
    \Phi(\vec{x}) \perp_\star \Phi(\vec{y}).
  \]
\end{remark}

\section{How To Sample a Total Hull}\label{sec:totalhollow}

In this section, we give an algorithm to generate a \([n,k]\)-linear code over \(\FF_q\) that is \emph{totally \(h\)-hollow}, that is has a total hull of dimension \(h\), provided there is an algorithm to sample \(d h\)-hollow codes over \(\FF_p\). The cost we incur is non-trivial, however, since the rate \(r\) of codes our algorithm can output is bounded by \(\frac{1}{2 d}\) from above.

\subsection{Our Algorithm}

Our algorithm is based on the characterisation of total orthogonality in terms of coefficient codes from \cref{thm:orthogonal}. By the results of the previous section, it suffices to generate \(k\) linear codes \(\code_i\) over \(\FF_p\) of dimension \(d\) for \(i=1,\dots,k\) and lift them to vectors \(\vec{x}_i\) over \(\FF_q\) using \(\pi^{-1}\), such that \(\code_i \perp \code_j\) for all \(i=1,\dots,h\) and \(j=1,\dots,k\). If \(\code_i\) are jointly linearly independent, this will yield \(k\) vectors \(\vec{x}_i\) over \(\FF_q\) for \(i=1,\dots,k\) generating a code \(\code\), such that the first \(h\) of them generate a subspace \(\mathfrak{H} \leq \code\) that is \(\phi_r\)-orthogonal to \(\code\) for every \(r \in \ZZ_d\) and thus totally orthogonal to \(\code\), hence \(\dim(\mathfrak{H}) = h\).

\begin{definition}[Algorithm \(\SampleTotal\)]\label{def:algo:sampletotal}
  Let \(n,k,h,p,q,\chi \in \ZZ\) with \(0 \leq h \leq k \leq \frac{n}{2 d}\), \(\chi \in \set{\pm 1}\) (where \(\chi = 1\) when \(h=k\)), \(p\) odd prime and \(q = p^d\) for \(d \in \NN\). Consider the algorithm in \cref{fig:algo:sampletotal} that on input \(n,k,h,p,d,\chi\) outputs a generator matrix for a totally \(h\)-hollow \([n,k]\)-linear code over \(\FF_q\) with Euclidean square class \(\chi\) or aborts with \(\bot\).
\end{definition}

\begin{figure}[t]
  \centering
  \begin{pchstack}[boxed]
    \procedure[mode=text]{$\SampleTotal(n,k,h,p,d,\chi)$}{
      \pcassert \(d k \leq \frac{n}{2}\) \pcand \(h \leq k\);\;
      \pcfor \(i \coloneqq 1,\dots,k\) \pcdo \\
      \t \(\mat{T} \from \SampleLCD(n, k - h, p^d, \chi)\) \\
      \t \pcif \(\mat{T} \neq \bot\) \pcthen \pcbreak \pcelse \\
      \t\t \pcif \(i=k\)\pcthen \pcreturn \(\bot\) \pcelse \pccontinue \\
      \(\mat{A}_{(0)} \coloneqq [\pi(\mat{T}[1]) \Mid \cdots \Mid \pi(\mat{T}[k-h])]\) ;\;
      \(\mat{H} \coloneqq [\cdot]\) \\
      \pcfor \(i \coloneqq 1,\dots,d h\) \pcdo \\
      \t \(\mat{G} \coloneqq \Dual(\mat{A}_{(i-1)})\);\;
      \pcfor \(j \coloneqq 1,\dots,d k\) \pcdo \\
      \t\t \(\vec{y}_{i} \from \SampleSelfOrthogonal(\mat{G})\);\;
      \pcif \(\vec{y}_i = \bot\) \textbf{or} \(\vec{y}_{i} \in \hull(\code(\mat{A}_{(i-1)}))\)\pcthen \\
      \t\t\t \pcif \(j = d k\)\pcthen \pcreturn \(\bot\) \pcelse \pccontinue \\
      \t\t \(\mat{A}_{(i)} \coloneqq [\mat{A}_{(i-1)} \Mid \vec{y}_{i}^\transpose]\);\;
      \(\mat{H} \coloneqq [\mat{H} \Mid \vec{y}_{i}^\transpose]\);\;
      \pcbreak \\
      \(\mat{A} \coloneqq [\cdot]\);\;
      \(\mat{U} \sample \GL_{k}(\FF_q)\) \\
      \pcfor \(i \coloneqq 1,\dots,h\) \pcdo\, \(\mat{A}[i] \coloneqq \pi^{-1}(\mat{H}\llbracket d \cdot (i-1) + 1, d i \rrbracket)\) \\
      \(\mat{A}\llbracket h+1, k \rrbracket \coloneqq \mat{T}\);\;
      \pcreturn \(\mat{U} \mat{A}\)
    }
  \end{pchstack}
  \caption{The algorithm \(\SampleTotal\). The algorithm \(\SampleLCD\) is in \protect\cref{fig:algo:samplelcd}, the algorithm \(\SampleCode\) is in \protect\cref{fig:algo:samplecode}, the algorithm \(\Hull\) is in \protect\cref{fig:algo:hull}.}
  \label{fig:algo:sampletotal}
\end{figure}

\begin{theorem}\label{thm:algo:sampletotal}
  Assume the notation of \cref{def:algo:sampletotal}. The algorithm \(\SampleTotal\) runs in time \(\poly[n,k,q]\), it succeeds with probability \(1 - \negl[k]\) unless \(d k = \frac{n}{2}\), \(h=k\) and \(\chi_p(-1)^{d k} \neq 1\) in which case it always aborts, and samples totally \(h\)-hollow \([n,k]\)-linear codes over \(\FF_q\) with Euclidean square class \(\chi\).
\end{theorem}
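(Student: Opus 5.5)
The proof splits into three parts: efficiency, success probability, and correctness of the output, meaning total hull dimension \(h\), rank \(k\), and square class \(\chi\). The plan is to observe that the middle loop of \(\SampleTotal\) is exactly the hull-extension loop of \(\SampleCode\), run over the prime field \(\FF_p\). It starts from the \(\FF_p\)-matrix \(\mathbf{A}_{(0)}\in\FF_p^{d(k-h)\times n}\) and performs \(dh\) iterations. This lets us reuse the analysis of \cref{thm:algo:samplecode}. Coefficient projections and lifts are then used to transfer the resulting \(\FF_p\)-orthogonality to total orthogonality over \(\FF_q\) via \cref{thm:orthogonal}.

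\emph{Efficiency and success.} The running time follows from \cref{lem:algo:samplelcd,lem:algo:sso}, together with the fact that \(\pi\) and \(\pi^{-1}\) are linear-time maps. The success probability of the \(\SampleLCD\) loop is handled exactly as in \cref{thm:algo:samplecode}. For the self-orthogonal sampling loop, I would rerun the bound from \cref{thm:algo:samplecode}, using \cref{lem:code:variety:nsols} over \(\FF_p\). That bound only requires the dual of \(\code(\mathbf{A}_{(i-1)})\) to have dimension at least \(n-d(k-h)-(i-1)\), which holds whatever the rank of \(\mathbf{A}_{(0)}\). It gives a per-try failure probability of at most \(p^{dh-(n-dk)}\leq p^{-1}\) unless \(dk=\frac n2\) and \(h=k\). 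After \(dk\) repetitions and a union bound over the \(dh\) iterations, the failure probability is negligible.

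The exceptional case is \(dk=\frac n2\), \(h=k\). There \(\mathbf{T}\) and \(\mathbf{A}_{(0)}\) are empty, so the loop is literally \(\SampleCode(n,dk,dk,p,1)\). By \cref{thm:algo:samplecode} it then succeeds with overwhelming probability if \(\chi_p(-1)^{dk}=1\), and always aborts otherwise.

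\emph{Total orthogonality.} Write \(\mathbf{x}_i=\pi^{-1}(\mathbf{H}\llbracket d(i-1)+1,di\rrbracket)\) for \(i\le h\). Then \(\Pi(\mathbf{x}_i)\) is spanned by rows of \(\mathbf{H}\), and \(\Pi(\mathbf{T}[j])\) is spanned by rows of \(\mathbf{A}_{(0)}\). Each \(\mathbf{y}_i\) lies in the dual of \(\code(\mathbf{A}_{(i-1)})\) and is self-orthogonal. Hence all rows of \(\mathbf{H}\) are pairwise orthogonal, self-orthogonal, and orthogonal to every row of \(\mathbf{A}_{(0)}\). So \(\Pi(\mathbf{x}_i)\perp\Pi(\mathbf{x}_{i'})\) and \(\Pi(\mathbf{x}_i)\perp\Pi(\mathbf{T}[j])\). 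By \cref{thm:orthogonal}, each \(\mathbf{x}_i\) is totally orthogonal to every row of \(\mathbf{A}\). By \cref{prop:torthogonality:bases} this extends to spans, giving \(\Span_{\FF_q}(\mathbf{x}_1,\dots,\mathbf{x}_h)\subseteq\hull^\star(\code(\mathbf{A}))\).

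\emph{Rank \(k\) (expected main obstacle).} The delicate step is showing that \(\mathbf{A}\) has rank \(k\) over \(\FF_q\), since the algorithm only guarantees \(\FF_p\)-independence of the rows of \(\mathbf{H}\) modulo \(\code(\mathbf{A}_{(0)})\). I would use that multiplication by \(a\in\FF_q^\star\) acts on coefficient projections by an invertible matrix \(\mathbf{M}_a\in\GL_d(\FF_p)\), that is, \(\pi(a\mathbf{x})=\mathbf{M}_a\pi(\mathbf{x})\), with \(\mathbf{M}_0=0\). Suppose \(\sum_i a_i\mathbf{x}_i+\sum_j b_j\mathbf{T}[j]=0\). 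Applying \(\pi\) gives
\[
\sum_i \mathbf{M}_{a_i}\pi(\mathbf{x}_i)+\sum_j\mathbf{M}_{b_j}\pi(\mathbf{T}[j])=0 .
\]
The rows of \(\mathbf{H}\) are \(\FF_p\)-independent modulo \(\code(\mathbf{A}_{(0)})\), so \(\mathbf{M}_{a_i}=0\) and hence \(a_i=0\) for every \(i\). The remaining relation \(\sum_j b_j\mathbf{T}[j]=0\) forces \(b_j=0\), because \(\mathbf{T}\) has full rank by \cref{lem:algo:samplelcd}.

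\emph{Exact total hull and square class.} For the reverse inclusion, \(\hull^\star\subseteq\hull_{\phi_0}\), so it suffices to bound the Euclidean hull. With respect to the basis \([\mathbf{x}_1\Vert\cdots\Vert\mathbf{x}_h\Vert\mathbf{T}]\), the Euclidean Gram matrix is \(\diag(\mathbf{0}_h,\mathbf{T}\mathbf{T}^\transpose)\), and \(\mathbf{T}\mathbf{T}^\transpose\) is nonsingular. Hence the Euclidean hull is exactly \(\Span_{\FF_q}(\mathbf{x}_i)\), of dimension \(h\), and the total hull equals it as well. In addition, \(\mathbf{T}\) is an LCD component, so the Euclidean square class is \(\chi_{\mathbf{T}}=\chi\). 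Finally, the basis change \(\mathbf{U}\) preserves all of these properties.
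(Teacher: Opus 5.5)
Your correctness argument follows the paper's route: run the hull-extension loop of \(\SampleCode\) over \(\FF_p\) from \(\mat{A}_{(0)} = \pi(\mat{T})\), transfer orthogonality via \cref{thm:orthogonal} and \cref{prop:torthogonality:bases}, and cap the total hull by the Euclidean hull of the LCD code \(\code(\mat{T})\). Three of your steps are correct and more explicit than the paper: the \(\FF_q\)-rank argument via the matrices \(\mat{M}_a\), the Gram matrix \(\diag(\mat{0}_h, \mat{T}\mat{T}^\transpose)\), and the reduction of the case \(dk = \frac{n}{2}\), \(h = k\) to \(\SampleCode(n,dk,dk,p,1)\).

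The gap is in the failure bound for the self-orthogonal loop when \(h < k\). You claim the estimate from \cref{thm:algo:samplecode} only needs \(\dim \code(\mat{A}_{(i-1)})^\perp \geq n - d(k-h) - (i-1)\). It also uses \(\lvert \hull(\code(\mat{A}_{(i-1)})) \rvert = p^{i-1}\), which in \(\SampleCode\) came from \(\mat{A}_{(0)}\) being LCD. Here \(\mat{A}_{(0)}\) may have a non-trivial hull over \(\FF_p\), as the paper itself notes. Write \(r_0 = \rank(\mat{A}_{(0)})\) and \(h_0 = \dim \hull(\code(\mat{A}_{(0)}))\). Since every \(\vec{y}_j\) is orthogonal to all of \(\code(\mat{A}_{(i-1)})\), you get \(\hull(\code(\mat{A}_{(i-1)})) = \hull(\code(\mat{A}_{(0)})) \oplus \Span(\vec{y}_1,\dots,\vec{y}_{i-1})\). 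So \cref{lem:code:variety:nsols} only yields a per-try bound of \(p^{h_0 + r_0 + 2i - n}\). Your bound \(p^{dh-(n-dk)}\) is false in general. Take \(k - h = 1\), \(dk = \frac{n}{2}\), \(d \geq 3\), and \(\mat{T} = [\vec{t}]\) with \(\pi(\vec{t})\) of rank \(d\) and Gram matrix \(\diag(1,0,\dots,0)\). Then \(\inner{\vec{t}}{\vec{t}} = 1\) by \cref{prop:orthogonal}, so \(\mat{T}\) is a possible LCD output. At the last step the dual code has hull co-dimension \(3\), and the probability of drawing a hull vector is exactly \(p^{-2} > p^{-d}\).

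The conclusion survives with a corrected count. Since \(h_0 \leq r_0 \leq d(k-h)\) and \(i \leq dh\), the exponent is at most \(2dk - n \leq -1\) whenever \(dk < \frac{n}{2}\). In the boundary case \(dk = \frac{n}{2}\), \(h < k\), this crude estimate only gives \(p^{0}\). You need the extra fact \(h_0 < r_0\), i.e.\ that \(\code(\mat{A}_{(0)})\) is not self-orthogonal. If it were, \cref{lem:orthogonal} would make the rows of \(\mat{T}\) pairwise totally, hence Euclidean, orthogonal. That forces \(\mat{T}\mat{T}^\transpose = \mat{0}\), contradicting that \(\mat{T}\) is LCD of dimension \(k-h \geq 1\). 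Then \(h_0 + r_0 \leq 2d(k-h) - 1\) restores the per-try bound \(p^{-1}\). The same count shows that \(\code(\mat{A}_{(i-1)})^\perp\) always has hull co-dimension at least \(3\), so \(\SampleSelfOrthogonal\) stays in its regular regime. The paper's own proof merely defers to \cref{thm:algo:samplecode} at this point, so this is exactly where an argument is needed.
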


\begin{proof}
  The algorithm mimics \(\SampleCode\) but projects to \(\FF_p^n\) to sample total hull vectors. Indeed, we first sample a uniformly random (Euclidean) LCD code of dimension \(k-h\) and square class \(\chi\) over \(\FF_q\) identically to \(\SampleCode\). We then project \(\mat{T}\) to a \(d \cdot (k-h) \times n\) matrix \(\mat{T'}\) over \(\FF_p\) of rank \(t \leq d \cdot (k-h)\). We continue running the code of \(\SampleCode\) but over \(\FF_p\) and with a fixed \(\mat{A}_{(0)} \coloneqq \mat{T'}\) (though \(\mat{A}_{(0)}\) may have a hull over \(\FF_p\)). This produces \(d h\) hull vectors over \(\FF_p\), which we then lift \(d\) at a time using \(\pi^{-1}\) to \(\FF_q^n\) to obtain \(h\) total hull vectors.
  Denote by \(\mat{A}_i\) for \(i=1,\dots,h\) disjoint groups of \(d\) consecutive rows of \(\mat{H}\) and the same for \(i=h+1,\dots,k\) and \(\mat{A}_{(0)}\). Denote also \(\code_i = \code(\mat{A}_i)\). By construction, the codes \(\code_i\) are jointly linearly independent and we have \(\code_i \perp \code_j\) for \(i=1,\dots,h\) and \(j=1,\dots,k\), hence immediately from \cref{thm:orthogonal} and \cref{prop:torthogonality:bases} the algorithm \(\SampleTotal\) is correct. Note that it is possible that the projected matrix \(\mat{A}_{(0)}\) has rank smaller than \(d \cdot (k-h)\), or that it has a hull. However, the algorithm is still correct, since the only vectors that get lifted back to \(\FF_q^n\) are the generated hull vectors which are by \cref{thm:orthogonal} totally orthogonal to \(\mat{T}\) and each other, and linearly independent of \(\mat{T}\). Observe also that the output \(\mat{A}\) generates precisely a totally \(h\)-hollow code, not some larger total hull dimension, since \(\mat{T}\) spans an LCD code and so the Euclidean hull dimension cannot be larger than \(h\). It is possible for an individual Galois hull to have dimension larger than \(h\). Running time and success probability follow from \cref{thm:algo:samplecode}.
\end{proof}

\begin{remark}
  We make no claims in \cref{thm:algo:sampletotal} about the output distribution of \(\SampleTotal\) (conditioned on not aborting), except that its support is a subset of all totally \(h\)-hollow \([n,k]\)-linear codes over \(\FF_q\) with Euclidean square class \(\chi\), though we suspect it is close to the uniform distribution on the whole set.
\end{remark}

We would like to conjecture that the algorithm \(\SampleTotal\) outputs hard instances of \(\PCE\) (and its signed variant) in the following sense. Let \(n, k, h, q = p^d, \chi\) parametrised by \(\secpar\) satisfy the constraints of \cref{thm:algo:sampletotal} with \(h = \Theta(\secpar)\). Then we want that for any \(\ppt\) algorithm \(\adv\) we have that
\[
  \left\lvert \condprob{b = b'}{
    \begin{aligned}
      &\mat{A} \from \SampleTotal(\secpar) \\
      &\mat{B}_0 \from \SampleTotal(\secpar) \\
      &\mat{P} \sample \mathcal{P}_n;\;
      \mat{U} \sample \GL_k(\FF_q)\\
      &\mat{B}_1 \coloneqq \mat{U}\mat{A}\mat{P} \\
      &b \sample \bin;\;
      b' \from \adv^\secpar(\mat{A}, \mat{B})
    \end{aligned}
  } - \frac{1}{2} \right\rvert
  \leq \negl.
\]

However, the price we pay by decreasing the rate also has consequences for security, which we discuss in the next subsection.

\subsection{Limits of PCE over Extension Fields}\label{sec:totalhollow:limits}

As mentioned at the top of this section, the cost we pay is in terms of decreased rate.
That is, up to duality, \(\SampleTotal\) can only sample codes of rate bounded by \(\frac{1}{2 d}\) from above (in the best case where \(\SampleCode\) samples codes of rate \(\frac{1}{2}\)). A recent algorithm by Battagliola, Mora, and Santini~\cite{ITIT:BatMorSan26} shows that codes whose rate is at most inverse-quadratic cannot yield hard permutation equivalence instances, up to some heuristics. Thus the following is the case, showing the cost we pay is indeed non-trivial.

\begin{corollary}\label{cor:attack}
  Let \(p\) odd prime and \(n,k,h,d \in \NN\) with \(k \leq \frac{n}{2 d}\) and \(h \leq k\). Under the assumptions of the attack from~\cite{ITIT:BatMorSan26}, a necessary condition for codes sampled using \(\SampleTotal\) from \cref{def:algo:sampletotal} forming hard (search or distinguish) \(\PCE\) instances is that \(2d \cdot (4d - 1) < n\), that is roughly \(d < \sqrt{\frac{n}{8}}\).
\end{corollary}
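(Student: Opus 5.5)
The plan is to prove the corollary by contraposition. I will show that if \(n \leq 2d\cdot(4d-1)\), then every code output by \(\SampleTotal\) falls inside the parameter regime where the heuristic algorithm of~\cite{ITIT:BatMorSan26} runs in polynomial time. Such codes then cannot give hard search or distinguishing \(\PCE\) instances. So hardness forces \(2d\cdot(4d-1) < n\).

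First, I will fix the rate of the output codes. By \cref{def:algo:sampletotal} and \cref{thm:algo:sampletotal}, \(\SampleTotal\) returns \([n,k]\)-linear codes over \(\FF_{p^d}\) with \(k \leq \frac{n}{2d}\). Their rate \(R = \frac{k}{n}\) therefore satisfies \(R \leq \frac{1}{2d}\). Since \(\PCE\) is invariant under duality (\(\code \mapsto \code^\perp\) commutes with permutations), I may restrict to this low-rate side without loss of generality.

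Second, I will recast the applicability condition of~\cite{ITIT:BatMorSan26} as an inequality between \(n\) and \(\frac{1}{R}\). I take this condition, which I am assuming rather than having checked against~\cite{ITIT:BatMorSan26}, to read \(n \leq \frac{1}{R}\left(\frac{2}{R}-1\right)\). The idea is that the number of quadratic relations the attack linearises grows like \(\frac{1}{R}\cdot\frac{2}{R}\), while the number of unknowns grows linearly in \(n\). This reformulation is the main obstacle: the heuristic assumptions there (for example generic behaviour of the linearised system) must carry over to our codes, and the inequality must be written so that it is monotone in \(R\). I will check both carefully.

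Third, I will substitute the rate bound. The map \(R \mapsto \frac{1}{R}\left(\frac{2}{R}-1\right)\) is decreasing on \((0,1]\). Hence \(R \leq \frac{1}{2d}\) gives
\[
\frac{1}{R}\left(\frac{2}{R}-1\right) \geq 2d\cdot(4d-1).
\]
So whenever \(n \leq 2d\cdot(4d-1)\), the attack condition holds for every output of \(\SampleTotal\), and the attack applies. Taking the contrapositive gives the necessary condition \(2d\cdot(4d-1) < n\). Finally, \(2d\cdot(4d-1) = 8d^2 - 2d \approx 8d^2\), which yields the stated approximation \(d < \sqrt{\frac{n}{8}}\).
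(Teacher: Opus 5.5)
Your proposal is correct and is essentially the paper's argument in contrapositive form. Your assumed threshold \(n \leq \frac{1}{R}\left(\frac{2}{R}-1\right)\) with \(R = \frac{k}{n}\) is exactly \(\binom{k+1}{2} \leq n\), the condition the paper takes from~\cite{ITIT:BatMorSan26}, and your monotonicity step in \(R\) is the paper's combination of \(\binom{k+1}{2} > n\) with \(dk \leq \frac{n}{2}\) giving \(k+1 > 4d\) and hence \(2d \cdot (4d-1) < n\).
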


\begin{proof}
  Without loss of generality, we may assume \(h=k\), since the algorithm from~\cite{ITIT:BatMorSan26} depends on the rate of the hull. Because the rate of \(\SampleCode\) is bounded by \(\frac{1}{2}\) from above, we must have \(d k \leq \frac{n}{2}\). On the other hand, to thwart the algorithm from~\cite{ITIT:BatMorSan26}, it is required that \(\binom{k+1}{2} > n\) for the coefficient lifted code over \(\FF_q\). Combining these, we obtain that \(4 d k < k \cdot (k+1)\), which implies \(d < \frac{k + 1}{4}\). Multiplying by \(4d\) and rearranging the inequality we get that
  \[
    (2 d)^2 - d < d k \leq \frac{n}{2}
    \qquad
    \text{hence}
    \qquad
    2 d \cdot (4 d - 1) < n.
  \]
  The last inequality is implied by the inequality where we replace \(4 d - 1\) with \(4 d\), giving the estimate \(d < \sqrt{\frac{n}{8}}\).
\end{proof}

What is perhaps worse is that paying the cost of limiting the maximal rate to \(\frac{1}{2 d}\) seems actually necessary when we consider linearisation algorithms for code equivalence~\cite{AC:BCDSK24}. The authors of~\cite{AC:BCDSK24} prove, under assumptions they experimentally verify, that code equivalence can be solved in time \(\bigO{n^{2 \omega}}\) as soon as we have at least \(\floor{\frac{1}{r \cdot (1-r)}} + 1\) samples sharing the same isometry, where \(r = \frac{k}{n}\) is the code rate, and moreover it is even enough to have \(\floor{\frac{1}{2 r \cdot (1-r)}} + 1\) for large enough \(n\) and \(q\), and running time \(\bigO{n^{2 + 2 \omega}}\). The authors of~\cite{upke2} introduced a technique they call Frobenius sample amplification, using the \(d\) automorphisms of \(\FF_{p^d}\) to create \(d\) linearly independent (under a mild assumption they experimentally verify) \(\PCE\) samples from each \(\PCE\) sample over \(\FF_{p^d}\), leading to a linearisation algorithm for \(\PCE\) over extension fields.

\begin{corollary}\label{cor:rate}
  Under the Assumption 1 in~\cite{upke2}, a necessary condition for the hardness of \(\OSPCE\) of rate \(r\) over \(\FF_{p^d}\) for \(p\) prime is \(d < \floor{\frac{1}{r \cdot (1-r)}} + 1\).
\end{corollary}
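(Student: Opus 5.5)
We argue by contraposition: assuming \(d \geq \floor{\frac{1}{r \cdot (1-r)}} + 1\), we exhibit an efficient algorithm for \(\OSPCE\) of rate \(r\) over \(\FF_{p^d}\). The route is to manufacture, from one instance, enough samples sharing a single isometry and then feed them to the linearisation algorithm of~\cite{AC:BCDSK24}.

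\textbf{Step 1: the instance.} Let \((\mat{A}, \mat{B})\) be an instance with \(\mat{B} = \mat{U}\mat{A}\mat{P}\), where \(\mat{U} \in \GL_k(\FF_q)\) and \(\mat{P}\) is a (signed) permutation matrix. The entries of \(\mat{P}\) lie in \(\FF_p\), and this is the crucial point.

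\textbf{Step 2: Frobenius sample amplification~\cite{upke2}.} For each \(i \in \ZZ_d\), applying \(\phi_i\) entry-wise gives
\[
  \phi_i(\mat{B}) = \phi_i(\mat{U})\,\phi_i(\mat{A})\,\phi_i(\mat{P}) = \phi_i(\mat{U})\,\phi_i(\mat{A})\,\mat{P},
\]
because \(\phi_i\) is a ring automorphism fixing \(\FF_p\) pointwise. Each pair \((\phi_i(\mat{A}), \phi_i(\mat{B}))\) is therefore a \(\PCE\) sample of the same rate \(r\), with the same secret \(\mat{P}\) and the basis change \(\phi_i(\mat{U})\). This yields \(d\) samples sharing one isometry.

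\textbf{Step 3: independence.} The linearisation attack requires these samples to give linearly independent constraints on \(\mat{P}\). This is exactly the mild genericity statement of Assumption~1 in~\cite{upke2}, which we invoke rather than prove.

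\textbf{Step 4: linearisation.} Apply the result of~\cite{AC:BCDSK24}: with at least \(\floor{\frac{1}{r \cdot (1-r)}} + 1\) samples sharing the same isometry, under their experimentally verified assumptions, one recovers the isometry in time \(\bigO{n^{2\omega}}\). Since \(d\) meets this threshold, the instance is solved in polynomial time, which also solves the distinguishing variant.

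The main obstacle is Step 3. The linearised system of~\cite{AC:BCDSK24} is built over \(\FF_q\) from \(\Dual\) relations of each sample. We must make sure the Frobenius-conjugated samples do not collapse to redundant equations, which could happen for instance when \(\code(\mat{A})\) is Galois-invariant, that is, defined over a subfield. We delegate this entirely to Assumption~1; the remaining steps are direct.
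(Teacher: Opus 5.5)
Your proposal is correct and matches the paper's argument. Frobenius sample amplification from \cite{upke2} turns one instance into \(d\) instances \((\phi_i(\mat{A}),\phi_i(\mat{B}))\) that share the same (signed) permutation, because \(\phi_i\) fixes \(\FF_p\); Assumption~1 supplies the independence you delegate to it, and once \(d \geq \floor{\frac{1}{r \cdot (1-r)}} + 1\) the linearisation of \cite{AC:BCDSK24} solves the instance, so the contrapositive is exactly the stated necessary condition.
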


The Frobenius sample amplification technique from~\cite{upke2} suggests that the rate limitation of \(\frac{1}{2d}\) in \cref{thm:algo:sampletotal} is sufficient for security, but it may not be strictly necessary for large lengths \(n\). Recall the minimum number of required samples to solve \(\PCE\) using linearisation in~\cite{AC:BCDSK24} is \(t = \frac{1}{2 r \cdot (1-r)}\) for codes of rate \(r = \frac{k}{n} \leq \frac{1}{2}\), and that the rate increment for codes of length \(n\) is \(\frac{1}{n}\). The following lemma states that the \(d\) samples we obtain through Frobenius amplification do not break \(\PCE\) if we limit the rate with \(\frac{1}{2d}\) from above as in \cref{thm:algo:sampletotal}, but that we may increase the code dimension slightly and still stay below the sample bound (though it is unclear how to construct such codes algorithmically).

\begin{lemma}\label{lem:tight}
  Let \(t(r) = \frac{1}{2 r \cdot (1-r)}\) for \(r \in (0, 1)\) and \(d \geq 2\). Then \(d < t\left(\frac{1}{2 d}\right)\), and moreover, \(d < t\left(\frac{1}{2 d} + \frac{1}{n}\right)\) for \(n \in \NN\) as soon as \(n > 2 d \cdot (d-1) + 2 d \sqrt{d^2 - 2 d}\).
\end{lemma}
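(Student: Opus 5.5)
The plan is to reduce both claims to elementary inequalities about the quadratic \(r(1-r)\), using nothing beyond the explicit formula \(t(r) = \frac{1}{2 r \cdot (1-r)}\).

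\textbf{First claim.} I would substitute \(r = \frac{1}{2d}\) directly. This gives
\[
t\left(\tfrac{1}{2d}\right) = \frac{1}{\frac{1}{d}\left(1 - \frac{1}{2d}\right)} = \frac{2d^2}{2d-1}.
\]
Since \(2d^2 > d \cdot (2d-1)\), this is strictly larger than \(d\). The only hypothesis used is \(d \geq 1\).

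\textbf{Second claim.} For \(r \in (0,1)\) we have \(r \cdot (1-r) > 0\), so multiplying through shows
\[
d < t(r) \iff 2 d r \cdot (1-r) < 1 \iff r^2 - r + \tfrac{1}{2d} > 0.
\]
The quadratic \(r^2 - r + \frac{1}{2d}\) has real roots \(r_\pm = \frac{1 \pm \sqrt{1 - 2/d}}{2}\), and these are real exactly because \(d \geq 2\). The inequality therefore holds whenever \(r < r_-\). It also holds for \(r > r_+\), but since the statement only asks for a sufficient condition I will not use that branch.

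Now put \(r = \frac{1}{2d} + \frac{1}{n}\). The condition \(r < r_-\) becomes
\[
\frac{1}{n} < \frac{1 - \sqrt{1 - 2/d}}{2} - \frac{1}{2d} = \frac{d - 1 - \sqrt{d^2 - 2d}}{2d}.
\]
The right-hand side is positive, because \((d-1)^2 = d^2 - 2d + 1 > d^2 - 2d\). The condition is therefore equivalent to
\[
n > \frac{2d}{d - 1 - \sqrt{d^2 - 2d}}.
\]

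To put this in the stated form, I would multiply numerator and denominator by the conjugate \(d - 1 + \sqrt{d^2 - 2d}\). The new denominator is
\[
(d-1)^2 - (d^2 - 2d) = 1,
\]
so the bound becomes exactly \(n > 2d \cdot (d-1) + 2d\sqrt{d^2 - 2d}\), which is the stated threshold.

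\textbf{Where care is needed.} There is no real obstacle here; the argument is bookkeeping in three places.
\begin{itemize}
\item The direction of the inequalities must be tracked when multiplying through by \(r \cdot (1-r)\) and by \(d - 1 - \sqrt{d^2 - 2d}\); both are positive, so the directions are preserved.
\item We need \(r \in (0,1)\) so that \(t(r)\) is defined. This follows from \(r < r_- \leq \frac{1}{2}\).
\item The rationalisation step relies on the denominator identity \((d-1)^2 - (d^2 - 2d) = 1\).
\end{itemize}
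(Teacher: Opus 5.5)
Your proof is correct and takes essentially the paper's route: both reduce \(d < t(r)\) to positivity of \(r^2 - r + \frac{1}{2d}\), settle the first claim by direct evaluation, and get the threshold by solving a quadratic inequality. The paper substitutes \(r = \frac{1}{2d} + \frac{1}{n}\) and solves \(n^2 - 4d(d-1)n + 4d^2 > 0\) in \(n\), whereas you find the root \(r_-\) in \(r\) and rationalise; this gives the same bound, and you also check explicitly that \(r \in (0,1)\).
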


\begin{proof}
  Observe that \(t(r)\) is strictly convex with the only minimum at \(\frac{1}{2}\). Consider the equation \(d = t(r)\), that is \(2 d r \cdot (1-r) = 1\), equivalently \(f_d(r) = r^2 - r + \frac{1}{2 d} = 0\) on \(r \in (0,1)\). Then \(f_d\) is strictly convex and has the only minimum at \(r = \frac{1}{2}\). We have \(d < t(r)\) if and only if \(f_d(r) > 0\).

  For the first claim notice that \(f_d\left( \frac{1}{2 d} \right) = \frac{1}{4 d^2} > 0\). For the second claim, consider
  \begin{align*}
    f_d\left( \frac{1}{2 d} + \frac{1}{n} \right)
    = {\left( \frac{1}{2 d} + \frac{1}{n} \right)}^{2} - \frac{1}{n} &> 0 \\
    \frac{1}{4 d^2} + \frac{1}{d n} + \frac{1}{n^2} - \frac{1}{n} &> 0 \\
    n^2 + 4 d n + 4 d^2 - 4 d^2 n &> 0 \\
    n^2 - 4 d \cdot (d-1) \cdot n + 4 d^2 &> 0
  \end{align*}
  Solving this quadratic inequality for \(n\) we get
  \[
    n > 2 d \cdot (d-1) + 2 d \sqrt{(d-1)^2 - 1}
  \]
  and the conclusion follows. Illustrated in \cref{fig:tight}.
\end{proof}

\begin{figure}[t]
  \centering
  \scalebox{0.9}{
    \begin{tikzpicture}
      \begin{axis}[
          xlabel={\(r\)},
          ylabel={\(f_d(r)\)},
          xmin=0, xmax=0.25,
          ymin=0, ymax=0.20,
          xtick={0,0.05,...,0.25},
          ytick={0,0.05,...,0.20},
          xticklabel style={
            /pgf/number format/fixed,
            /pgf/number format/precision=2
          },
          yticklabel style={
            /pgf/number format/fixed,
            /pgf/number format/precision=2
          },
          scaled x ticks=false,
          scaled y ticks=false,
          xmajorgrids=true,
          ymajorgrids=true,
          grid style=dashed,
          legend pos=north east,
          legend style={font=\small},
        ]
        \addplot[name path=A,domain=0:1,solid] {x*x - x + 1/6};
        \addplot[name path=B,domain=0:1,dashed] {x*x - x + 1/10};
        \addplot[name path=C,domain=0:1,dotted] {x*x - x + 1/14};
        \path [name path=D] (\pgfkeysvalueof{/pgfplots/xmin},0) -- (\pgfkeysvalueof{/pgfplots/xmax},0);
        \addplot[gray!30] fill between [of=A and D, soft clip={domain=1/6:0.21},];
        \addplot[gray!30] fill between [of=B and D, soft clip={domain=1/10:0.21},];
        \addplot[gray!30] fill between [of=C and D, soft clip={domain=1/14:0.21},];
        \node (3) at (1/6,1/36) {\(\bullet\)};
        \node (5) at (1/10,1/100) {\(\bullet\)};
        \node (7) at (1/14,1/196) {\(\bullet\)};
        \legend{\(d=3\),\(d=5\),\(d=7\)};
      \end{axis}
    \end{tikzpicture}
    }
  \caption{The plot of \(f_d(r)\) for \(d \in \set{3,5,7}\)  on \(r \in (0,0.25)\) from the proof of \protect\cref{lem:tight}. The point \(\left( \frac{1}{2 d}, f\left( \frac{1}{2 d} \right) \right)\) is marked on each curve, the gray area is the slack of the rate condition.}
  \label{fig:tight}
\end{figure}

\begin{remark}
  Notice that the condition \(n > 2 d \cdot (4d - 1)\) from \cref{cor:attack} (if we view it as a condition on \(n\) for a fixed \(d\) instead of the other way around) implies the condition \(n > 2 d \cdot (d-1) + 2 d \sqrt{(d-1)^2 - 1}\) in \cref{lem:tight}, thus our reduction in \cref{thm:algo:sampletotal} is not tight for any \(n\) satisfying the necessary condition for security as outlined in \cref{cor:attack}.
\end{remark}

\subsection{Towards Galois Square Classes}\label{sec:total:arith}

It was recently shown in~\cite{localinv} that the hull dimension naturally arises as an invariant for \(\PCE\) when studying linear codes through arithmetic properties of their Gram matrices, along with a previously unrecognised invariant the author calls the \emph{square class} of a code that corresponds to a generalised notion of the discriminant of a quadratic form. It is shown that for a linear code \(\code\), the square class of the determinant of the Gram matrix of any generator of \(\code/\hull(\code)\) is invariant under (signed) permutation equivalence. In odd characteristic, this divides codes of every hull dimension \(h < k\) into two classes of roughly the same size~\cite{ITIT:LiShiLin25}, while self-orthogonal codes with \(h=k\) lie in a single class.

Given that the notion of the hull of a linear code generalises to Galois geometries, it seems natural that the same would be the case for square classes. In this section, we attempt to generalise square classes to Galois geometries, but leave the formal treatment of this subject analogous to~\cite{ITIT:LiShiLin25,localinv} for future work.

\begin{proposition}\label{prop:galois:sqclass}
  For every \(i \in \ZZ_d\) and \(a \in \FF_q^\star\) for \(q = p^d\) odd prime power, \(a \phi_i(a)\) is a quadratic residue.
\end{proposition}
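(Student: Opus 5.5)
The plan is to compute the quadratic character directly. Since \(\chi_q\) is a multiplicative homomorphism \(\FF_q^\star \to \set{\pm 1}\), we have \(\chi_q(a \phi_i(a)) = \chi_q(a)\,\chi_q(\phi_i(a))\). It therefore suffices to show that \(\chi_q(\phi_i(a)) = \chi_q(a)\), because then the product is \(\chi_q(a)^2 = 1\).

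\emph{First route.} Write \(\phi_i(a) = a^{p^i}\), so that
\[
  \chi_q(\phi_i(a)) = \chi_q(a)^{p^i}.
\]
Since \(q\) is odd, \(p\) is odd, hence \(p^i\) is odd and \(\chi_q(a)^{p^i} = \chi_q(a)\) because \(\chi_q(a) \in \set{\pm 1}\). Combining the two factors gives
\[
  \chi_q(a \phi_i(a)) = \chi_q(a)^{1 + p^i} = 1,
\]
as \(1 + p^i\) is even. So \(a\phi_i(a) = a^{1+p^i}\) is a quadratic residue.

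\emph{Alternative route.} Automorphisms preserve squares: if \(a = b^2\) then \(\phi_i(a) = \phi_i(b)^2\), and since \(\phi_i\) is a bijection of \(\FF_q^\star\) it also preserves non-squares. Hence \(\chi_q \circ \phi_i = \chi_q\). I would mention this as a remark, since it gives the same conclusion without exponent arithmetic.

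The case \(i = 0\) is the trivial statement that \(a^2\) is a square, and it is covered uniformly by the argument above. There is no real obstacle here. The only point needing care is that we use oddness of \(p\), which follows from \(q\) being an odd prime power. This is exactly why the statement is restricted to odd characteristic; in characteristic \(2\) every element is a square anyway.
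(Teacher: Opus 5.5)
Your proof is correct and is essentially the paper's: the paper just notes that \(a\phi_i(a) = a^{p^i+1}\) with \(p^i+1\) even, so it is the square of \(a^{(p^i+1)/2}\), which is the same parity fact you use when writing \(\chi_q(a)^{1+p^i} = 1\). Your alternative observation \(\chi_q \circ \phi_i = \chi_q\) is also valid, and it is the form in which the paper later applies this proposition when proving that \(\chi_{\code,\phi_i} = \chi_{\code,\phi_{-i}}\).
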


\begin{proof}
  We have \(a \phi_i(a) = a^{p^i + 1}\) with \(p^i + 1\) even.
\end{proof}

The following statement generalises the square class invariant~\cite{localinv} to Galois geometries for Galois LCD codes. As in the Euclidean case, a linear code \(\code\) is Galois LCD~\cite{FFA:LiuFanLiu18} with respect to \(\phi_i\) if \(\code \cap \code^{\perp,\phi_i} = \set{\vec{0}}\).

\begin{proposition}\label{prop:galois:sqclassmat}
  Let \(\phi_i \in \Gal(\FF_q/\FF_p)\) and let \(\mat{A},\mat{B} \in \FF_q^{k \times n}\) generate signed permutation equivalent Galois LCD codes with respect to \(\phi_i\). Then \(\chi_q(\det(\mat{A}\phi_i(\mat{A}^\transpose))) = \chi_q(\det(\mat{B}\phi_i(\mat{B}^\transpose)))\).
\end{proposition}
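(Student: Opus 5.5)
Write \(\mat{B} = \mat{U}\mat{A}\mat{D}\mat{P}\), where \(\mat{U} \in \GL_k(\FF_q)\), \(\mat{P}\) is a permutation matrix, and \(\mat{D}\) is diagonal with entries in \(\set{\pm 1}\). This is exactly what signed permutation equivalence means once we fix generator matrices. The plan is to compute the \(\phi_i\)-Gram matrix of \(\mat{B}\) directly, express it in terms of that of \(\mat{A}\), and then apply \(\chi_q\) to the determinants.

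First, \(\phi_i\) is a field automorphism applied entrywise, so it commutes with transposition and with matrix products. Since \(\mat{D}\) and \(\mat{P}\) have entries in \(\FF_p\), they are fixed by \(\phi_i\). This gives
\[
  \mat{B}\phi_i(\mat{B}^\transpose) = \mat{U}\mat{A}\mat{D}\mat{P}\mat{P}^\transpose\mat{D}\phi_i(\mat{A}^\transpose)\phi_i(\mat{U}^\transpose) = \mat{U}\bigl(\mat{A}\phi_i(\mat{A}^\transpose)\bigr)\phi_i(\mat{U})^\transpose,
\]
using \(\mat{P}\mat{P}^\transpose = \mat{I}_n\) and \(\mat{D}^2 = \mat{I}_n\).

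Next, take determinants. We have \(\det(\phi_i(\mat{U})^\transpose) = \phi_i(\det\mat{U})\), because the determinant is a polynomial with integer coefficients in the entries. Hence
\[
  \det(\mat{B}\phi_i(\mat{B}^\transpose)) = a\,\phi_i(a)\,\det(\mat{A}\phi_i(\mat{A}^\transpose)), \qquad a = \det\mat{U} \in \FF_q^\star .
\]

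Finally, \(\mat{A}\) and \(\mat{B}\) generate Galois LCD codes with respect to \(\phi_i\). I expect this to be the main point needing care: we must show that the \(\phi_i\)-Gram matrix \(\mat{A}\phi_i(\mat{A}^\transpose)\) is then nonsingular, so that \(\chi_q\) is actually defined on both determinants.

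To see this, suppose \(\vec{c}^\transpose \mat{A}\phi_i(\mat{A}^\transpose) = \vec{0}\) for some \(\vec{c}\), and put \(\vec{x} = \mat{A}^\transpose\vec{c} \in \code(\mat{A})\). The condition says \(\inner{\vec{x}}{\mat{A}[j]}_{\phi_i} = 0\) for every row \(\mat{A}[j]\). By \cref{prop:torthogonality:bases} this extends to \(\inner{\vec{x}}{\vec{y}}_{\phi_i} = 0\) for all \(\vec{y} \in \code(\mat{A})\).

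We are not quite done, because this puts \(\vec{x}\) in the left \(\phi_i\)-radical of \(\code\), whereas \(\code^{\perp,\phi_i}\) is defined via orthogonality in the second argument. To fix this, conjugate with \(\phi_{-i}\) using the identity \(\inner{\vec{x}}{\vec{y}}_{\phi_i} = \phi_i(\inner{\vec{y}}{\vec{x}}_{\phi_{-i}})\) from \cref{prop:galois:properties}. The left radical is then the hull with respect to \(\phi_{-i}\), and this has the same dimension as \(\hull_{\phi_i}(\code)\). Indeed, the left and right radicals of a nondegenerate-or-not sesquilinear form on the finite-dimensional space \(\code\) have equal dimension, since they are the left and right kernels of the square Gram matrix. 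Since \(\hull_{\phi_i}(\code) = \set{\vec{0}}\), we get \(\vec{x} = \vec{0}\), and so \(\vec{c} = \vec{0}\) by full rank of \(\mat{A}\). The same argument applies to \(\mat{B}\).

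Both determinants are therefore nonzero. Since \(\chi_q\) is multiplicative and \(\chi_q(a\phi_i(a)) = 1\) by \cref{prop:galois:sqclass}, the displayed identity gives \(\chi_q(\det(\mat{B}\phi_i(\mat{B}^\transpose))) = \chi_q(\det(\mat{A}\phi_i(\mat{A}^\transpose)))\).
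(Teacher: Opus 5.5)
Your proof is correct and is essentially the paper's argument: writing \(\mat{B} = \mat{U}\mat{A}\mat{O}\) with \(\mat{O}\) a signed permutation fixed by \(\phi_i\) gives \(\det(\mat{B}\phi_i(\mat{B}^\transpose)) = \det(\mat{U})\phi_i(\det(\mat{U}))\det(\mat{A}\phi_i(\mat{A}^\transpose))\), and the factor \(\det(\mat{U})\phi_i(\det(\mat{U}))\) is a square by \cref{prop:galois:sqclass}. Your extra step, showing that the \(\phi_i\)-Gram matrices of Galois LCD codes are nonsingular so that \(\chi_q\) is defined on both determinants, is sound and fills in a point the paper leaves implicit.
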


\begin{proof}
  There exist \(\mat{O} \in \Ort_n(\ZZ)\) and \(\mat{U} \in \GL_k(\FF_q)\) such that \(\mat{B} = \mat{U}\mat{A}\mat{O}\), hence it holds for Gram matrices that \(\det(\mat{B}\phi_i(\mat{B}^\transpose)) = \det(\mat{U})\phi_i(\det(\mat{U})) \det(\mat{A}\phi_i(\mat{A}^\transpose))\) since \(\phi_i(\mat{O}) = \mat{O}\). Then use \cref{prop:galois:sqclass} for \(a = \det(\mat{U})\).
\end{proof}

To define the invariant for non-LCD codes (where the determinant of the skewed Gram matrix is \(0\)), codes must be `dehulled', analogously to~\cite{localinv}. However, a Galois LCD component is in general not well-defined, because the inner product need not be symmetric. Indeed, an LCD component with respect to the Euclidean inner product has the property that it remains an LCD component even if we add to its rows linear combinations of hull vectors, which is not the case here. If \(\vec{h} \in \hull_{\phi_i}(\code)\) and \(\vec{t}\) is in the span of some matrix \(\mat{T}\) that modulo \(\hull_{\phi_i}(\code)\) generates \(\code/\hull_{\phi_i}(\code)\), we have that \(\inner{\vec{t} + \vec{h}}{\vec{t} + \vec{h}}_{\phi_i} = \inner{\vec{t}}{\vec{t}}_{\phi_i} + \inner{\vec{h}}{\vec{t}}_{\phi_i}\), and the additional term may flip the square class of \(\mat{T}\) as defined in \cref{prop:galois:sqclassmat}. However, this strategy does work for \(i \in \ZZ_d\) with \(\hull_{\phi_i}(\code) = \hull_{\phi_{-i}}(\code)\). We focus on this symmetric case for the rest of this section and leave investigating the general case for future work.

\begin{proposition}[Galois dehulling]\label{prop:galois:dehull}
  Let \(\mat{A}\) generate an \([n,k]\)-linear code \(\code\) over \(\FF_q\) for \(q=p^d\) odd prime power, and suppose that for some \(i \in \ZZ_d\) we have that \(\hull_{\phi_i}(\code(\mat{A})) = \hull_{\phi_{-i}}(\code(\mat{A}))\) and has dimension \(h < k\). The algorithm in \cref{fig:galois:dehull} runs in time \(\bigO{n^\omega}\) and \emph{dehulls} \(\mat{A}\) with respect to \(\phi_i\), that is, finds a matrix \(\mat{T} \in \FF_q^{(k-h) \times n}\) that generates a Galois LCD code such that \(\code = \hull_{\phi_i}(\code) \oplus \code(\mat{T})\).
\end{proposition}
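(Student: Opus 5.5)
We have not seen the algorithm in \cref{fig:galois:dehull}, so we assume it mirrors \(\Dehull\): compute \(\mat{H} \coloneqq \HullGalois(\mat{A}, i)\), extend a basis of \(\code(\mat{H})\) to a basis of \(\code(\mat{A})\), and return the complementary rows \(\mat{T}\). The proof then has three parts: the running time, the direct-sum decomposition, and the claim that \(\code(\mat{T})\) is Galois LCD with respect to \(\phi_i\).

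\emph{Running time.} Applying \(\phi_{-i}\) entrywise costs \(\bigO{kn\log p^i}\) field operations, which we fold into the cost bound. Computing kernels and duals, stacking, and completing a basis by row reduction each cost \(\bigO{n^\omega}\). Correctness of \(\HullGalois\) follows from \cref{def:galois:hull} together with \cref{prop:torthogonality:bases}. Indeed, \(\vec{y} \in \code^{\perp,\phi_i}\) if and only if \(\inner{\vec{a}}{\phi_i(\vec{y})} = 0\) for every row \(\vec{a}\) of \(\mat{A}\). Applying \(\phi_{-i}\), this is equivalent to \(\inner{\phi_{-i}(\vec{a})}{\vec{y}} = 0\), so \(\code^{\perp,\phi_i}\) is the right kernel of \(\phi_{-i}(\mat{A})\). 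Intersecting this kernel with \(\code = (\code^\perp)^\perp\) gives exactly the kernel of the stacked matrix \(\mat{M}\). The decomposition \(\code = \hull_{\phi_i}(\code) \oplus \code(\mat{T})\) holds by construction of the basis extension, and \(\mat{T}\) has \(k-h\) rows.

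\emph{Galois LCD.} This is the main step, and it is where the symmetry hypothesis enters. Suppose \(\vec{t} \in \code(\mat{T})\) satisfies \(\inner{\vec{s}}{\vec{t}}_{\phi_i} = 0\) for all \(\vec{s} \in \code(\mat{T})\). We want to show that \(\vec{t}\) is in fact \(\phi_i\)-orthogonal to all of \(\code\), i.e.\ \(\inner{\vec{s} + \vec{h}}{\vec{t}}_{\phi_i} = 0\) for every \(\vec{h} \in \hull_{\phi_i}(\code)\). The quantity \(\inner{\vec{h}}{\vec{t}}_{\phi_i}\) is not controlled directly, because \(\vec{h}\) lies in \(\code^{\perp,\phi_i}\) and so only \(\inner{\vec{x}}{\vec{h}}_{\phi_i}\) vanishes. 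Here we use \cref{prop:galois:properties}:
\[
\inner{\vec{h}}{\vec{t}}_{\phi_i} = \phi_i\bigl(\inner{\vec{t}}{\vec{h}}_{\phi_{-i}}\bigr).
\]
By hypothesis \(\vec{h} \in \hull_{\phi_{-i}}(\code) \subseteq \code^{\perp,\phi_{-i}}\), so the right-hand side is \(\phi_i(0) = 0\). Hence \(\vec{t} \in \code \cap \code^{\perp,\phi_i} = \hull_{\phi_i}(\code)\). Since also \(\vec{t} \in \code(\mat{T})\) and the sum is direct, \(\vec{t} = \vec{0}\).

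For the complete Galois LCD statement \(\code(\mat{T}) \cap \code(\mat{T})^{\perp,\phi_i} = \set{\vec{0}}\), we also need the vectors orthogonal on the other side. By \cref{def:totalort}, \(\phi_i\)-orthogonality of \(\vec{t}\) to \(\code(\mat{T})\) is equivalent to \(\phi_{-i}\)-orthogonality of \(\code(\mat{T})\) to \(\vec{t}\). Repeating the argument with \(i\) replaced by \(-i\) covers this case, since the hypothesis is symmetric in \(i\) and \(-i\). We expect this bookkeeping of left versus right orthogonality to be the only delicate point. We also note the edge case \(h=0\), where \(\mat{T} = \mat{A}\) and the claim is immediate.
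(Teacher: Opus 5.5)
Your proof is correct and follows the paper's argument. You take a vector of \(\code(\mat{T}) \cap \code(\mat{T})^{\perp,\phi_i}\), show it is \(\phi_i\)-orthogonal to all of \(\code = \hull_{\phi_i}(\code) \oplus \code(\mat{T})\), and conclude it lies in \(\hull_{\phi_i}(\code) \cap \code(\mat{T}) = \set{\vec{0}}\). You also make explicit, via \(\inner{\vec{h}}{\vec{t}}_{\phi_i} = \phi_i(\inner{\vec{t}}{\vec{h}}_{\phi_{-i}})\) and \(\hull_{\phi_i}(\code) = \hull_{\phi_{-i}}(\code)\), the step the paper only summarises as `left and right' orthogonality.

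Your final paragraph is unnecessary: your first argument already proves exactly \(\code(\mat{T}) \cap \code(\mat{T})^{\perp,\phi_i} = \set{\vec{0}}\). The extra \(\phi_{-i}\) statement is true but adds nothing.
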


\begin{proof}
  Using linear algebra, find representatives in \(\code\) for the basis of the quotient \((k-h)\)-dimensional \(\FF_q\)-vector space \(\code/\hull_{\phi_i}(\code)\) and store them as \(\mat{T}\). Every \(\vec{x} \in \hull_{\phi_i}(\code(\mat{T}))\) is (left and right) \(\phi_i\)-orthogonal to all of \(\code(\mat{T})\) and to \(\hull_{\phi_i}(\code)\), hence \(\vec{x}\) is (left and right) \(\phi_i\)-orthogonal to all of \(\code\). Thus \(\vec{x} \in \hull_{\phi_i}(\code)\), and \(\hull_{\phi_i}(\code) \cap \code(\mat{T}) = \set{\vec{0}}\) implies \(\vec{x} = \vec{0}\).
\end{proof}

\begin{figure}[t]
  \centering
  \begin{pchstack}[boxed]
    \procedure[mode=text]{$\DehullGalois(\mat{A}, i)$}{
      \(\mat{H} \coloneqq \HullGalois(\mat{A}, i)\);\;
      \(\mat{V} \coloneqq\) basis of \(\code(\mat{A}) / \code(\mat{H})\) \\
      \(\mat{T} \coloneqq\) lift \(\mat{V}\) to \(\code(\mat{A})\);\;
      \pcreturn \((\mat{T}, \mat{H})\)
    }
  \end{pchstack}
  \caption[The algorithm \(\DehullGalois\).]{The algorithm \(\DehullGalois\). We may also return a hull generator matrix for convenience. The algorithm \(\HullGalois\) is in \protect\cref{fig:galois:hull}.}
  \label{fig:galois:dehull}
\end{figure}

\begin{definition}[Galois square class]\label{def:galois:sqclass}
  Let \(\code\) be an \([n,k]\)-linear code over \(\FF_q\) for \(q = p^d\) odd prime power, and suppose that for some \(i \in \ZZ_d\) we have that \(\hull_{\phi_i}(\code) = \hull_{\phi_{-i}}(\code)\) and has dimension \(h < k\). The \emph{Galois square class} of \(\code\) with respect to \(\phi_i\) is \(\chi_{\code,\phi_i} = \chi_q(\det(\mat{T}\phi_i(\mat{T}^\transpose))) \in \set{\pm 1}\) for any Galois LCD component \(\mat{T}\) with respect to \(\phi_i\). When \(h = k\), all codes have square class \(1\).
\end{definition}

\begin{proposition}\label{prop:galois:sqclass:comput}
  Assume the notation of \cref{def:galois:sqclass}. There is a deterministic algorithm that, given any generator matrix \(\mat{A}\) for \(\code\) and admissible \(i \in \ZZ_d\), computes \(\chi_{\code,\phi_i}\) in time \(\bigO{n^\omega + \log^2(q)}\).
\end{proposition}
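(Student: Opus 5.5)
The algorithm is the composition of \(\DehullGalois\) from \cref{fig:galois:dehull} with one determinant evaluation and one quadratic-character evaluation. On input \(\mat{A}\) and an admissible \(i\), meaning \(\hull_{\phi_i}(\code) = \hull_{\phi_{-i}}(\code)\) with dimension \(h\), it proceeds as follows:
\begin{enumerate}[label=(\arabic*)]
  \item Run \(\HullGalois(\mat{A}, i)\) to obtain \(\mat{H}\) and set \(h = \rank(\mat{H})\). If \(h = k\), output \(1\) by the convention in \cref{def:galois:sqclass}.
  \item Otherwise compute \(\mat{T}\) with \((\mat{T}, \mat{H}) \coloneqq \DehullGalois(\mat{A}, i)\). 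By \cref{prop:galois:dehull}, \(\mat{T}\) is a Galois LCD component with respect to \(\phi_i\).
  \item Compute the skewed Gram matrix \(\mat{G} = \mat{T}\phi_i(\mat{T}^\transpose) \in \FF_q^{(k-h)\times(k-h)}\), its determinant \(\delta\), and output \(\chi_q(\delta)\).
\end{enumerate}
Each step is deterministic once we fix that linear algebra returns row-reduced bases, as stated in \cref{fig:algo:dual}.

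For correctness, two things must be checked: that \(\delta \neq 0\), and that the output does not depend on which Galois LCD component \(\mat{T}\) is produced.

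\textbf{Non-vanishing.} Suppose \(\mat{G}\vec{u} = \vec{0}\) for some \(\vec{u}\). Then \(\phi_i(\mat{T}^\transpose)\vec{u}\) is \(\phi_i\)-orthogonal to \(\code(\mat{T})\), and one converts this into an element of \(\code(\mat{T})\) lying in the relevant Galois dual. Since \(\code(\mat{T})\) is Galois LCD with respect to \(\phi_i\), that element is zero, and because \(\mat{T}\) is full-rank this forces \(\vec{u} = \vec{0}\). Some care with left versus right kernels is needed, using \(\phi_i\)-sesquilinearity and \cref{prop:galois:properties}, but it is routine.

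\textbf{Independence of the choice of component.} This is the main obstacle. Two Galois LCD components \(\mat{T}, \mat{T}'\) of \(\code\) satisfy \(\mat{T}' = \mat{U}\mat{T} + \mat{N}\mat{H}\) with \(\mat{U} \in \GL_{k-h}(\FF_q)\), since both span \(\code\) modulo \(\hull_{\phi_i}(\code)\). I will expand \(\mat{T}'\phi_i(\mat{T}'^\transpose)\) into four terms:
\begin{itemize}
  \item The two mixed terms \(\mat{T}\phi_i(\mat{H}^\transpose)\) and \(\mat{H}\phi_i(\mat{T}^\transpose)\) vanish. The first does because the rows of \(\mat{H}\) lie in \(\code^{\perp,\phi_i}\). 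The second does because, by the symmetry assumption \(\hull_{\phi_i} = \hull_{\phi_{-i}}\) together with \(\inner{\vec{x}}{\vec{y}}_{\phi_i} = \phi_i(\inner{\vec{y}}{\vec{x}}_{\phi_{-i}})\), hull vectors are also left-orthogonal to \(\code\).
  \item The term \(\mat{H}\phi_i(\mat{H}^\transpose)\) vanishes as well.
\end{itemize}
Hence \(\det(\mat{T}'\phi_i(\mat{T}'^\transpose)) = \det(\mat{U})\,\phi_i(\det(\mat{U}))\,\det(\mat{T}\phi_i(\mat{T}^\transpose))\). By \cref{prop:galois:sqclass} the factor \(\det(\mat{U})\phi_i(\det(\mat{U}))\) is a square, so \(\chi_q\) is unchanged, and the same argument covers invariance under the choice of generator matrix \(\mat{A}\). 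This is precisely where admissibility of \(i\) is used; without it the cross term \(\inner{\vec{h}}{\vec{t}}_{\phi_i}\) flagged before \cref{prop:galois:dehull} would survive.

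\textbf{Running time.} Computing \(\phi_{-i}\) or \(\phi_i\) of a matrix costs \(\bigO{kn}\) field exponentiations, which is dominated by \(n^\omega\) at fixed \(q\), as in the paper's accounting. The kernels, the quotient lift and the determinant each cost \(\bigO{n^\omega}\). The quadratic character \(\chi_q(\delta)\) costs \(\bigO{\log^2(q)}\) by~\cite{Cohen1993}, as in \cref{lem:algo:samplelcd}. The total is \(\bigO{n^\omega + \log^2(q)}\).
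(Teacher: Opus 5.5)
Your proposal is correct and follows the paper's proof. It outputs \(1\) in the \(\phi_i\)-self-orthogonal case; otherwise it runs \(\DehullGalois\), takes the determinant of \(\mat{T}\phi_i(\mat{T}^\transpose)\) and applies \(\chi_q\), with the same \(\bigO{n^\omega + \log^2(q)}\) accounting. You additionally check that the determinant is non-zero and that the result does not depend on the chosen component, via \(\mat{T}' = \mat{U}\mat{T} + \mat{N}\mat{H}\), the symmetry assumption removing the cross terms, and \(\det(\mat{U})\phi_i(\det(\mat{U}))\) being a square; this is sound and makes explicit a well-definedness that the paper leaves to the definition of the Galois square class and the discussion before the Galois dehulling proposition.
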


\begin{proof}
  If \(\code\) is \(\phi_i\)-self-orthogonal, then output \(\chi_{\code,\phi_i} = 1\). Otherwise, dehulling and computing the determinant each take \(\bigO{n^\omega}\) by \cref{prop:galois:dehull}, and applying \(\chi_q\) takes \(\bigO{\log^2(q)}\) by~\cite[1.4.12]{Cohen1993}.
\end{proof}

\begin{corollary}\label{cor:invariant:galois}
  Assume the notation of \cref{def:galois:sqclass}.
  The function \(\chi^{\phi_i} \colon \FF_q^{k \times n} \to \set{\pm 1}\) that maps \(\mat{A} \mapsto \chi_{\code(\mat{A}),\phi_i}\) for admissible \(i \in \ZZ_d\) is an efficiently computable invariant for (signed) permutation equivalence.
\end{corollary}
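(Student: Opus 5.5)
The corollary has two parts: that \(\chi^{\phi_i}\) is a well-defined invariant under (signed) permutation equivalence, and that it is efficiently computable. I will prove them in that order.

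\emph{Computability.} This is exactly \cref{prop:galois:sqclass:comput}. That result gives a deterministic \(\bigO{n^\omega + \log^2(q)}\) algorithm which dehulls via \cref{prop:galois:dehull} and then applies \(\chi_q\) to \(\det(\mat{T}\phi_i(\mat{T}^\transpose))\).

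\emph{Well-definedness for a fixed code.} Fix \(\code\) with \(\hull_{\phi_i}(\code) = \hull_{\phi_{-i}}(\code)\) of dimension \(h<k\); when \(h=k\) the value is \(1\) by definition. I will show the value does not depend on the choice of Galois LCD component or generator matrix. Take two components \(\mat{T}\) and \(\mat{T}'\). Both span complements of \(\hull_{\phi_i}(\code)\) in \(\code\), so \(\mat{T}' = \mat{U}\mat{T} + \mat{K}\) with \(\mat{U} \in \GL_{k-h}(\FF_q)\) and the rows of \(\mat{K}\) in the hull. Expanding \(\mat{T}'\phi_i(\mat{T}'^\transpose)\), every cross term involving \(\mat{K}\) vanishes:
\begin{itemize}[label={}]
\item terms \(\inner{\vec{t}}{\vec{h}}_{\phi_i}\) vanish because \(\vec{h} \in \hull_{\phi_i}(\code)\);
\item terms \(\inner{\vec{h}}{\vec{t}}_{\phi_i} = \phi_i(\inner{\vec{t}}{\vec{h}}_{\phi_{-i}})\) vanish because \(\vec{h} \in \hull_{\phi_{-i}}(\code)\), using \cref{prop:galois:properties} and the symmetry hypothesis;
\item terms \(\inner{\vec{h}}{\vec{h}'}_{\phi_i}\) vanish for the same reason.
\end{itemize}
Hence \(\det(\mat{T}'\phi_i(\mat{T}'^\transpose)) = \det(\mat{U})\phi_i(\det(\mat{U}))\det(\mat{T}\phi_i(\mat{T}^\transpose))\). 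By \cref{prop:galois:sqclass}, the factor \(\det(\mat{U})\phi_i(\det(\mat{U}))\) is a nonzero square, so \(\chi_q\) is unchanged.

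\emph{Invariance under equivalence.} Suppose \(\code(\mat{B}) = \code(\mat{A})\mat{O}\) with \(\mat{O}\) a signed permutation. Since \(\mat{O}\) has entries in \(\set{0,\pm1}\), we have \(\phi_i(\mat{O}) = \mat{O}\) and \(\mat{O}\mat{O}^\transpose = \mat{I}\). It follows that \(\inner{\vec{x}\mat{O}}{\vec{y}\mat{O}}_{\phi_i} = \inner{\vec{x}}{\vec{y}}_{\phi_i}\) for all \(i\). Consequently \(\mat{O}\) maps \(\hull_{\phi_{\pm i}}(\code(\mat{A}))\) onto \(\hull_{\phi_{\pm i}}(\code(\mat{B}))\). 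This preserves the hull dimension and the admissibility condition \(\hull_{\phi_i} = \hull_{\phi_{-i}}\), and it sends a Galois LCD component \(\mat{T}\) of \(\code(\mat{A})\) to a component \(\mat{T}\mat{O}\) of \(\code(\mat{B})\). Since \(\mat{T}\mat{O}\phi_i((\mat{T}\mat{O})^\transpose) = \mat{T}\phi_i(\mat{T}^\transpose)\), the determinants agree. Combined with well-definedness, this equivalently reduces the claim to \cref{prop:galois:sqclassmat} applied to the LCD components.

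\emph{Main obstacle.} The delicate step is the well-definedness argument, specifically killing the cross terms \(\inner{\vec{h}}{\vec{t}}_{\phi_i}\). This is exactly where the hypothesis \(\hull_{\phi_i} = \hull_{\phi_{-i}}\) is used, and why the asymmetric case was excluded.
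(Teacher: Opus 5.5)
Your proposal is correct and takes the paper's route. The paper deduces invariance from \cref{prop:galois:sqclassmat,prop:galois:dehull} together with \(\phi_i(\mat{O}) = \mat{O}\) for signed permutations, and efficiency from \cref{prop:galois:sqclass:comput}. You additionally check explicitly that the value does not depend on the chosen Galois LCD component, killing the cross terms \(\inner{\vec{h}}{\vec{t}}_{\phi_i}\) via \(\hull_{\phi_i}(\code) = \hull_{\phi_{-i}}(\code)\); this usefully spells out what the paper only asserts informally in the discussion preceding \cref{prop:galois:dehull}.
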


\begin{proof}
  Invariance is immediate from \cref{prop:galois:sqclassmat,prop:galois:dehull} and \(\phi(\mat{O}) = \mat{O}\) for \(\mat{O} \in \Ort_n(\ZZ)\) a signed permutation, efficiency from \cref{prop:galois:sqclass:comput}.
\end{proof}

Since we have assumed \(\hull_{\phi_i}(\code) = \hull_{\phi_{-i}}(\code)\), it is easy to see that Galois square classes share a similar symmetry.

\begin{proposition}\label{prop:galois:sqclass:sym}
  For every linear code \(\code\) over \(\FF_q\) for \(q = p^d\) odd prime power we have that if \(i \in \ZZ_d\) is such that \(\hull_{\phi_i}(\code) = \hull_{\phi_{-i}}(\code)\), then \(\chi_{\code,\phi_i} = \chi_{\code,\phi_{-i}}\).
\end{proposition}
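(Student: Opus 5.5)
We need to show \(\chi_{\code,\phi_i} = \chi_{\code,\phi_{-i}}\) when \(\hull_{\phi_i}(\code) = \hull_{\phi_{-i}}(\code)\). The plan is to run both square classes through a single LCD component and relate the two skewed Gram matrices by a Galois conjugation and a transpose.

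First, if \(h = k\) both square classes are \(1\) by \cref{def:galois:sqclass}, so assume \(h < k\). Write \(\mathfrak{H}\) for the common hull \(\hull_{\phi_i}(\code) = \hull_{\phi_{-i}}(\code)\). By \cref{prop:galois:dehull}, choose \(\mat{T}\) whose rows represent a basis of \(\code/\mathfrak{H}\). Since the dehulling construction depends only on the hull subspace, the same \(\mat{T}\) is a Galois LCD component for both \(\phi_i\) and \(\phi_{-i}\): the proof of \cref{prop:galois:dehull} only used that \(\mathfrak{H}\) is left and right orthogonal, and this is symmetric in \(i \leftrightarrow -i\). Together with well-definedness of \(\chi_{\code,\phi_{\pm i}}\), which is asserted in \cref{def:galois:sqclass}, this lets us compute both square classes with this single \(\mat{T}\).

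Next, relate the two Gram matrices. Put \(\mat{G}_i = \mat{T}\phi_i(\mat{T}^\transpose)\) and \(\mat{G}_{-i} = \mat{T}\phi_{-i}(\mat{T}^\transpose)\). Applying \(\phi_i\) entrywise gives
\[
\phi_i(\mat{G}_{-i}) = \phi_i(\mat{T})\mat{T}^\transpose = (\mat{T}\phi_i(\mat{T}^\transpose))^\transpose = \mat{G}_i^\transpose .
\]
This is the matrix form of \(\inner{\vec{x}}{\vec{y}}_{\phi_i} = \phi_i(\inner{\vec{y}}{\vec{x}}_{\phi_{-i}})\) from \cref{prop:galois:properties}. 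Taking determinants, and using that \(\phi_i\) is a ring automorphism commuting with \(\det\), we get \(\det(\mat{G}_i) = \phi_i(\det(\mat{G}_{-i}))\).

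Finally, \(\chi_q\) is invariant under Galois automorphisms: \(\chi_q(\phi_i(a)) = \chi_q(a^{p^i}) = \chi_q(a)^{p^i} = \chi_q(a)\), because \(p^i\) is odd and \(\chi_q(a) \in \set{\pm 1}\). Hence \(\chi_{\code,\phi_i} = \chi_{\code,\phi_{-i}}\).

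The only delicate step is justifying that one \(\mat{T}\) serves both forms, i.e.\ that it is Galois LCD for \(\phi_{-i}\) as well. The argument is the proof of \cref{prop:galois:dehull} applied with \(-i\) in place of \(i\), which is legitimate because the hypothesis \(\hull_{\phi_i}(\code) = \hull_{\phi_{-i}}(\code)\) is symmetric in \(i \leftrightarrow -i\). Alternatively, nondegeneracy for \(\phi_{-i}\) follows directly from the displayed relation: \(\det(\mat{G}_{-i}) = \phi_{-i}(\det(\mat{G}_i)) \neq 0\).
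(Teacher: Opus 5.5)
Your proof is correct and follows the paper's argument. Both fix a single Galois LCD component, which the common hull makes valid for both \(\phi_i\) and \(\phi_{-i}\); both then use transposition to get \(\det(\mat{T}\phi_i(\mat{T}^\transpose)) = \phi_i(\det(\mat{T}\phi_{-i}(\mat{T}^\transpose)))\) and conclude by Galois invariance of \(\chi_q\). The differences are cosmetic: you derive that invariance from \(p^i\) being odd rather than from \cref{prop:galois:sqclass}, and you are more explicit than the paper about the case \(h = k\) and about why the same \(\mat{T}\) serves both forms.
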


\begin{proof}
  Without loss of generality, we may assume \(\code(\mat{A})\) is Galois LCD with respect to \(\phi_i\), since a Galois LCD component is well-defined (\cref{prop:galois:dehull}). Then \(\det(\mat{A}\phi_i(\mat{A}^\transpose)) = \phi_i(\det(\mat{A}\phi_{-i}(\mat{A}^\transpose)))\). By \cref{prop:galois:sqclass}, \(a \in \FF_q\) is a quadratic residue if and only if \(\phi_i(a)\) is a quadratic residue, hence \(\chi_q(\det(\mat{A}\phi_i(\mat{A}^\transpose))) = \chi_q(\det(\mat{A}\phi_{-i}(\mat{A}^\transpose)))\).
\end{proof}

In particular, the Galois square class should work in the Hermitian case of \(d\) even and \(i = \frac{d}{2}\); however, the invariant is constant in this case.

\begin{proposition}\label{prop:hermitian:sqclass}
  Assume the notation of \cref{def:galois:sqclass}, and additionally suppose \(d\) is even and fix \(i = \frac{d}{2}\). Then for every linear code \(\code\) we have \(\chi_{\code,\phi_i} = 1\).
\end{proposition}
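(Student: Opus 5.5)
The plan is to show that the relevant determinant always lies in the subfield fixed by \(\phi_i\), and that every non-zero element of that subfield is a square in \(\FF_q\).

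First, I would dispose of the degenerate case. When \(h = k\), \cref{def:galois:sqclass} sets the square class to \(1\), so there is nothing to prove. Next I note that the hypothesis \(\hull_{\phi_i}(\code) = \hull_{\phi_{-i}}(\code)\) holds automatically here. Since \(\phi\) has order \(d\) and \(i = \frac{d}{2}\), we get \(\phi_{-i} = \phi_{d-i} = \phi_i\), so every code is admissible. By \cref{prop:galois:dehull} we may then take a Galois LCD component \(\mat{T} \in \FF_q^{(k-h)\times n}\) with respect to \(\phi_i\). Put \(\mat{G} = \mat{T}\phi_i(\mat{T}^\transpose)\). Because \(\code(\mat{T})\) is \(\phi_i\)-LCD, \(\mat{G}\) is non-singular, so \(\det(\mat{G}) \in \FF_q^\star\).

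The key step is that \(\mat{G}\) is Hermitian. Using \(\phi_i \circ \phi_i = \phi_{2i} = \phi_0 = \mathrm{id}\), I compute
\[
  \phi_i(\mat{G}) = \phi_i(\mat{T})\,\phi_{2i}(\mat{T})^\transpose = \phi_i(\mat{T})\,\mat{T}^\transpose = \mat{G}^\transpose .
\]
Since \(\phi_i\) is a field automorphism, it commutes with taking determinants. This gives \(\phi_i(\det \mat{G}) = \det(\phi_i(\mat{G})) = \det(\mat{G}^\transpose) = \det(\mat{G})\). Hence \(\det(\mat{G})\) lies in the fixed field of \(\phi_i\), which is \(\FF_Q\) with \(Q = p^{d/2}\) and \(q = Q^2\).

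Finally I show that every \(a \in \FF_Q^\star\) is a square in \(\FF_q\). Indeed,
\[
  a^{\frac{q-1}{2}} = a^{\frac{(Q-1)(Q+1)}{2}} = \left(a^{Q-1}\right)^{\frac{Q+1}{2}} = 1,
\]
which uses that \(Q\) is odd, so \(\frac{Q+1}{2}\) is an integer, and that \(a^{Q-1} = 1\). Therefore \(\chi_q(\det \mat{G}) = 1\), that is \(\chi_{\code,\phi_i} = 1\).

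I expect no real obstacle. The only points needing care are the identity \(\phi_i(\mat{G}) = \mat{G}^\transpose\), which requires \(\phi_{2i} = \mathrm{id}\), and the observation that the Galois LCD component is well defined in this symmetric case, which is \cref{prop:galois:dehull}.
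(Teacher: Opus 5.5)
Your proof is correct and follows the same route as the paper. Both show that the skewed Gram matrix satisfies \(\phi_i(\mat{G}) = \mat{G}^\transpose\), so its determinant lies in the fixed field \(\FF_{p^{d/2}}\), and every non-zero element of that field is a square in \(\FF_{p^d}\). Your version is slightly more careful than the paper's, which works directly with a generator matrix: you apply the argument to a Galois LCD component, so the determinant is non-zero, and you note that the admissibility hypothesis holds automatically because \(\phi_{-i} = \phi_i\).
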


\begin{proof}
  Let \(\mat{A}\) generate \(\code\) and let \(\mat{M} = \mat{A} \phi_i(\mat{A}^\transpose)\). Since the form is Hermitian, we have \(\mat{M} = \phi_i(\mat{M}^\transpose)\). Hence \(\det(\mat{M}) = \det(\mat{M}^\transpose) = \det(\phi_i(\mat{M})) = \phi_i(\det(\mat{M}))\), so \(\det(\mat{M})\) lies in the subfield \(\FF_{p^\frac{d}{2}}\) fixed by \(\phi_i\). Now observe that every element of \(\FF_{p^\frac{d}{2}}\) is a quadratic residue in \(\FF_{p^d}\); if \(a \in \FF_{p^\frac{d}{2}}\) is embedded into \(\FF_{p^d}\), then \(\chi_{p^d}(a) = a^\frac{p^d-1}{2} = a^{\left(p^\frac{d}{2}-1\right)\cdot\frac{p^\frac{d}{2}+1}{2}} = 1\).
\end{proof}

Using either \(\SampleCode\) in \cref{def:algo:samplecode} or \(\SampleTotal\) in \cref{def:algo:sampletotal}, it is easy to generate linear codes of a prescribed (total) hull dimension whose square classes match, yet some of their Galois square classes differ, meaning it is easy to decide that such codes are \emph{not} (signed) permutation equivalent. The advantage we wished for above when proposing \(\SampleTotal\), which like \(\SampleCode\) only prescribes the Euclidean square class, cannot be negligible, though this does not affect permutation equivalence over finite extension field of degree \(2\) by \cref{prop:hermitian:sqclass}. Investigating the exact behaviour of Galois square classes in relation to (signed) permutation equivalence analogously to~\cite{localinv} would require adapting the mass formulae in~\cite{ITIT:LiShiLin25} to Galois geometries, which we consider future work.

\section{Implementation}\label{sec:experiments}

Our SageMath~\cite{sagemath} implementation of the algorithms in \cref{sec:hollow:sample} is provided in \ifsupplementary{\textattachfile{hollow.py}{\texttt{hollow.py}}}{\texttt{hollow.py} as supplementary material}. We also provide in \ifsupplementary{\textattachfile{galois.py}{\texttt{galois.py}}}{\texttt{galois.py}} a proof-of-concept implementation of the sampler from \cref{fig:algo:sampletotal}, along with all the auxiliary functions for dealing with different geometries. Both depend on \ifsupplementary{\textattachfile{lincodes.py}{\texttt{lincodes.py}}}{\texttt{lincodes.py}} and \ifsupplementary{\textattachfile{util.py}{\texttt{util.py}}}{\texttt{util.py}}, which provide functions for dealing with linear codes (including dehulling and computing the type) and general utility functions, respectively.

\bibliographystyle{alpha-doi} 
\IfFileExists{cited.bib}{
\bibliography{cited}
}{
\bibliography{abbrev0,crypto,local,sw,extra}
}

\appendix

\section*{Acknowledgment}

We thank Paolo Santini for pointing us towards~\cite[Prop.~8]{ISIT:BarOtmSae19}.

\end{document}
